\newif\iflong   % For longer versions of the paper
\newif\ifmorecites
\newif\ifanonymous
\newif\ifcomments   % For our comments
\providecommand{\keywords}[1]{\textbf{Keywords:} #1.}
\appto\UrlBreaks{\do\-}% Breaks long links
\ifcomments\setlength {\marginparwidth }{2cm}\fi  % Increase margins for notes
\newtheorem{definition}{Definition}[section]
\newtheorem{theorem}[definition]{Theorem}
\newtheorem{claim}[definition]{Claim}
\newtheorem{fact}[definition]{Fact}
\newtheorem{lemma}[definition]{Lemma}
\newtheorem{corollary}[definition]{Corollary}
\newtheorem{remark}[definition]{Remark}
\newtheorem{example}[definition]{Example}
\crefname{definition}{Def.}{Defs.}
\crefname{theorem}{Theorem}{Theorems}
\crefname{claim}{Claim}{Claims}
\crefname{fact}{Fact}{Fact}
\crefname{lemma}{Lemma}{Lemmas}
\crefname{corollary}{Corollary}{Corollaries}
\crefname{example}{Example}{Examples}
\crefname{remark}{Remark}{Remarks}
\newcommand{\define}{\stackrel{\mathclap{\tiny\mbox{def}}}{=}}
\newacronym{EVM}{EVM}{Ethereum virtual machine}
\newacronym{DeFi}{DeFi}{decentralized finance}
\newacronym{PoW}{PoW}{Proof-of-Work}
\newacronym{PoS}{PoS}{Proof-of-Stake}
\newacronym{MEV}{MEV}{miner-extractable value}
\newacronym{block-DAG}{block-DAGs}{block directed-acyclic-graph}
\newacronym{DAA}{DAA}{difficulty-adjustment algorithm}
\newacronym{MDP}{MDP}{Markov decision-process}
\newacronym{DQL}{DQL}{Deep-Q-learning}
\newacronym{RL}{RL}{reinforcement learning}
\newacronym{ML}{ML}{machine learning}
\newacronym{AI}{AIs}{artificial intelligence}
\newacronym{PDF}{PDF}{probability density function}
\newacronym{CDF}{CDF}{cumulative density function}
\newacronym{AMM}{AMM}{automated market maker}
\newacronym{USD}{USD}{United States Dollar}
\newacronym{IP}{IP}{Internet Protocol}
\newacronym{LP}{LP}{Liquidity Provider}
\newacronym{LT}{LT}{Liquidity Taker}
\newacronym{APY}{APY}{annual percentage yield}
\newacronym{PID}{PID}{proportional integral derivative}
\newacronym{UTXO}{UTXO}{unspent transaction output}
\newacronym{YAML}{YAML}{YAML Ain't Markup Language}
\newacronym{TD}{TD}{total difficulty}
\newacronym{geth}{geth}{Go Ethereum}
\newacronym{WETH}{WETH}{Wrapped Ethereum}
\newacronym{ASIC}{ASIC}{Application Specific Integrated Circuit}
\newacronym{RPC}{RPC}{remote procedure call}
\newacronym{RUM}{RUM}{riskless uncle maker}
\newacronym{PUM}{PUM}{preemptive uncle maker}
\newacronym{URL}{URL}{uniform resource locator}
\newacronym{SSD}{SSD}{solid state drive}
\newacronym{EIP}{EIP}{Ethereum improvement proposal}
\newacronym{CPU}{CPU}{central processing unit}
\newacronym{RAM}{RAM}{random-access memory}
\newacronym{mempool}{mempool}{memory pool}
\newacronym{TFM}{TFM}{transaction fee mechanism}
\newacronym{iid}{i.i.d.}{independent and identically distributed}
\newacronym{wrt}{w.r.t.}{with regard to}
\newacronym{wlog}{w.l.o.g.}{without loss of generality}
\newacronym{TTL}{TTL}{time to live}
\newacronym{DSIC}{DSIC}{dominant strategy incentive-compatible}
\newacronym{EPIC}{EPIC}{ex-post incentive-compatible}
\newacronym{BIC}{BIC}{Bayesian incentive-compatible}
\newacronym{MMIC}{MMIC}{myopic miner incentive-compatible}
\newacronym{FMIC}{FMIC}{farsighted miner incentive-compatible}
\newacronym{OCA}{OCA}{off-chain-agreement}
\newacronym{SCP}{SCP}{side-contract-proof}
\newacronym{PABGA}{PABGA}{pay-as-bid greedy auction}
\newacronym{SPA}{SPA}{second price auction}
\newacronym{UPGA}{UPGA}{uniform-price greedy auction}
\newacronym{BNE}{BNE}{Bayesian-Nash equilibrium}
\newacronym{EPIR}{EPIR}{ex-post individually rational}
\newacronym{EPBB}{EPBB}{ex-post burn balanced}
\newacronym{GTA}{GTA}{good toy auction}
\newacronym{GTFBA}{GTFBA}{good toy finite-blocksize auction}
\newacronym{QoS}{QoS}{quality of service}
\newacronym{CTPA}{CTPA}{constant total probability of allocation}
\newcommand{\auction}{{\gls[hyper=false]{auction}}}
\newcommand{\pay}{{\gls[hyper=false]{pay}}}
\newcommand{\burn}{{\gls[hyper=false]{burn}}}
\newcommand{\reserve}{{\gls[hyper=false]{reserve}}}
\newcommand{\alloc}{{\gls[hyper=false]{allocation}}}
\newcommand{\fee}{{\gls[hyper=false]{fee}}}
\NewDocumentCommand{\expect}{ e{_} s o >{\SplitArgument{1}{|}}m }{%
  \operatorname{E}%     the expectation operator
  \IfValueT{#1}{{\!}_{#1}}% the measure of the expectation
  \IfBooleanTF{#2}{% *-variant
    \expectarg*{\expectvar#4}%
  }{% no *-variant
    \IfNoValueTF{#3}{% no optional argument
      \expectarg{\expectvar#4}%
    }{% optional argument
      \expectarg[#3]{\expectvar#4}%
    }%
  }%
}
\NewDocumentCommand{\expectvar}{mm}{%
  #1\IfValueT{#2}{\nonscript\;\delimsize\vert\nonscript\;#2}%
}
\DeclarePairedDelimiterX{\expectarg}[1]{[}{]}{#1}
\title{Barriers to Collusion-resistant Transaction Fee Mechanisms}
\date{}
\author{%
    Yotam Gafni\\
    \texttt{yotam.gafni@gmail.com}\\
    Weizmann Institute%
    \and
    Aviv Yaish\\
    \texttt{aviv.yaish@mail.huji.ac.il}\\
    The Hebrew University, Jerusalem%
}
\begin{document}
\maketitle
% \author{Yotam Gafni}
% \email{yotam.gafni@gmail.com}
% \orcid{0000-0002-2144-655X}
% \affiliation{%
%     \institution{Weizmann Institute}
%     \city{Rehovot}
%     \country{Israel}
% }
% \author{Aviv Yaish}
% \email{aviv.yaish@mail.huji.ac.il}
% \orcid{0000-0002-7971-2494}
% \affiliation{%
%   \institution{The Hebrew University}
%   \city{Jerusalem}
%   \country{Israel}
% }
% \fi
\begin{abstract}
    To allocate transactions to blocks, cryptocurrencies use an auction-like \gls{TFM}.
    A conjecture of \citeauthor{roughgarden2021transaction} \cite{roughgarden2021transaction} asks whether there is a \gls{TFM} that is incentive compatible for both the users and the miner, and is also resistant to off-chain agreements (\glsxtrshort{OCA}s) between these parties, a collusion notion that captures the ability of users and the miner to jointly deviate for profit.
    The work of \citeauthor{chung2023foundations} \cite{chung2023foundations} tackles the problem using the different collusion resistance notion of side-channel proofness (\glsxtrshort{SCP}), and shows an impossibility given this notion.
    We show that \glsxtrshort{OCA}-proofness and \glsxtrshort{SCP} are different, with \glsxtrshort{SCP} being strictly stronger.
    We then fully characterize the intersection of deterministic \gls{DSIC} and \glsxtrshort{OCA}-proof mechanisms, as well as deterministic \glsxtrshort{MMIC} and \glsxtrshort{OCA}-proof ones, and use this characterization to show that only the trivial mechanism is \gls{DSIC}, \gls{MMIC} and \glsxtrshort{OCA}-proof.
    %We extend our result that any \gls{DSIC}, \gls{MMIC} and \glsxtrshort{OCA}-proof mechanism has $0$ revenue to randomized mechanisms. 
    We also show that a randomized mechanism can be at most $0.842$-efficient in the worst case, and that the impossibility of a non-trivial \gls{DSIC}, \gls{MMIC} and \glsxtrshort{OCA}-proof extends to a couple of natural classes of randomized mechanisms.
\end{abstract}
% \begin{CCSXML}
% <ccs2012>
%    <concept>
%        <concept_id>10010405.10003550.10003551</concept_id>
%        <concept_desc>Applied computing~Digital cash</concept_desc>
%        <concept_significance>300</concept_significance>
%        </concept>
%    <concept>
%        <concept_id>10002978.10003029.10003031</concept_id>
%        <concept_desc>Security and privacy~Economics of security and privacy</concept_desc>
%        <concept_significance>300</concept_significance>
%        </concept>
% </ccs2012>
% \end{CCSXML}
% \ccsdesc[300]{Applied computing~Digital cash}
% \ccsdesc[300]{Security and privacy~Economics of security and privacy}
\keywords{Efficient Auctions, Axiomatic Mechanism Design, Incentives}

% Title page for title and abstract only.
% \begin{titlepage}
% \maketitle
% \end{titlepage}

\section{Introduction}
Cryptocurrencies such as Bitcoin utilize a decentralized mechanism wherein entities called \emph{miners} process user transactions in batches called \emph{blocks}.
In times of congestion, the prompt processing of a transaction may be incentivized by attaching a \emph{fee} to it, which can collected by miners upon including the transaction in a block.
A cryptocurrency's \gls{TFM} determines how much fees are charged from transactions and transferred to miners as revenue, with \glspl{TFM} commonly modeled as auctions.
A key difference between \glspl{TFM} and traditional auctions is the ease with which miners and users may collude via smart contracts to increase their profits. The possibility of designing such \textit{collusion-resistant} \glspl{TFM} is the main focus of our work. 
% The difference is due both to the diminished regulatory backdrop of cryptocurrencies, and the ability of parties to commit to side transfers via on-chain smart contracts. 
% is their ability to credibly ``burn'' part of the payment, rather than transfer it in full to miners.

The seminal work of Roughgarden \cite{roughgarden2021transaction} includes collusion resistance, referred to as \gls{OCA}-proofness, as the centerpiece of ``good'' \glspl{TFM}, alongside other properties such as being incentive compatible for both users and miners (\gls{DSIC} and \gls{MMIC}, respectively).
Given this desiderata, \citeauthor{roughgarden2021transaction} poses an important open question: \emph{can we design good \glspl{TFM} that satisfy all desired properties?}

This was followed by Chung \& Shi \cite{chung2023foundations}, who discuss a collusion resistance notion they call $c$-\gls{SCP}, which means that no coalition of the miner and $c$ users can benefit from colluding.
They show that satisfying \gls{DSIC} and $1$-\gls{SCP} is only possible for the trivial mechanism that never allocates any item, implying that good \glspl{TFM} produce $0$ miner revenue.
This is exacerbated by similar results for \glspl{TFM} that are allowed to use strong cryptographic primitives, even when considering relaxed Bayesian and approximate versions of the \gls{TFM} deisderata \cite{shi2023what,chen2023bayesian,maximizingMinerRevenue}.

%As mentioned, the main open question posed by Roughgarden \cite{roughgarden2021transaction} is whether a ``good'' \gls{TFM} can be designed, where the list of properties that a good \glspl{TFM} should satisfy is:
%(1) being \gls{DSIC} for users (users report their true valuation);
%(2) being \gls{MMIC} for miners (myopic miners cannot increase profits by deviating);
%(3) being \gls{OCA}-proof (resistant to collusion between users and miners).
%In a series of elegant impossibility results, previous works have cast doubt on the prospects of designing a good \glspl{TFM}.
As we show, \citeauthor{roughgarden2021transaction}'s \cite{roughgarden2021transaction} open question is not directly answered by the results for \gls{SCP} \cite{chung2023foundations}, as \gls{OCA}-proofness and \gls{SCP} are not equivalent.
To answer the question about \gls{OCA}-proofness, we offer the following results:
\begin{itemize}
\item We show the existence of \gls{OCA}-proof mechanisms that are not \gls{SCP} in \cref{clm:OcaVsScp}, and then show in \cref{clm:ScpImpliesOca} that \gls{SCP} is strictly stronger as it implies \gls{OCA}-proofness.

\item In \cref{lem:SingleBidderRevenue}, we show that any DSIC+MMIC+OCA-proof mechanism must have $0$ miner revenue in the single bidder case.

\item For the general case, we answer \citeauthor{roughgarden2021transaction}'s \cite{roughgarden2021transaction} open question in \cref{thm:det_impossibility} by showing that the trivial \gls{TFM} is the only possible determinstic mechanism satisfying DSIC, MMIC, and $1$-OCA-proofness\footnote{Notice that we introduce a $c$ quantifier for the \gls{OCA} collusion notion, similarly to the quantifier for \gls{SCP}.}. We do so by precisely characterizing all DSIC+$1$-OCA mechanisms, and all MMIC+$1$-OCA-proof mechanisms, which is of independent interest. 

\item For randomized mechanisms, we show that the trivial mechanism is the only DSIC + MMIC + OCA-proof, \textit{scale-invariant} mechanism (\cref{thm:scale_invariant}).
Scale invariance is a natural property, defined by requiring that scaling all bids by the same factor leaves the allocation unchanged.
It holds for notable mechanisms such as the first-price, second-price, and the all-pay auctions.
It does not hold, for example, for the second-price auction with reserve price $r > 0$.

\item We show that no randomized DSIC+MMIC+OCA-proof mechanism is efficient in \cref{thm:ctpa}, and we bound away the efficiency approximation in \cref{thm:allocation_bound} and Corollary~\ref{corr:efficiency_bound}, showing that it can achieve at most $0.842$ in the worst case. 

\end{itemize}

\subsection{Technical Overview}
We start our work by fully understanding the single bidder case.
In Lemma~\ref{lem:SingleBidderRevenue}, we observe that due to the analogy in this case between the bidder's utility (w.r.t. payments) and the joint utility of the bidder and miner (w.r.t. to the amount of fees that are burnt), it must hold that all payments are burnt, meaning that miner revenue must equal 0.
As we show in Example~\ref{ex:second_price}, this analogy does not hold once more bidders are involved.
However, this important observation paves the way for our general result on the deterministic multi-bidder case.
Thus, in Theorem~\ref{thm:zero_revenue} we show that the miner cannot have any revenue in the general case, as otherwise this would allow the miner to achieve positive revenue in the single bidder case by introducing fake bids%, thereby resolving \citeauthor{roughgarden2021transaction}'s open question \cite{roughgarden2021transaction}
.

Going beyond this result, we fully characterize the OCA-proof single bidder case and find it takes the form of a ``posted burn’’ (in analogy to the ``posted price’’ characterization of the single bidder DSIC case).
I.e., there is some reserve ``burn'' $r$, so that the price the bidder must pay is equal to or above it, with $r$ completely burnt.
The mechanism has freedom in choosing the payment as long as it is larger than $r$, as to allow burning the reserve price in full.
Then, we extend this structure to any number of bidders, again due to the OCA-proofness property.
That is since the highest bidder and the miner always have the option to omit all other bids and give the item (subject to the ``posted burn’’ $r$) to the highest bidder, thus maximizing their joint utility.
We therefore derive a general structure for OCA-proof mechanisms as allocating to the highest bidder with a ``posted burn’’ $r$. 
This generalizes a known result in the TFM literature: Both the ``tipless mechanism’’ and EIP-1559 are OCA-proof \cite{roughgarden2021transaction} when the burn rate is high enough, where both have different payments.
Our result emphasizes that, in fact, there can be a wide variety of mechanisms of this sort. 
% Actually, what happens with the tipless mechanism is that if $r$ is too low, because it doesn't prioritize higher bidders, it might not be OCA. But if r is high enough it takes out all the low bidders and only acts as a posted price for the higher bidder which is fine. 

This characterization of OCA-proof mechanisms, where the payment is the only degree of freedom left for the mechanism besides deciding the ``posted burn’’ $r$, is then what drives our impossibility result for the deterministic case.
This is because DSIC and MMIC payment rules differ conceptually: DSIC rules must follow the Myerson critical bid form (Fact~\ref{fct:MyersonLemma}), and so DSIC+OCA-proof mechanisms follow the form of a second-price auction with a reserve price $r$ that is burned (Theorem~\ref{thm:dsic_1oca}), while MMIC payment rules must be robust to the miner artificially increasing them through fake bids, and so must only depend on the winning bid.
Strictly speaking, the payment does not have to be the winning bid itself (i.e., first-price with burned reserves), but rather any function that increases with the winning bid and respects the ``posted burn’’ $r$.
This can be considered as a form of ``generalized first-price’’ (Theorem~\ref{thm:mmic_1oca}).
Our characterizations of all DSIC+OCA-proof and MMIC+OCA-proof mechanisms, have just a single mechanism at their intersection: the trivial mechanism, thus resulting in our $0$ revenue impossibility result for DSIC+MMIC+OCA-proof mechanisms (Theorem~\ref{thm:det_impossibility}).

We then turn our attention to randomized mechanisms.
We first consider the natural class of \textit{scale invariant} mechanisms.
For such mechanisms, any homogeneous transformation of the bids preserves the allocation outcome.
These mechanisms are natural, as intuitively, the relative values of the different bidders should be the decisive factor when allocating bids, and not the ``measurement unit’’.
Indeed, many commonplace auctions are scale invariant, e.g., the first-price auction, the second-price auction, and the all-pay auction.
The crucial technical property of scale-invariant OCA-proof auctions is that they cannot burn any fees.
This is because given a non-zero burn rule, bidders can coordinate to all scale down their bids, to the point that this burn exceeds what is allowed by basic auction properties such as individual rationality and burn balance. Thus, the burnt amount must be lowered, in contradiction to OCA-proofness.
Together with our understanding of the single bidder case as having $0$ revenue, this implies that the item must be either always or never given to the single bidder (Lemma~\ref{lem:always_or_never_allocated}).
Having the item fully allocated to the bidder with no burn in the single bidder case is very lucrative for the joint utility of the bidder and the miner, as they can achieve optimal utility.
With OCA-proofness, this degenerates the multi-bidder case to be forced to give the item to the highest bidder with full probability, which brings us back to the deterministic case, and thus to an impossibility (Theorem~\ref{thm:scale_invariant}). 

Lastly, we show efficiency approximation hardness results for general randomized mechanisms (i.e., not necessarily scale-invariant).
Key to these results is Claim~\ref{clm:payment_burn_bound}, which shows a lower bound on the possible aggregate burn for the two bidder case.
While in the deterministic case we were able to show zero revenue, the miner has harder time introducing fake bids in the randomized case.
That is because a fake bid might also win with some probability, and thus resulting in having some of its payment burnt.
Still, while not implying zero revenue for the miner, this at least implies that its payment from the real bid must be bounded by the aggregate burn.
Since the joint utility expression depends on the aggregate burn, this serves to 
%Due to the aggregate burn rule within the expression for joint utility, this then serves to 
bound the two-bidder allocation rule with the single bidder allocation rule (Lemma~\ref{lem:two_bidders_cond}), by first tying together the allocation and payment through Myerson’s lemma, and then the payment and burn through Claim~\ref{clm:payment_burn_bound}. 

Using the way Lemma~\ref{lem:two_bidders_cond} ties the two-bidder case together with the single-bidder case, we are then able to infer two important results, both limiting the efficiency approximation that mechanisms may achieve.
First, in Theorem~\ref{thm:allocation_bound}, we upper bound the maximal probability with which the item is allocated in the single bidder case.
No matter how high the single bid is, the item cannot be allocated with probability higher than ~0.914.
Then, in Lemma~\ref{lem:two_bidder_bounds}, we tie together the allocation probability in the two bidder case with the utility in the single bidder case.
We use this in Corollary~\ref{corr:efficiency_bound} to upper bound the utility in single bidder case (depending on the bidder’s value).
Notice that because of our scale-invariance result, we know that the mechanism must behave differently given different bidder’s values.
Corollary~\ref{corr:efficiency_bound} shows that for some bidders' values it must behave quite badly, allowing for efficiency approximation, i.e., the ratio between the joint utility and the optimal joint utility where the item is fully allocated without any burn, is at most 0.842. 

\subsection{Related Work}

\subsubsection*{Transaction Fee Mechanisms}
The literature on applying auction theory to analyze blockchain transaction fee mechanisms has its early start in \citeauthor{lavi2019redesigning} \cite{lavi2019redesigning}, who put forward the monopolistic auction as a good TFM candidate that balances incentive-compatibility for users and the miner.
\citeauthor{yao2018incentive} \cite{yao2018incentive} answers two open conjectures of \cite{lavi2019redesigning} by providing a proof that the monopolistic auction is indeed asymptotically \gls{DSIC} and that its revenue dominates the random sampling optimal price (RSOP) auction of \citeauthor{competitiveAuctions} \cite{competitiveAuctions}.
\citeauthor{roughgarden2021transaction} \cite{roughgarden2021transaction,roughgarden2020transaction} introduces the OCA collusion-resistance notion and characterizes different aspects of the TFM space. He also provides detailed analysis of EIP-1559 and other popular auction formats.
\citeauthor{chung2023foundations} \cite{chung2023foundations} introduce the SCP collusion-resistance notion and show related properties, most importantly the impossibility of DSIC+$1$-SCP mechanisms. 

\citeauthor{maximizingMinerRevenue} \cite{maximizingMinerRevenue} further explore the design of \glspl{TFM} when assuming honesty of most or some of the bidders.
They show a very general impossibility of achieving non-negligible miner revenue, even when altogether using cryptographic primitives, Bayesian assumptions, approximate incentive-compatibility, and infinite block size, and match it with mechanisms that achieve this miner revenue.
\citeauthor{shi2023what} \cite{shi2023what} present the cryptographic MPC-assisted \gls{TFM} model and show we can only achieve $\theta(\epsilon)$ miner revenue given $\epsilon$-approximate notions of incentive-compatibility and \gls{SCP}.
\citeauthor{gafni2023greedy} \cite{gafni2023greedy} show we can achieve a \gls{OCA}-proof, \gls{MMIC} and Bayesian incentive-compatible (BIC) mechanism given i.i.d. bidders. \citeauthor{chen2023bayesian} \cite{chen2023bayesian} show a BIC and \gls{SCP} mechanism.
While the focus of these works is in finding what relaxing assumptions and tools may allow feasibility of collusion-resistant mechanisms, we shine a light on matters still unresolved in the plain model, and focus on fully characterizing it.

Other works focus on reducing fees for bidders \cite{basu2023stablefees,damle2024designing}, and some empirical works analyze the systemic effects of introducing EIP-1559 in Ethereuem \cite{liu2022empirical,reijsbergen2021transaction}. 

\subsubsection*{Extending the valuation framework}
\citeauthor{bahrani2023transaction} \cite{bahrani2023transaction} consider a setting where a miner has preferences over which transactions are included in the block, aside from it wishing to receive high revenue in fees.
They show an impossibility for DSIC + MMIC mechanisms in this more general setting.
\citeauthor{nourbakhsh2024transaction} \cite{nourbakhsh2024transaction} consider a setting where \textit{users} have preferences over the \textit{ordering} of transactions in a block, and show that there is no deterministic DSIC + OCA-proof in this general order-sensitive setting.
\citeauthor{bahrani2023when} \cite{bahrani2023when} consider single-item auctions where each ``bidding agent'' is in fact a decentralized autonomous organization (DAO). The agents treat the item as a public excludable good, and distribute the payment for it among them. The authors show that it is then impossible to achieve an efficient two-level mechanism, and show a tight $\theta(\log n)$ bound on the efficiency that can be achieved, where $n$ is the largest number of bidders in a DAO.
\citeauthor{milionis2022framework} \cite{milionis2022framework} study single-item NFT auctions. Although different from TFMs, they analyze them through similar lens, and find that it is impossible to achieve incentive-compatibility together with collusion-resistance. 

\subsubsection*{The Dynamic Setting}
While we, and the previously mentioned papers, focus on the setting of a \textit{single block}, it is interesting to consider the dynamic setting, as the blockchain TFM operates repeatedly within short periods of time. The dynamic setting is not yet well-consolidated, and many different approaches exist to its modeling and operation. Different works study dynamic posted pricing \cite{ferreira2021dynamic}, dynamic manipulation on EIP-1559 \cite{azouvi2023base}, analysis of the effects of dynamics on price and efficiency \cite{kiayias2023tiered,nisan2023serial,leonardos2023optimality}, and the implications of having farsighted users and miners \cite{gafni2023greedy,hougaard2023farsighted}. 

\subsubsection*{Random vs. Deterministic Mechanisms}
Generally, in the mechanism design literature, randomized mechanisms can achieve better results than deterministic ones. 
For example, \citeauthor{randomPower} \cite{randomPower} show that randomized mechanisms can arbitrarily approximate the welfare of multi-unit auctions, while  \citeauthor{multiUnitApproxTwo} \cite{multiUnitApproxTwo} show that deterministic mechanisms may achieve at most an approximation of $2$.
\citeauthor{nisanRonen} \cite{nisanRonen} show a randomized/deterministic gap for the problem of task scheduling. In the context of \gls{TFM}s, 
\citeauthor{chung2023foundations} \cite{chung2023foundations} consider a notion where the miner is punished for introducing fake transactions (since they may need to end up paying for the fake transactions in future blocks).
Given this notion, they show a randomized mechanism (the burning second-price auction) that achieves good properties, while they show that no deterministic mechanism can achieve it. 

Despite the game-theoretic advantages of randomized mechanisms, ``good'' sources of randomness, also known as randomness beacons, are hard to come by in the blockchain setting \cite{bunz2017proofs,choi2023sok}, making deterministic mechanisms simpler to implement.
Ideal randomness beacons should periodically generate random values such that no actor should be able to predict or bias future random values.
Both properties may be vital for random mechanisms: For example, predictable beacons could allow users to account for future randomness when creating transactions so to increase their allocation probability, while miners may manipulate beacons susceptible to bias in order to increase their profits.
In particular, Ethereum's RANDAO is known to be vulnerable to biasing manipulations that allow even small agents (i.e., proposers with a small fraction of the stake) to risklessly increase their expected number of blocks to some small extent \cite{alturki2020statistical,edgington2023upgrading}.
Incorporating randomness into the transaction allocation process can increase the profitability of such manipulations, where several randomness beacon designs are known to be secure only when the potential profits from manipulations are limited \cite{choi2023sok}.
This is not only a theoretical threat as miners have been observed to manipulate cryptocurrency mechanisms in the wild \cite{yaish2023uncle}, and the potential profits from manipulations can be large \cite{zhou2023sok}.

While we only show impossibility results for both deterministic and randomized mechanisms, our results for deterministic mechanisms are more conclusive, and it remains possible that a randomized mechanism may achieve the \gls{TFM} incentive-compatibility and \gls{OCA}-proof desiderata.  

\subsubsection*{The Economic Literature on Collusion}
There is vast economic literature discussing collusion, especially concerning industrial cartels.
The focus of that literature is mostly on \textit{user-user} collusion\footnote{See, e.g., \cite{greenLaffont,deckelbaumMicali,biddingClubs,goldbergHartline}}, i.e., bidders cooperating without the auctioneer. This is in contrast to  \textit{miner-user} collusion, where the coalition necessarily includes the miner, as in the OCA and SCP notions of \cite{roughgarden2021transaction,chung2023foundations}.
This focus on user-user collusion can perhaps be explained by US government auctions being the prime example of this auction setting.
Early on, the seminal work of \citeauthor{Stigler1964} \cite{Stigler1964} identified many caveats that stand in the way of successful collusion. 

Different works focus on particular types of collusion. The seminal work of \citeauthor{ausubel2002ascending} \cite{ausubel2002ascending} studies \textit{core-selecting} auctions. This captures the fact that in combinatorial auctions with complements (rather than substitutes), VCG may charge low payments, such that some of the losing bidders may be willing to pay. Thus, for an outcome to be \textit{stable}, it must be in the core of the underlying coalitional game. \citeauthor{goereeLien2016} \cite{goereeLien2016} show an impossibility of truthful efficient core-selecting auctions, which is somewhat analogous to the DSIC + 1-SCP impossibility \cite{chung2023foundations}. 
Another prominent notion of collusion is that of \textit{bribing} \cite{fox2023censorship,esoSchummer2000,rachmilevitchFirstPrice,rachmilevitchSecondPrice}. %In \textit{bribing}, a bidder pays other bidders to abstain from participating in the auction. \yotam{Give a short overview of the bribing results} 

\citeauthor{optimalCollusionProof} \cite{optimalCollusionProof} show how to construct optimal revenue collusion-proof single-item auctions, i.e., how to implement the Myerson auction allocation rule in a way that is collusion-resistant. Notice, however, that they assume knowledge of the identity of the colluding bidders. An interesting emphasis in the work is that beliefs and strategies are updated in case a possible member \textit{declines} participating in a bidding ring. The bidder is then aware of the existence of the bidding ring, and may also be punished by the bidders upon its rejection of participation.\footnote{\citeauthor{rachmilevitchFirstPrice} \cite{rachmilevitchFirstPrice} also emphasizes, in the case of bribing in first-price auctions, how rejecting a bribe influences the beliefs and behavior of bidders subsequently.} \citeauthor{valiant2007collusion} \cite{valiant2007collusion} provide a characterization of what revenue benchmark can be achieved for general collusion-proof combinatorial auctions, and provide a mechanism that guarantees a revenue which is a logarithmic approximation of the welfare of all but the ``highest'' bidder, in a sense they define. 

\citeauthor{mcAfeeMcMillan} \cite{mcAfeeMcMillan} provide a characterization of the optimal collusion against second-price auctions with and without transfers between the colluders. Importantly, they emphasize the bargaining aspect of the collusion, where different bidders may prefer different collusion agreements. \citeauthor{honorThieves} \cite{honorThieves} studies this problem as a coalitional game, and shows that the Shapley value of the bidders (w.r.t. the collusion) is in the core of the game.

\section{Model and Preliminaries}
\label{sec:Model}
We follow the standard model used by the literature for \gls{TFM}s \cite{lavi2019redesigning,roughgarden2021transaction,roughgarden2020transaction,chung2023foundations,shi2023what}. The main technical assumption we make, which we use to simplify the discussion, is that the auction is a \textit{single-item} auction, i.e., that the block size is $1$. We interchangeably use the terms \textit{auction}, \textit{mechanism}, or \gls{TFM} to refer to the object of our discussion, depending on what is most appropriate in the context. 
Some missing proofs are in Appendix~\ref{app:missing_proofs}. %However, our results should translate to general single parameter environments, for example for the case of $k$ identical items, which is the most commonly considered case in the TFM literature. 

\subsection{Transaction Fee Mechanisms}
A \gls{TFM} $\auction$ consists of an \emph{allocation rule} $\alloc^\auction$, a \emph{payment rule} $\pay^\auction$ and a \emph{burning rule} $\burn^\auction$.
For brevity, when it is clear from context, we omit the $\auction$ identifier from the notations.

\subsubsection*{Allocation rule}
An allocation rule defines the mechanism's allocation of transactions to the upcoming block, as intended by the mechanism's designers.
\begin{definition}[Allocation rule]
    \label{def:AllocFunc}
    A \textit{deterministic} allocation rule $\alloc_n^\auction:R_+^n \rightarrow \{0,1\}^n$ of auction $\auction$ defines which transaction should be included in the upcoming block by $\auction$. I.e., we require that $\forall \mathbf{b} \in R_+^n: \sum_i [\alloc_n^\auction(\mathbf{b})]_i \leq 1$.
    More generally, an allocation rule may be \textit{randomized}, i.e., $\alloc_n^\auction:R_+^n \rightarrow \{\Delta(0,1)\}^n$, with the same feasibility restriction over the sum of allocation probabilities.
\end{definition}
\begin{remark}
    In our analysis, we omit the identifier $n$ and write $\alloc^\auction$.
    We assume that the correct $\alloc_n^\auction$ is used, based on the length $n$ of the vector of bids $\mathbf{b}$. We also use the following two auxiliary notations: 
    \begin{enumerate}
    \item We use the notation $\tilde{\alloc}^\auction:R_+^n\rightarrow [n] \cup \{\emptyset\}$ for deterministic mechanisms to specify the unique bidder who receives the item, or $\emptyset$ if the item remains un-allocated.
    I.e., if $\tilde{\alloc}^\auction(\mathbf{b}) = i$, then:
    \begin{equation*}
        \alloc^\auction(\mathbf{b}) = \{0, \ldots, 0, \underbrace{1}_{i\text{-th index}}, 0, \ldots, 0\}.
    \end{equation*}
        
    \item Instead of writing $[\alloc^\auction(\mathbf{b})]_i$ to specify the allocation probability for bidder $i$ under $\auction$, we use the notation $\alloc^\auction(b_i, b_{-i})$, where $b_i$ is bidder $i$'s bid, and $b_{-i}$ are the bids of all other bidders.
    While still emphasizing that bidder $i$ is the bidder we are interested in, this notation is better suited for varying $b_i$, while fixing $b_{-i}$.
    With this notation, we write the feasibility constraint of \cref{def:AllocFunc} as:
    \begin{equation}
    \label{eq:feasibility}
    \sum_i \alloc^\auction(b_i, b_{-i}) \leq 1
    \end{equation}
    \end{enumerate}
\end{remark}

\subsubsection*{Payment \& burn}
The payment and burn rules respectively define the amount of fees paid by each transaction when included in a block, and how much of each payment is ``burnt'' and taken out of circulation instead of being given to the miner, with both rules enforced by the mechanism.
% We now define both rules.
%In line with the prevalent nomenclature in the auction theory literature, we use the term \emph{bid} to refer to a transaction's fee, and for a vector $b$ we use $b_{-i}$ to refer to a vector where the $i$-th index is removed. We also refer to the number of currently pending transactions as $n$, and assume each belongs to an individual user.
\begin{definition}[Payment rule]
    The payment rule $\mathbf{\pay^\auction}:R_+^n\rightarrow R_+^n$ in auction $\auction$ receives bids $b_1, \ldots, b_n$, and outputs a payment vector with a payment for each bidder $i$.
\end{definition}
\begin{definition}[Burning rule]
    The burning rule $\mathbf{\burn^{\auction}}:R_+^n\rightarrow R_+^n$ in auction $\auction$ receives bids $b_1, \ldots, b_n$, and and outputs a burn vector with a burn for each bidder $i$, meaning the amount of funds out of $b_i$'s payment that is taken out of circulation.
    %Furthermore, the $i$-th bid's burn amount does not depend on $\mathbf{b_{-i}}$: $\forall \mathbf{b_{-i}'}: \burn^{\auction}_i\left(b_i, \mathbf{b_{-i}}\right) = \burn^{\auction}_i\left(b_i, \mathbf{b_{-i}'}\right)$.
\end{definition}

\begin{remark}
The definitions of the payment and burning rules can be extended to the randomized case, similarly to our extension of allocation functions.
For our purposes, it suffices to consider the \textit{expected} payment or burn of each bidder under randomized rules.
We also extend the notation in the two ways we did for allocation functions.
For $x \in \{\pay^\auction, \burn^\auction\}$, we write $x(b_i, b_{-i})$ instead of $[x(\mathbf{b})]_i$, and in the deterministic case, we use $\tilde{x}:R_+^n \rightarrow R$ to denote the payment or burn of the \textit{winner}, i.e., if $\tilde{\alloc}^\auction(\mathbf{b}) = i$, then: $x(\mathbf{b}) = \{0, \ldots, 0, \underbrace{\tilde{x}(\mathbf{b})}_{i\text{-th index}}, 0, \ldots, 0\}$.
\end{remark}

%Both payment and burn rules should satisfy \cref{def:Epir}:%,def:Epbb}.
%\begin{definition}[\Gls{EPIR}]
 %   \label{def:Epir}
  %  We say that an auction $\auction$ with payment rule $\mathbf{p^\auction}$ is \gls{EPIR} if the bidder does not pay when its transaction is not included in a block: $\pay^\auction_i\left(b_i, \mathbf{b_{-i}}\right) = 0$, and pays at most its bid when included: $\pay^\auction_i\left(b_i, \mathbf{b_{-i}}\right) \le b_i$.
%\end{definition}

\subsubsection*{Basic definitions}
We proceed with several definitions that are used throughout our analysis.

\begin{definition}[Agent Utilities]
    \label{def:utilities}
    Let auction $\auction = (\alloc, \pay, \burn)$. Consider $n$ bidders with true valuations $\mathbf{v} \in R_+^n$ and $n'$ bids 
    $\mathbf{b} \in R^{n'}$. 
    We assume that $\mathbf{b}$ is indexed the same as $\mathbf{v}$, meaning that the first $n$ bids in $\mathbf{b}$ correspond to the bids in $\mathbf{v}$ (possibly as $0$ if omitted), and so w.l.o.g. we consider $n' \geq n$. We consider the $n' - n$ bidders that are in $\mathbf{b}$ but not in $\mathbf{v}$ as the miner's ``fake bidders'', and denote them by $\mathbf{b}_F$. 
    Then, the utilities of the various agents are defined as follows.
    \begin{itemize}
    \item  \textit{Bidder Utility:}
    \begin{equation}
    \label{eq:bidder_util}
        u_i(\fee_i, \mathbf{\fee_{-i}} ; v_i)
        \define
            v_i \cdot \alloc(b_i, \mathbf{b}_{-i}) - \pay\left(\fee_i,\mathbf{\fee_{-i}}\right).
      \end{equation}

    \item \textit{Miner Utility:}
    \begin{equation}
    \label{eq:miner_util}
    \begin{split}
    u_{miner} (\mathbf{\fee} ; \mathbf{v})
    &
    \define
    \sum_{i=1}^{n'} \left(\pay(b_i, \mathbf{b}_{-i}) - \burn(b_i, \mathbf{b}_{-i}) \right) - \sum_{i=n+1}^{n'} u_i(\fee_i, \mathbf{\fee_{-i}} ; 0)
    \\&
    =
    \sum_{i=1}^n \left(\pay(b_i, \mathbf{b}_{-i}) - \burn(b_i, \mathbf{b}_{-i}) \right) - \sum_{i=n+1}^{n'} \burn(b_i, \mathbf{b}_{-i}).
    \end{split}
    \end{equation}

    \item
    \textit{Joint Utility:}
    \begin{equation}
    \label{eq:joint_util}
    \begin{split}
    u_{joint}(\mathbf{b} ; \mathbf{v})
    &
    \define
    u_{miner}(\mathbf{b} ; \mathbf{v}) + \sum_{i=1}^n u_i(b_i, \mathbf{b}_{-i} ; v_i)
    \\&
    =
    \sum_{i=1}^n \left( v_i \alloc(b_i, \mathbf{b}_{-i}) - \burn(b_i, \mathbf{b}_{-i})\right) - \sum_{i=n+1}^{n'} \burn(b_i, \mathbf{b}_{-i}).
    \end{split}
    \end{equation}
    \end{itemize}
\end{definition}

\begin{definition}[Basic Auction Properties]
We say an auction is:
\begin{itemize}
    \item \textit{Individually Rational}, if the expected payment by an agent does not exceed its bid if it is allocated, and $0$ if it is not.
    Formally, for any $\mathbf{b} \in R_+^n$ and for each bidder $i$, we have
    \begin{equation}
    \label{eq:IR_util}
    u_i(b_i, \mathbf{b}_{-i} ; b_i) \geq 0.
    \end{equation}
    If we develop this condition given the explicit form of bidder utility in \cref{eq:bidder_util}, we get:
    \begin{equation}
    \label{eq:IR}
    \alloc(b_i, \mathbf{b}_{-i}) \cdot b_i \geq \pay(b_i, \mathbf{b}_{-i}).
    \end{equation}

    \item \textit{Burn-Balanced}, if the expected burn is non-negative and does not exceed bidder payments.
    Formally, for any $\mathbf{b} \in R_+^n$ and for each bidder $i$, we have
    \begin{equation}
    %\label{eq:BB_positive}
    \label{eq:BB}
    \pay(b_i, \mathbf{b}_{-i}) \geq \burn(b_i, \mathbf{b}_{-i}) \geq 0,
    \end{equation}
    %\begin{equation}
    %\label{eq:BB}
    %\burn(b_i, \mathbf{b}_{-i}).
    %\end{equation}

    \item \textit{Anonymous}, if the allocation, payment, and burn rules are agnostic to permutations over the bidders.
    Formally, for any vector of bids $\mathbf{b} \in R_+^n$, permutation $\pi \in S_n$, and rule $x \in \{\alloc, \pay, \burn\}$, we have
    $x(\pi(\mathbf{b})) = \pi(x(\mathbf{b}))$.
    For anonymous auctions, we treat bids as an unordered set.
\end{itemize}
\end{definition}

Throughout the text, we assume the basic auction properties hold. The only exception to it is in Appendix~\ref{app:nonAnon} where we explore removing the anonymity assumption. 

We follow with \cref{clm:simple_joint_bound}, which is a useful ``helper'' result.
We provide the proof in \cref{app:missing_proofs}.
\begin{restatable}[]{claim}{SimpleJointBound}
\label{clm:simple_joint_bound}
In a single-item auction, for any $\mathbf{v}, \mathbf{b}$, the joint utility is at most equal to the highest valuation: $u_{joint}(\mathbf{b} ; \mathbf{v}) \leq \max \mathbf{v}$. 
\end{restatable}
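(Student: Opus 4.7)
The plan is to start from the explicit expansion of the joint utility given in equation~\eqref{eq:joint_util} and then successively drop or bound each term using the basic auction properties stated earlier. Concretely, I write
\[
u_{joint}(\mathbf{b} ; \mathbf{v}) = \sum_{i=1}^n \left( v_i \alloc(b_i, \mathbf{b}_{-i}) - \burn(b_i, \mathbf{b}_{-i})\right) - \sum_{i=n+1}^{n'} \burn(b_i, \mathbf{b}_{-i}),
\]
and first apply burn-balance \eqref{eq:BB}, which guarantees $\burn(b_i, \mathbf{b}_{-i}) \geq 0$ for every $i$, whether a real or fake bidder. Dropping these non-negative terms yields
\[
u_{joint}(\mathbf{b} ; \mathbf{v}) \leq \sum_{i=1}^n v_i \cdot \alloc(b_i, \mathbf{b}_{-i}).
\]

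Next I would invoke the single-item feasibility constraint \eqref{eq:feasibility}, which says that summing allocation probabilities across all (real and fake) bidders gives at most $1$, and therefore in particular $\sum_{i=1}^n \alloc(b_i, \mathbf{b}_{-i}) \leq 1$. Bounding each $v_i$ by $\max \mathbf{v}$ and using the non-negativity of the allocation probabilities gives
\[
\sum_{i=1}^n v_i \cdot \alloc(b_i, \mathbf{b}_{-i}) \leq (\max \mathbf{v}) \cdot \sum_{i=1}^n \alloc(b_i, \mathbf{b}_{-i}) \leq \max \mathbf{v},
\]
which chained with the previous inequality yields the claim.

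There is no real obstacle here; the argument is just a careful bookkeeping exercise. The only subtlety is making sure the feasibility constraint is applied over all $n'$ bidders (so that the bound on the partial sum over the $n$ real bidders still holds), and that burn-balance is used uniformly for both real and fake bidders so that both burn sums in \eqref{eq:joint_util} can be dropped.
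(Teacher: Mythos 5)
Your proposal is correct and matches the paper's proof essentially line for line: drop the burn terms via burn-balance (\cref{eq:BB}), bound each $v_i$ by $\max \mathbf{v}$, and conclude with the feasibility constraint (\cref{eq:feasibility}). Nothing further is needed.
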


\subsection{The TFM Desiderata}
\label{sec:Desiderata}
We now go over the literature's standard desiderata for ``good'' \glspl{TFM}, which includes the \gls{DSIC} and \gls{MMIC} properties.
These properties are usually augmented by some collusion resistance notion, i.e., either \gls{OCA}-proofness or \gls{SCP}.

\begin{definition}[\Glsxtrfull{DSIC}]
    \label{def:DSIC}
    An auction is \gls{DSIC} if it is always weakly better for bidders to declare their true values.
    Thus, for any bidder $i$ with true value $v_i$, for any bid $b_i$, and for any bids $\mathbf{b_{-i}}$ by all other bidders, we have
    $
    u_i(v_i, \mathbf{b_{-i}} ; v_i) \geq u_i(b_i, \mathbf{b_{-i}} ; v_i),
    $
    and there is such $\mathbf{b_{-i}}$ so that the inequality is strict. 
\end{definition}

\begin{definition}[\Glsxtrfull{MMIC}]
    A \gls{TFM} is \gls{MMIC} if a miner cannot strictly increase its revenue by deviating from the intended allocation rule, or by introducing ``fake'' miner-created bids.
    Formally, for any miner strategy $\mu$, and any valuation $\mathbf{v}$,
    \begin{equation}
    \label{eq:mmic_cond}
    u_{miner}(\mathbf{v} ; \mathbf{v}) \geq u_{miner}(\mu(\mathbf{v}) ; \mathbf{v}), 
    \end{equation}
    
    where a miner strategy function $\mu:R^*\rightarrow R^*$ is defined such that for any $\mathbf{v}\in R^*$ (i.e., a vector of bids of arbitrary length $n$), it outputs a vector $\mathbf{b}\in R^*$ of some length $n'\geq n$, so that for any $1\leq i \leq n$, $b_i \in \{0, v_i\}$.
    I.e., it has the choice whether to include or omit any of the original $n$ bids, as well as add $n' - n$ fake bids. Importantly, a miner cannot \textit{change} the real bidders' bids. 
\end{definition}

\subsubsection*{Collusion Notions}
A coalition of users and the miner may collude to increase their joint utility (as given in \cref{def:utilities}). 
Their collusion can entail any kind of deviation from the ``honest'' protocol, i.e., users may bid untruthfully, and the miner can deviate from the intended allocation rule.

The \gls{OCA}-proofness notion, introduced by \citeauthor{roughgarden2021transaction} \cite{roughgarden2021transaction}, compares the colluding coalition with the coalition comprised of the miner and the winning bidders of the canonical outcome, produced by following the mechanism's intended allocation.
The traditional definition given by \cite{roughgarden2021transaction} does not include a $c$ identifier for the amount of bidders in the coalition, but we add it since it allows considering how explicitly well-coordinated the collusion has to be in order to succeed.

\begin{definition}[\Glsxtrlong{OCA}-proof (\glsxtrshort{OCA}-proof)]
    \label{def:OCA}
    % We start by defining an \textit{Off-chain Agreement}:
    For a vector $\mathbf{v}\in R_+^n$ and $n'\geq n$, we say that $\Omega_{\mathbf{v}} \in R^{n'}$ is an off-chain agreement among a colluding coalition of the miner and a set of bidders $C$, if $\Omega_{\mathbf{v}}$ can change any of the colluding bidders' bids, as well as omit bids (whether by coalition members or not), and add $n' - n$ fake bids.
    For each agent $i$, we denote its bid under the agreement by $\Omega_{\mathbf{v}_i}$, and the bids of all other agents by $\Omega_{\mathbf{v}_{-i}}$.
    Let $c_{\Omega_{\mathbf{v}}}$ be the number of bids that $\Omega_{\mathbf{v}}$ \textit{changes} (rather than omits or keeps without change) such that they differ from the colluding bidders' true valuations, i.e., $c_{\Omega_{\mathbf{v}}} = \sum_{i=1}^n \mathds{1}_{\Omega_{\mathbf{v}_i} \not \in \{0,v_i\}}$.
    
    An auction is $c$-\gls{OCA}-proof if a colluding coalition of a miner and up to $c$ bidders cannot increase its aggregate utility to be higher than the joint utility of the intended allocation.
    Formally, for any valuation $\mathbf{v}$ vector of $n$ bidders, and any collusion $\Omega_{\mathbf{v}}$ of the miner and a set of bidders $C \subseteq [n]$ such that $|C| \leq c$ and $c_{\Omega_{\mathbf{v}}} \leq |C|$,
    % $\{i \text{ s.t. }\Omega_{\mathbf{v}_i} \not \in \{0, v_i\}\} \subseteq C$, i.e.,
    % 
    \begin{equation}
    \label{eq:oca_cond}
    u_{joint}(\mathbf{v} ;  \mathbf{v}) \geq u_{miner}(\Omega_{\mathbf{v}} ; \mathbf{v}) + \sum_{i \in C} u_i(\Omega_{\mathbf{v}_i}, \Omega_{\mathbf{v}_{-i}} ; v_i).\end{equation}
    If an auction is $c$-\gls{OCA}-proof for any $c$, we say it is \gls{OCA}-proof.
\end{definition}
\begin{remark}
    For ease of notation, our definition of $\Omega_{\mathbf{v}}$ incorporates all agent bids, including those who are not part of the colluding coalition.
    Thus, for a bid $b_i$ of an agent $i \notin C$, we have $\Omega_{\mathbf{v}_i} = b_i$.
    We emphasize that all bidders that \textit{change} their bid under $\Omega_{\mathbf{v}}$ (rather than keep it the same or omit it) \textit{must} be part of the colluding coalition.
    However, note that a collusion does not necessarily require that all colluding bidders modify their bids.
    This may matter for the design of transfers within the coalition, as \citeauthor{roughgarden2021transaction} \cite{roughgarden2021transaction} shows that any \gls{OCA} can be complemented with Pareto-improving transfers. 
\end{remark}

We show in \cref{clm:coalition_joint_bound} that the condition for $c$-\gls{OCA}-proofness takes a nice form when it holds for any value of $c$.
Due to space considerations, the proof is given in \cref{app:missing_proofs}.
\begin{restatable}[]{claim}{CoalitionJointBound}
\label{clm:coalition_joint_bound}
In an OCA-proof mechanism, for any coalition $C$ of bidders, $u_{joint}(\Omega_{\mathbf{v}} ; \mathbf{v}) \geq u_{miner}(\Omega_{\mathbf{v}} ; \mathbf{v}) + \sum_{i \in C} u_i(\Omega_{\mathbf{v}_i}, \Omega_{\mathbf{v}_{-i}} ; v_i).$
\end{restatable}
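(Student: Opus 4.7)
The plan is to reduce the claim to a straightforward non-negativity statement about the contribution of non-colluding bidders to the joint utility, and then dispatch that using only the basic auction properties (individual rationality and burn balance).

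First, I would expand $u_{joint}(\Omega_{\mathbf{v}} ; \mathbf{v})$ directly via its defining equation \eqref{eq:joint_util}, obtaining
\[
    u_{joint}(\Omega_{\mathbf{v}} ; \mathbf{v}) = u_{miner}(\Omega_{\mathbf{v}} ; \mathbf{v}) + \sum_{i=1}^n u_i(\Omega_{\mathbf{v}_i}, \Omega_{\mathbf{v}_{-i}} ; v_i).
\]
Subtracting the right-hand side of the target inequality shows that the claim is equivalent to
\[
    \sum_{i \in [n] \setminus C} u_i(\Omega_{\mathbf{v}_i}, \Omega_{\mathbf{v}_{-i}} ; v_i) \geq 0,
\]
so it suffices to prove that every non-colluding bidder enjoys non-negative utility under the bids induced by the off-chain agreement.

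Next I would use the restriction built into \cref{def:OCA}: since $i \notin C$, the agreement can only keep or omit bidder $i$'s bid, so $\Omega_{\mathbf{v}_i} \in \{0, v_i\}$. If $\Omega_{\mathbf{v}_i} = v_i$, then \cref{eq:IR_util} (individual rationality applied with the bid equal to the true value) immediately yields $u_i(v_i, \Omega_{\mathbf{v}_{-i}} ; v_i) \geq 0$. If instead $\Omega_{\mathbf{v}_i} = 0$, I would apply individual rationality at the bid $b_i = 0$ to obtain $\pay(0, \Omega_{\mathbf{v}_{-i}}) \leq 0$, and then invoke burn balance \eqref{eq:BB}, which forces $\pay(0, \Omega_{\mathbf{v}_{-i}}) \geq \burn(0, \Omega_{\mathbf{v}_{-i}}) \geq 0$; together these pin the payment to $0$, so
\[
    u_i(0, \Omega_{\mathbf{v}_{-i}} ; v_i) = v_i \cdot \alloc(0, \Omega_{\mathbf{v}_{-i}}) \geq 0.
\]

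Summing the two cases over $i \in [n] \setminus C$ gives the desired non-negativity, and the claim follows. The only mildly subtle step is the omitted-bid case, where one must combine individual rationality with burn balance to rule out a negative payment to the bidder; everything else is bookkeeping. Note that OCA-proofness itself is not actually invoked in the argument, beyond the structural constraint on non-coalition bids that it inherits from \cref{def:OCA}.
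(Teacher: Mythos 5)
Your proof is correct and follows essentially the same route as the paper's: expand $u_{joint}$ via \cref{eq:joint_util}, split the sum over bidders into coalition and non-coalition members, and discard the non-coalition terms by non-negativity of their utilities. The only difference is that you treat the omitted-bid case ($\Omega_{\mathbf{v}_i}=0$) explicitly via burn balance, whereas the paper invokes \cref{eq:IR_util} directly for all $i \notin C$; your version is slightly more careful on that point but the argument is the same.
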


In \cref{lem:oca_joint}, we show that the characterization of \gls{OCA} provided by \citeauthor{roughgarden2021transaction} \cite{roughgarden2021transaction} holds under our refined definition, with the proof provided in \cref{app:missing_proofs}.
\begin{restatable}[]{lemma}{OCAJoint}
\label{lem:oca_joint}
A mechanism is \gls{OCA}-proof if and only if for any $\mathbf{v}$ and $\Omega_{\mathbf{v}}$,
\begin{equation}
\label{eq:general_oca}
u_{joint}(\mathbf{v} ; \mathbf{v}) \geq u_{joint}(\Omega_{\mathbf{v}} ; \mathbf{v}). 
\end{equation}
\end{restatable}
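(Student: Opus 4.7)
The plan is to establish both implications of the biconditional. For the forward direction (OCA-proofness implies the global joint-utility inequality), I would instantiate the OCA-proofness condition \cref{eq:oca_cond} with the maximal coalition $C = [n]$, which is permissible since the statement at the end of \cref{def:OCA} says that OCA-proof means $c$-OCA-proof for every $c$, including $c = n$. With this choice, any agreement $\Omega_{\mathbf{v}}$ is admissible (the constraint $c_{\Omega_{\mathbf{v}}} \leq |C|$ is trivial since there are only $n$ real bidders), and the right-hand side of \cref{eq:oca_cond} sums the miner's utility with the utilities of all $n$ bidders, which matches the definition \cref{eq:joint_util} of $u_{joint}(\Omega_{\mathbf{v}}; \mathbf{v})$ verbatim. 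Hence \cref{eq:general_oca} follows immediately.

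For the converse direction, given an arbitrary coalition $C \subseteq [n]$ and agreement $\Omega_{\mathbf{v}}$ with $c_{\Omega_{\mathbf{v}}} \leq |C|$, I would chain the hypothesis with \cref{clm:coalition_joint_bound} to obtain
\[
u_{joint}(\mathbf{v}; \mathbf{v}) \;\geq\; u_{joint}(\Omega_{\mathbf{v}}; \mathbf{v}) \;\geq\; u_{miner}(\Omega_{\mathbf{v}}; \mathbf{v}) + \sum_{i \in C} u_i(\Omega_{\mathbf{v}_i}, \Omega_{\mathbf{v}_{-i}}; v_i),
\]
which is exactly \cref{eq:oca_cond}. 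The first inequality is the hypothesis, and the second inequality is the previously established helper claim, which says that the full joint utility dominates any partial coalition payoff because the non-coalition bidders contribute non-negative terms (their bids lie in $\{0, v_i\}$, so individual rationality applied to each such bidder's own bid combined with $v_i \geq \Omega_{\mathbf{v}_i}$ guarantees $u_i(\Omega_{\mathbf{v}_i}, \Omega_{\mathbf{v}_{-i}}; v_i) \geq 0$).

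The main obstacle is not analytical but rather bookkeeping around the definition of an off-chain agreement: one must verify that the quantifiers over $(\mathbf{v}, \Omega_{\mathbf{v}})$ in \cref{eq:general_oca} and over $(\mathbf{v}, C, \Omega_{\mathbf{v}})$ in \cref{eq:oca_cond} match up correctly. In the forward direction this is immediate by taking $C = [n]$; in the reverse direction, given any $\Omega_{\mathbf{v}}$ arising from some coalition, we implicitly reconstruct a compatible coalition $C \supseteq \{i : \Omega_{\mathbf{v}_i} \notin \{0, v_i\}\}$, ensuring $c_{\Omega_{\mathbf{v}}} \leq |C|$ holds. Beyond this quantifier alignment, the proof is purely a rearrangement of \cref{def:utilities} combined with \cref{clm:coalition_joint_bound}, so the conceptual weight sits in that prior claim rather than in the present lemma.
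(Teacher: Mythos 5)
Your proposal is correct and follows essentially the same route as the paper: the forward direction instantiates the $c$-OCA-proofness condition with the grand coalition $C=[n]$ and observes that the right-hand side of \cref{eq:oca_cond} then coincides with $u_{joint}(\Omega_{\mathbf{v}};\mathbf{v})$, and the converse chains \cref{eq:general_oca} with \cref{clm:coalition_joint_bound} exactly as the paper does. No gaps.
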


The line of works started by \citeauthor{chung2023foundations} \cite{chung2023foundations} use the \gls{SCP} collusion notion, which compares the aggregate utility of the colluding coalition with the same coalition's ``honest'' aggregate utility, i.e., the coalition's utility if they were to act honestly.
\begin{definition}[\Glsxtrfull{SCP}]
    \label{def:Scp}
    % We start by defining a \textit{Side Contract Agreement}: 
    For a vector $\mathbf{v}\in R_+^n$ and $n'\geq n$, we say that $\Gamma_{\mathbf{v}} \in R^{n'}$ is a side-contract among a colluding coalition of the miner and a set of bidders, if $\Gamma_{\mathbf{v}}$ can change any of the colluding bidders' bids, as well as omit bids (whether by coalition members or not), and add $n' - n$ fake bids.
    For each agent $i$, we denote its bid under the contract by $\Gamma_{\mathbf{v}_i}$, and the bids of all other agents by $\Gamma_{\mathbf{v}_{-i}}$.
    Let $c_{\Gamma_{\mathbf{v}}}$ be the number of bids that $\Gamma_{\mathbf{v}}$ \textit{changes} (rather than omits or keeps without change) such that they differ from the colluding bidders' true valuations, i.e., $c_{\Gamma_{\mathbf{v}}} = \sum_{i=1}^n \mathds{1}_{\Gamma_{\mathbf{v}_i} \not \in \{0,v_i\}}$.

    An auction is $c$-\gls{SCP}-proof if a coalition of a miner and up to $c$ bidders cannot increase their aggregated utility by deviating from the honest protocol.
    Formally, for any valuation vector $\mathbf{v}$ of $n$ bidders, and any collusion $\Gamma_{\mathbf{v}}$ of the miner and a set of bidders $C \subseteq [n]$ such that $|C| \leq c$ and $c_{\Gamma_{\mathbf{v}}} \leq |C|$,
    % $\{i \text{ s.t. }\Gamma_{\mathbf{v}_i} \not \in \{0, v_i\}\} \subseteq C$, i.e., 
    % 
    \begin{equation}
        \label{eq:scp_cond} 
        u_{miner}(\mathbf{v} ; \mathbf{v}) + \sum_{i \in C} u_i(v_i, \mathbf{v}_{-i} ; v_i) \geq u_{miner}(\Gamma_{\mathbf{v}} ; \mathbf{v}) + \sum_{i \in C} u_i(\Gamma_{\mathbf{v}_i}, \Gamma_{\mathbf{v}_{-i}} ; v_i).
    \end{equation}
    If an auction is $c$-\gls{SCP} for any $c$, we say it is \gls{SCP}. 
\end{definition}

\section{Understanding OCA}

\subsection{The Difference Between SCP and OCA}
%Prior art has treated \gls{SCP} and \gls{OCA} as equivalent. % Remark 4 of ``Foundations of TFMs'', bottom of page 20: ``Our UIC and MIC notions are equivalent to Roughgarden’s notions [Rou20, Rou21]''
We observe that while \gls{OCA}-proofness compares the utility of every possible manipulating coalition with the joint utility of the winning coalition under the protocol's intended allocation, \gls{SCP} compares it with the counter-factual of the same coalition's honest utility (i.e., the utility obtained without any manipulation).
We formally prove this distinction in \cref{clm:OcaVsScp}. 

\begin{claim}
    \label{clm:OcaVsScp}
    \gls{OCA}-proof $\not\Rightarrow$ $1$-\gls{SCP}.
\end{claim}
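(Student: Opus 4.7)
The plan is to exhibit a concrete mechanism that satisfies OCA-proofness yet fails 1-SCP. A natural candidate is a ``posted-burn'' mechanism $\auction$ parameterized by some reserve $\reserve > 0$: given a bid vector $\mathbf{b}$, if $\max_i b_i \geq \reserve$ allocate the item to the highest bidder, charge the winner exactly $\reserve$, and burn the entire payment; otherwise do not allocate and charge nothing. I would first note that $\auction$ is anonymous, individually rational (a winner always has $b_i \geq \reserve = \pay$), and burn-balanced ($\burn = \pay = \reserve$).

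Next I would verify OCA-proofness by appealing to \cref{lem:oca_joint}, reducing the task to showing $u_{joint}(\mathbf{v}; \mathbf{v}) \geq u_{joint}(\Omega_{\mathbf{v}}; \mathbf{v})$ for every deviation $\Omega_{\mathbf{v}}$. At most one bid wins under any bid vector. If the winner is a real bidder $i$, the joint utility equals $v_i - \reserve$; if no bid wins, the joint utility is $0$; if a fake bidder wins, the joint utility equals $-\reserve < 0$ (no real value contributed, but the burn is charged to the miner). The honest outcome selects the highest-value real bidder whenever $\max_i v_i \geq \reserve$, achieving joint utility $\max(\max_i v_i - \reserve, \, 0)$, which weakly dominates every deviation.

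To refute 1-SCP, I would exhibit a single witnessing instance. Take $n=2$ bidders with $v_1 > v_2 > \reserve$, and form the coalition $C = \{2\}$ of bidder $2$ together with the miner. Under honest play bidder $1$ wins, so $u_2(v_2, v_1; v_2) = 0$ and $u_{miner}(\mathbf{v}; \mathbf{v}) = \pay - \burn = 0$, giving a coalition total of $0$. Now consider the side contract $\Gamma_{\mathbf{v}}$ in which bidder $2$ submits some $b_2 > v_1$ (a single changed bid, satisfying $c_{\Gamma_{\mathbf{v}}} = 1 = |C|$). Then bidder $2$ wins, pays $\reserve$ (all burned), so $u_2(\Gamma_{\mathbf{v}_2}, v_1; v_2) = v_2 - \reserve > 0$ while $u_{miner} = 0$, making the coalition total strictly positive. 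This violates \cref{eq:scp_cond} for $c=1$.

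I do not expect any genuine obstacle: the argument is a direct separation example. The only care required is in the OCA-proof verification — one must also rule out deviations using omissions and fake bids — but the uniform burn of $\reserve$ whenever the item is allocated makes this straightforward, since it pins the joint utility under any deviation to be at most $\max_i v_i - \reserve$. The conceptual point is that SCP and OCA use different benchmarks: SCP compares a coalition's deviation against its own honest utility, so a low-value bidder and the miner can jointly gain by transferring the item away from a high-value non-colluder, whereas OCA compares against the total honest joint utility, which already captures this high-value bidder's surplus and is therefore not violated.
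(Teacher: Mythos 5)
Your proof is correct, but it uses a different witness mechanism than the paper. The paper's counterexample is a \emph{third-price} auction with three bidders and no burning: OCA-proofness is immediate because the joint utility always equals $\max \mathbf{v}$, which is the maximum achievable by \cref{clm:simple_joint_bound}, while $1$-SCP fails because the losing second bidder can overbid and win at the (unchanged) third price. You instead use a posted-burn mechanism with reserve $r>0$ and full burn of the payment; this is in fact exactly the canonical form of a deterministic OCA-proof mechanism that the paper later derives in \cref{lem:general_oca_char}, and your OCA-proofness verification (the uniform burn pins the joint utility of any deviation at $v_{\text{winner}} - r$, maximized by the honest allocation, with fake winners giving $-r$ and non-allocation giving $0$) is essentially the forward direction of that lemma specialized to your mechanism; all the cases you enumerate are handled correctly, and your $1$-SCP violation satisfies the $c_{\Gamma_{\mathbf{v}}} = 1 = |C|$ constraint as required by \cref{def:Scp}. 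Both counterexamples exploit the same failure mode: a coalition bidder who loses honestly (and hence contributes $0$ to the SCP benchmark) overbids to capture the item, which SCP registers as a profitable deviation precisely because its benchmark is the coalition's own honest utility rather than the full joint utility. The paper's example additionally shows the separation persists with zero burn and positive miner revenue, whereas yours needs only two bidders and connects directly to the paper's later structural characterization of OCA-proof mechanisms.
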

\begin{proof}
    Consider $3$ agents with values $\mathbf{v} = (1, \frac{1}{2}, \frac{1}{4})$ that participate in a \emph{third-price} auction:
    \begin{definition}[Third-price Auction]
        The auction is defined as follows:
        
        \noindent {\bf Allocation rule:} allocate the item to the highest bidder: $\tilde{\alloc}(\mathbf{b}) = \arg \max \mathbf{b}.$
        
        \noindent {\bf Payment rule:} the winning bidder pays the $3^{rd}$ highest bid. I.e., if we let $b_{3}$ be the third highest element in $\mathbf{b}$, then $\tilde{\pay}(\mathbf{b}) = b_{3}$.
        
        \noindent {\bf Burning rule:} no fees are burnt. $\tilde{\burn}(\mathbf{b}) = 0$.
    \end{definition}
    If all agents bid truthfully, this auction allocates the item (e.g., the transaction) to the first bidder, and requires that the winner pays the third-highest bid, i.e., $\frac{1}{4}$.
    No payment is burned.

    The auction is not $1$-\gls{SCP}.
    To see why, consider the coalition of the miner together with bidder $2$. Then, without any side contract agreement, their joint utility is $\frac{1}{4}$: the miner's payment is $\frac{1}{4}$ and bidder $2$'s utility is $0$, since its transaction is not allocated:

    \[
    \begin{split}
    u_{miner}(\mathbf{v} ; \mathbf{v}) + u_2(v_2, \mathbf{v}_{-2} ; v_2)
    &
    % \stackrel{\text{\cref{eq:miner_util},\cref{eq:bidder_util}}}{=}
    =
    \sum_{i=1}^3 \left(\pay(v_i, \mathbf{v}_{-i}) - \burn(v_i, \mathbf{v}_{-i}) \right) +  v_2 \cdot \alloc(v_2, \mathbf{v}_{-2}) - \pay\left(v_2,\mathbf{v_{-2}}\right)
    \\&
    =
    \pay(v_1, \mathbf{v}_{-1})
    =
    \frac{1}{4}.
\end{split}
\]

    However, if we consider the side contract agreement $\Gamma_{\mathbf{v}} = (1,2,\frac{1}{4})$ with $c_{\Gamma_{\mathbf{v}}} = 1$, where bidder $2$ submits the bid $2$ instead of $\frac{1}{2}$, then the coalition's utility changes to $\frac{1}{2}$: The miner's payment is still $\frac{1}{4}$, and bidder $2$ has a utility of $\frac{1}{2} - \frac{1}{4} = \frac{1}{4}$:

    \[
    \begin{split}
    u_{miner}(\Gamma_{\mathbf{v}} ; \mathbf{v}) + u_2((\Gamma_{\mathbf{v}_2}, \Gamma_{\mathbf{v}_{-2}} ; v_2)
    \hspace*{.3em}
    &
    \overset{\mathclap{\text{\tiny Eqs. \ref{eq:miner_util},\ref{eq:bidder_util}}}}
    =
    \hspace*{.3em}
    \sum_{i=1}^3 \left(\pay(\Gamma_{\mathbf{v}_i}, \Gamma_{\mathbf{v}_{-i}}) - \burn(\Gamma_{\mathbf{v}_i}, \Gamma_{\mathbf{v}_{-i}}) \right) + v_2 \cdot \alloc(\Gamma_{\mathbf{v}_2}, \Gamma_{\mathbf{v}_{-2}}) - \pay\left(\Gamma_{\mathbf{v}_2}, \Gamma_{\mathbf{v}_{-2}})\right)
    \\&
    =
    \hspace*{.3em}
    \pay(\Gamma_{\mathbf{v}_2}, \Gamma_{\mathbf{v}_{-2}}) + v_2 \cdot \alloc(\Gamma_{\mathbf{v}_2}, \Gamma_{\mathbf{v}_{-2}}) - \pay\left(\Gamma_{\mathbf{v}_2}, \Gamma_{\mathbf{v}_{-2}})\right)
    \\&
    =
    \hspace*{.3em}
    v_2 \cdot \alloc(\Gamma_{\mathbf{v}_2}, \Gamma_{\mathbf{v}_{-2}})
    =
    \frac{1}{2}.
\end{split}
\]

    On the other hand, the auction is \gls{OCA}-proof, since for any $\mathbf{v}$ and \gls{OCA} $\Omega_{\mathbf{v}}$ it holds that

    $$
        u_{joint}(\mathbf{v} ; \mathbf{v})
        =
        \max \mathbf{v}
        \overset{\mathclap{\text{\tiny\cref{clm:simple_joint_bound}}}}
        \geq
        u_{joint}(\Omega_{\mathbf{v}} ; \mathbf{v}),
    $$

    and this is exactly equivalent to being OCA-proof by \cref{lem:oca_joint}. 
    
    %Note that for any possible valuation vector for the users, by bidding truthfully they can ensure that each one's utility is never negative, and thus bidding truthfully is \gls{EPIR}.
    %This is because the joint utility cannot exceed the highest valuation for the item, which is exactly the joint utility given by following the third-price auction as intended. 
    %Per \cref{fct:JointMaximizingOca} (proved in \cite{roughgarden2021transaction}), this suffices to be \gls{OCA}-proof.

    %\begin{fact}[Joint-Utility Maximizing $\Leftrightarrow$ \gls{OCA}-proof]
    %    \label{fct:JointMaximizingOca}
     %   A \gls{TFM} with intended allocation rule $\alloc^\auction$, payment rule $\mathbf{\pay^\auction}$, and burning rule $\mathbf{\burn^\auction}$, is \gls{OCA}-proof iff there is a  bidding strategy that maximizes the joint utility of the miner and allocated users under $\alloc^\auction$ over any possible outcome for any valuation vector $v$.
    %\end{fact}
\end{proof}

We now prove \cref{clm:ScpImpliesOca}, which together with \cref{clm:OcaVsScp} implies that \gls{SCP} is stricter than \gls{OCA}.
\begin{claim}
    \label{clm:ScpImpliesOca}  
    $c$-\gls{SCP} $\Rightarrow$ $c$-\gls{OCA}-proof.
\end{claim}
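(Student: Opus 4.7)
The plan is to derive the $c$-OCA inequality from the $c$-SCP inequality by instantiating the side-contract with the same manipulation used in the off-chain agreement, and then closing the gap between the two left-hand sides using individual rationality.

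Concretely, fix any valuation vector $\mathbf{v}$ and any off-chain agreement $\Omega_{\mathbf{v}}$ of the miner with a coalition $C\subseteq[n]$ satisfying $|C|\leq c$ and $c_{\Omega_{\mathbf{v}}}\leq|C|$. First I would observe that the syntactic conditions on $\Omega_{\mathbf{v}}$ in \cref{def:OCA} are exactly the syntactic conditions on a side-contract $\Gamma_{\mathbf{v}}$ in \cref{def:Scp}: both allow the coalition (miner plus up to $c$ bidders) to modify the colluding bidders' bids, omit arbitrary bids, and inject fake bids, with the same bookkeeping on $c_{\cdot}$. Hence setting $\Gamma_{\mathbf{v}}:=\Omega_{\mathbf{v}}$ with the same $C$ is admissible in the $c$-SCP condition, yielding
\begin{equation*}
u_{miner}(\mathbf{v};\mathbf{v})+\sum_{i\in C}u_i(v_i,\mathbf{v}_{-i};v_i)\;\geq\;u_{miner}(\Omega_{\mathbf{v}};\mathbf{v})+\sum_{i\in C}u_i(\Omega_{\mathbf{v}_i},\Omega_{\mathbf{v}_{-i}};v_i).
\end{equation*}

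Next I would inflate the left-hand side to the full joint utility of the honest outcome. By individual rationality (\cref{eq:IR_util}), every honest bidder satisfies $u_i(v_i,\mathbf{v}_{-i};v_i)\geq 0$, so in particular $\sum_{i\notin C}u_i(v_i,\mathbf{v}_{-i};v_i)\geq 0$. Adding this nonnegative quantity to the LHS gives
\begin{equation*}
u_{joint}(\mathbf{v};\mathbf{v})\;=\;u_{miner}(\mathbf{v};\mathbf{v})+\sum_{i=1}^{n}u_i(v_i,\mathbf{v}_{-i};v_i)\;\geq\;u_{miner}(\mathbf{v};\mathbf{v})+\sum_{i\in C}u_i(v_i,\mathbf{v}_{-i};v_i).
\end{equation*}
Chaining with the SCP inequality above yields exactly \cref{eq:oca_cond} for the arbitrary OCA $\Omega_{\mathbf{v}}$, establishing $c$-OCA-proofness.

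There is no real obstacle beyond checking that the admissibility conditions for $\Omega_{\mathbf{v}}$ and $\Gamma_{\mathbf{v}}$ match, so that the SCP assumption can legally be applied to the same deviation the OCA condition ranges over; the rest is a one-line use of individual rationality. The only subtlety worth flagging is that OCA is defined with respect to the joint utility of \emph{all} honest winning bidders (not just members of $C$), which is precisely what the $\sum_{i\notin C}u_i(v_i,\mathbf{v}_{-i};v_i)\geq 0$ step absorbs.
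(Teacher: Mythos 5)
Your proof is correct and follows essentially the same route as the paper's: instantiate the side-contract as $\Gamma_{\mathbf{v}}=\Omega_{\mathbf{v}}$, apply the $c$-SCP inequality, and bridge from the coalition's honest utility to the full joint utility via individual rationality. The only cosmetic difference is that the paper packages that last step as a separate helper claim (Claim~\ref{clm:coalition_joint_bound}, applied to the trivial agreement), whereas you prove the same inequality inline.
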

\begin{proof}
    Let $\auction$ be an auction which is $c$-\gls{SCP}. Then for any $\mathbf{v}$ and corresponding off-chain agreement $\Omega_{\mathbf{v}}$ and coalition $C$ so that $|C| \leq c$, choose the side contract agreement $\Gamma_{\mathbf{v}} = \Omega_{\mathbf{v}}$ to be the same as the off-chain agreement, we then have:

    $$   u_{joint}(\mathbf{v} ; \mathbf{v}) \stackrel{\text{Claim~\ref{clm:coalition_joint_bound}}}{\geq} u_{miner}(\mathbf{v} ; \mathbf{v}) + \sum_{i \in C} u_i(v_i, \mathbf{v}_{-i} ; v_i) \stackrel{\text{\cref{eq:scp_cond}}}{\geq} u_{miner}(\Omega_{\mathbf{v}} ; \mathbf{v}) + \sum_{i \in C} u_i(\Omega_{\mathbf{v}_i}, \Omega_{\mathbf{v}_{-i}} ; v_i), $$

    but this is exactly the $c$-OCA-proofness condition of \cref{eq:oca_cond}.      
    %Assume towards contradiction it is not \gls{OCA}-proof, meaning there is some $c^*$ for which a coalition of the miner and $c^*$ users can deviate and get higher joint utility than that of the intended allocation's winning coalition.
    %But, the joint utility of the winning coalition is at least as high as the joint utility of any other coalition, and thus the auction also does not satisfy $c^*$-SCP. 
\end{proof}

Now that we distinguished SCP and OCA-proofness, we turn our focus to characterizing and obtaining results for \gls{TFM}s that satisfy OCA-proofness.

\subsection{Useful Preliminary Results for OCA-proof TFMs}

In this section, we derive results that hold for any (possibly randomized) mechanism, and without further qualifiers which we define later such as scale-invariant mechanisms. 
First, we state Myerson's lemma for DSIC mechanisms. 

\begin{fact}[Myerson's lemma, as given in \cite{hartline2006lectures})]
    \label{fct:MyersonLemma}
    % Follow the statement in Hartline http://www.eecs.northwestern.edu/~hartline/omd.pdf. 
    A single-item auction is \gls{DSIC} if and only if for all $i$ and $\mathbf{b}_{-i}$, the allocation $\alloc(b_i, \mathbf{b}_{-i})$ is monotone and non-decreasing in $b_i$, and furthermore the payment is exactly $\pay(b_i, \mathbf{b}_{-i}) = \alloc(b_i, \mathbf{b}_{-i}) \cdot b_i - \int_0^{b_i}\alloc(t, \mathbf{b}_{-i}) dt$.
    %A deterministic mechanism is \gls{DSIC} if and only if its allocation rule is monotone, and the payment rule is determined by the ``critical bid''.
\end{fact}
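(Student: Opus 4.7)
The plan is to prove both directions of the equivalence using the standard two-inequality argument for Myerson's lemma, which I will adapt to the (possibly randomized) single-item setting of the paper.

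For the forward direction, fix any bidder $i$ and any $\mathbf{b}_{-i}$, and consider two candidate values $v_1 < v_2$. Writing down the DSIC condition from \cref{def:DSIC} twice, once for a bidder with true value $v_1$ refusing to misreport as $v_2$, and once for a bidder with true value $v_2$ refusing to misreport as $v_1$, and summing the two resulting inequalities, the payment terms cancel and I obtain
\begin{equation*}
(v_2 - v_1)\bigl(\alloc(v_2, \mathbf{b}_{-i}) - \alloc(v_1, \mathbf{b}_{-i})\bigr) \geq 0,
\end{equation*}
which immediately yields monotonicity of $\alloc$ in $b_i$. To pin down the payment rule, I would rearrange each of the two DSIC inequalities separately to sandwich $\pay(v_2, \mathbf{b}_{-i}) - \pay(v_1, \mathbf{b}_{-i})$ between $v_1 \cdot (\alloc(v_2, \mathbf{b}_{-i}) - \alloc(v_1, \mathbf{b}_{-i}))$ and $v_2 \cdot (\alloc(v_2, \mathbf{b}_{-i}) - \alloc(v_1, \mathbf{b}_{-i}))$. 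Letting $v_2 = v_1 + \epsilon$ and taking $\epsilon \to 0$ gives a Lebesgue--Stieltjes representation of the payment, $\pay(b_i, \mathbf{b}_{-i}) = \pay(0, \mathbf{b}_{-i}) + \int_0^{b_i} t \, d\alloc(t, \mathbf{b}_{-i})$. Using individual rationality (\cref{eq:IR}) with $b_i = 0$ forces $\pay(0, \mathbf{b}_{-i}) = 0$, and integration by parts then rewrites this as the stated closed form $\pay(b_i, \mathbf{b}_{-i}) = \alloc(b_i, \mathbf{b}_{-i}) \cdot b_i - \int_0^{b_i}\alloc(t, \mathbf{b}_{-i})\, dt$.

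For the reverse direction, suppose $\alloc$ is monotone non-decreasing in $b_i$ and $\pay$ matches the formula. Fix a bidder with true value $v$ considering some misreport $b$; its utility is $v \cdot \alloc(b, \mathbf{b}_{-i}) - \alloc(b, \mathbf{b}_{-i}) \cdot b + \int_0^{b}\alloc(t, \mathbf{b}_{-i})\, dt$. I would subtract the truthful utility at $b = v$ from the utility of bidding $b$, and show using elementary manipulation that the difference is nonpositive in both cases $b < v$ and $b > v$. Specifically, the difference reduces to $\int_b^{v} (\alloc(t, \mathbf{b}_{-i}) - \alloc(b, \mathbf{b}_{-i}))\, dt$ (for $b < v$) and an analogous expression with opposite sign for $b > v$; monotonicity of $\alloc$ ensures each of these is nonpositive, which is exactly DSIC.

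The main obstacle is a technicality rather than a conceptual one: because $\alloc$ need only be monotone (with possible jumps in the deterministic case), the integrals above must be handled as Lebesgue--Stieltjes or Riemann--Stieltjes integrals, and the integration by parts in the forward direction requires some care at discontinuities. This is standard and is handled precisely as in the textbook treatment \cite{hartline2006lectures}, so no novel argument is required.
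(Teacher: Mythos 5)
The paper does not prove this statement: it is imported as a known Fact directly from \cite{hartline2006lectures}, so there is no in-paper proof to compare against. Your argument is the standard textbook proof of Myerson's lemma (summed DSIC inequalities for monotonicity, the sandwich bound on payment differences with the normalization $\pay(0,\mathbf{b}_{-i})=0$ forced by individual rationality and nonnegative payments, and the integral comparison for sufficiency), and it is correct; the only slip is a sign transcription in the sufficiency step, where for $b<v$ the deviating-minus-truthful utility equals $\int_b^{v}\bigl(\alloc(b,\mathbf{b}_{-i})-\alloc(t,\mathbf{b}_{-i})\bigr)\,dt$ rather than the reversed integrand you wrote, which as stated is nonnegative instead of nonpositive; the conclusion is unaffected.
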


We continue by reasoning about the single-bidder case.
\begin{restatable}[]{lemma}{SingleBidderRevenue}
\label{lem:SingleBidderRevenue}
    With a single bidder, all \gls{DSIC}+$1$-OCA-proof single-item auctions have $0$ seller revenue. %are of the form $a(b) = \1{b\geq r}, \burn(b) = \pay(b) = \1{b\geq r} \cdot r,$ for some $r\in \mathbb{R}_+$ (where we allow $r = \infty$).
\end{restatable}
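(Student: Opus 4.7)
The plan is to exploit a structural symmetry between DSIC and $1$-OCA-proofness in the single-bidder case: DSIC constrains the payment rule $\pay$ via the bidder's utility, while OCA-proofness constrains the burn rule $\burn$ via the joint utility, and with a single bidder these two constraints have exactly the same shape. Myerson's lemma pins down such a rule uniquely given the allocation $\alloc$ (after normalization at $0$), so $\burn$ must coincide with $\pay$, and the per-instance miner revenue $\pay-\burn$ must vanish identically.

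Concretely, I would first apply \cref{fct:MyersonLemma} to the DSIC condition. In the single-bidder case this gives the familiar integral formula $\pay(b)=\alloc(b)\cdot b-\int_0^b\alloc(t)\,dt$ together with monotonicity of $\alloc$. Next I would unpack 1-OCA-proofness in the single-bidder case using \cref{lem:oca_joint}: for the sole bidder with true value $v$, considering the OCA in which the coalition of the miner and the bidder replaces the truthful bid with an arbitrary $b$, the condition $u_{joint}(v;v)\geq u_{joint}(b;v)$ reads
\[
v\cdot\alloc(v)-\burn(v)\;\geq\;v\cdot\alloc(b)-\burn(b)\qquad\text{for every }v,b\geq 0.
\]
This is exactly the statement that truth-telling is a (weakly) dominant strategy in the ``shadow auction'' whose allocation rule is $\alloc$ and whose payment rule is $\burn$. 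A second application of Myerson's lemma, now to this shadow auction, then yields $\burn(b)=\alloc(b)\cdot b-\int_0^b\alloc(t)\,dt$.

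To close the argument I need the normalization $\burn(0)=0$, which comes for free from the basic auction properties: burn-balance gives $0\leq\burn(0)\leq\pay(0)$, and the first Myerson formula evaluated at $0$ gives $\pay(0)=0$. Equating the two Myerson expressions then yields $\burn\equiv\pay$, so the seller revenue $\pay-\burn$ vanishes identically, and fake miner bids need not be considered as the family of OCAs that merely rewrite the real bidder's bid already suffices. The only mildly delicate step is remembering this $\burn(0)=0$ normalization — without it Myerson would pin $\burn$ down only up to an additive constant — but it costs nothing once burn-balance is invoked. The genuine content of the lemma is the observation that the $v\cdot\alloc(b)-\burn(b)$ expression in OCA-proofness is structurally identical to the bidder utility expression used by DSIC, an analogy that, as the paper flags, breaks down once a second bidder is introduced.
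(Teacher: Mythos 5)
Your proposal is correct and follows essentially the same route as the paper's proof: both observe that the single-bidder $1$-OCA-proofness condition $v\cdot\alloc(v)-\burn(v)\geq v\cdot\alloc(b)-\burn(b)$ is the DSIC condition with $\burn$ playing the role of $\pay$, and then invoke Myerson's lemma a second time to conclude $\burn\equiv\pay$ and hence zero revenue. Your explicit handling of the $\burn(0)=0$ normalization via burn-balance is a detail the paper leaves implicit, but it is the same argument.
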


The above statement works for the single bidder case, but does not extend to more bidders.
We demonstrate it through the example of the second-price auction:

%\yotam{Shouldn't we define the second-price auction? Maybe when we define the third-price, actually define it as $n$-price?}

\begin{example}
\label{ex:second_price}
    \textit{The second-price auction is DSIC and $1$-OCA-proof (as observed in \cite{roughgarden2021transaction})}. First, notice that in the single bidder case, the second-price auction indeed satisfies that the payment rule equals the burn rule (both are always $0$). However, with more bidders, the form of the joint utility of the winner coalition no longer resembles the form of the utility of a specified bidder, since it takes the value of the \textit{maximal} bidder, depending on $\mathbf{b}$ (rather than any fixed bidder: This also separates the analysis of OCA-proofness from that of SCP, for which we know that DSIC+$1$-SCP indeed yields $0$ miner revenue). 
\end{example}

\section{Deterministic OCA-proof Mechanisms}

\cref{ex:second_price} shows that generally, the DSIC and $1$-OCA-proofness properties are not enough to guarantee zero revenue.
We now show that for deterministic mechanisms, adding the MMIC property suffices to get a general $0$ revenue result. 
%We thus need to use the additional MMIC property to get a general $0$ revenue result. It turns out that this property, combined with the single bidder case in \cref{lem:SingleBidderRevenue}, already suffices to show that even when the number of bidders is greater, \gls{DSIC}+\gls{MMIC}+$1$-\gls{OCA}-proof mechanisms produce $0$ miner revenue.
% 
\begin{theorem}
\label{thm:zero_revenue}
Every deterministic \gls{DSIC}+\gls{MMIC}+$1$-\gls{OCA}-proof mechanism has $0$ miner revenue.
\end{theorem}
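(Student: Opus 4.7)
The plan is to reduce the multi-bidder case to the single-bidder case already handled by \cref{lem:SingleBidderRevenue}, using MMIC to let the miner "simulate" the multi-bidder scenario through fake bids whenever positive revenue is available.

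I would argue by contradiction: assume some valuation vector $\mathbf{v} = (v_1, \dots, v_n)$ on which the miner earns strictly positive revenue. Since the mechanism is deterministic, at most one bidder wins. If no bidder wins then all payments and burns are $0$ and revenue is $0$, so some bidder must win; relabel so that it is bidder $1$. By IR (recalling that payments are restricted to $\mathbb{R}_+$), every non-winner has payment $0$, and by burn-balance (\cref{eq:BB}) their burn is $0$ as well. Consequently the miner's revenue on $\mathbf{v}$ collapses to $\tilde{\pay}(\mathbf{v}) - \tilde{\burn}(\mathbf{v})$, and by assumption this quantity is strictly positive.

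Next I would consider the single-real-bidder world where only bidder $1$ exists with value $v_1$. By \cref{lem:SingleBidderRevenue} (which applies because DSIC+MMIC+$1$-OCA-proof is a sub-class of DSIC+$1$-OCA-proof), the honest miner revenue in this world is $0$. Let the miner deviate by injecting the $n-1$ fake bids $(v_2, \dots, v_n)$, exactly reconstructing $\mathbf{v}$. Since the allocation rule is a function of the bid vector alone, bidder $1$ again wins; the fake bidders lose and thus contribute $0$ to the burn term in $u_{miner}$ (again by IR combined with burn-balance applied to the fake bidders in the utility expression of \cref{eq:miner_util}). Hence the miner's deviation revenue equals $\tilde{\pay}(\mathbf{v}) - \tilde{\burn}(\mathbf{v}) > 0$, violating the MMIC inequality in \cref{eq:mmic_cond}, which demands that the deviation revenue be at most the honest revenue of $0$.

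The only delicate step is the bookkeeping that non-winning (in particular, fake) bidders contribute nothing to the negative burn term in the miner-utility expression; this is where determinism is essential, since in the randomized setting fake bids may win with positive probability and thereby incur burn that the miner must absorb. The role of DSIC is purely to enable the invocation of \cref{lem:SingleBidderRevenue} in the single-bidder sub-case; MMIC then does the heavy lifting of propagating the zero-revenue conclusion from one bidder to many.
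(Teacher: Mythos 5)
Your proposal is correct and follows essentially the same route as the paper's proof: both argue by contradiction, isolate the unique winner, observe that non-winners contribute no payment or burn, and then exhibit an MMIC violation in which the miner, facing only the winning bid (where \cref{lem:SingleBidderRevenue} forces zero revenue), injects the remaining bids as fakes to recover the strictly positive revenue. Your explicit remark on why determinism is needed for the fake-bid bookkeeping is a nice touch but does not change the argument.
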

\begin{proof}

Assume towards contradiction a bid vector $\mathbf{b} = b_1, \ldots, b_n$ such that $u_{miner}(\mathbf{b} ; \mathbf{b}) > 0$. 

This means, by \cref{eq:miner_util},
 $u_{miner} (\mathbf{b} ; \mathbf{b}) =  \tilde{\pay}(\mathbf{b}) - \tilde{\burn}(\mathbf{b}) > 0.$

Since the payment is strictly positive, by individual rationality (\cref{eq:IR}), the item is allocated to some bidder $i$. All other bidders have no payments or burns. Thus,  again by \cref{eq:miner_util},

$$
u_{miner} (\mathbf{b} ; b_i) = \pay(b_i, \mathbf{b}_{-i}) - \burn(b_i, \mathbf{b}_{-i}) - \sum_{j\neq i} \burn(b_j, \mathbf{b}_{-j}) = \tilde{\pay}(\mathbf{b}) - \tilde{\burn}(\mathbf{b}) > 0 = u_{miner}(b_i ; b_i).
$$

We thus have a violation of the MMIC condition of \cref{eq:mmic_cond}, where $\mathbf{v} = (b_i), \mu(\mathbf{v}) = \mathbf{b}$, i.e., when the miner is faced with the single bid $b_i$, it might insert the fake transactions $\mathbf{b}_{-i}$.
\end{proof}

The characterization of \cref{lem:SingleBidderRevenue}, which holds in the general randomized case, gives the following result in the \textit{deterministic} case:

\begin{corollary}
In the single bidder case, any deterministic, DSIC+$1$-OCA-proof mechanism $\alloc, \pay, \burn$ has an allocation function of the form $\tilde{a}(b_1) = \begin{cases} 1 & b_1 \geq r, \\
\emptyset & Otherwise\end{cases}$,  and payment and burn functions of the form $\tilde{\pay}(b_1) = \tilde{\burn}(b_1) = \mathds{1}_{b_1 \geq r} \cdot r$, for some $r \in R_+$.
\end{corollary}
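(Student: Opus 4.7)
The plan is to combine Myerson's lemma (Fact~\ref{fct:MyersonLemma}) with the zero-revenue characterization of Lemma~\ref{lem:SingleBidderRevenue}. Since the mechanism is deterministic, $\alloc(b_1) \in \{0,1\}$, and DSIC forces the allocation to be monotone non-decreasing in $b_1$. A $\{0,1\}$-valued monotone function on $\mathbb{R}_+$ is exactly a threshold function, so there exists some $r \in \mathbb{R}_+ \cup \{+\infty\}$ (with the convention $r = +\infty$ corresponding to the trivial never-allocate mechanism) such that $\tilde{\alloc}(b_1) = 1$ iff $b_1 \geq r$ (up to choosing the convention at the boundary point $b_1 = r$; either a left- or right-continuous threshold works, and we fix the right-continuous one).

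Next I would apply Myerson's payment formula (Fact~\ref{fct:MyersonLemma}) to this threshold allocation to pin down the payment. A direct calculation yields
\begin{equation*}
\tilde{\pay}(b_1) = \alloc(b_1) \cdot b_1 - \int_0^{b_1} \alloc(t)\, dt = \mathds{1}_{b_1 \geq r} \cdot b_1 - \mathds{1}_{b_1 \geq r} \cdot (b_1 - r) = \mathds{1}_{b_1 \geq r} \cdot r,
\end{equation*}
which gives exactly the desired posted-price form for the payment.

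Finally, I would use Lemma~\ref{lem:SingleBidderRevenue}, which says that in the single-bidder DSIC+$1$-OCA-proof case the seller revenue is $0$. In the single-bidder case, the miner's revenue on bid $b_1$ equals $\tilde{\pay}(b_1) - \tilde{\burn}(b_1)$ whenever the item is allocated (and $0$ otherwise, by individual rationality). So zero revenue for every $b_1$ forces $\tilde{\burn}(b_1) = \tilde{\pay}(b_1) = \mathds{1}_{b_1 \geq r} \cdot r$, completing the characterization.

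There is no real obstacle here: the only subtlety is the boundary convention at $b_1 = r$, which is cosmetic — either choice is consistent with monotonicity and the Myerson integral — and the observation that in the single-bidder case the miner's revenue is simply $\tilde{\pay} - \tilde{\burn}$ on the winning bid, so ``zero seller revenue'' translates pointwise to $\tilde{\pay} = \tilde{\burn}$.
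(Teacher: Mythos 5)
Your proposal is correct and matches the paper's intended argument: the paper states this corollary as an immediate consequence of Lemma~\ref{lem:SingleBidderRevenue} (whose proof already establishes, via the Myerson argument applied to the OCA condition, that burn equals payment in the single-bidder case), and your fleshing-out---monotone deterministic allocation is a threshold, the Myerson integral gives the posted price $r$, and zero revenue forces $\tilde{\burn} = \tilde{\pay}$---is exactly the computation the paper leaves implicit. The boundary convention at $b_1 = r$ and the degenerate $r = \infty$ case are handled sensibly and do not affect the result.
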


However, we can provide a meaningful characterization even when removing the \gls{DSIC} condition.
The characterization, given in \cref{lem:oca_single_bidder}, remains very similar, albeit with more freedom to decide the payment rule. 

\begin{lemma}
\label{lem:oca_single_bidder}
In the single bidder case, in any $1$-\gls{OCA}-proof deterministic mechanism $\alloc, \pay, \burn$, the allocation function takes the form $\tilde{a}(b_1) = \begin{cases} 1 & b_1 \geq r, \\
\emptyset & Otherwise\end{cases}$, and the burn function takes the form $\tilde{\burn}(b_1) = \mathds{1}_{b_1 \geq r} \cdot r$ for some $r \in R_+$.
\end{lemma}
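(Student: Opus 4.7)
The plan is to invoke \cref{lem:oca_joint}, which reformulates $1$-OCA-proofness as the truthful profile maximizing joint utility over all feasible OCAs, and to combine this with individual rationality and burn-balance. In the single-bidder case I first restrict to OCAs with no fake bids, where the bidder merely misreports some $b' \geq 0$. Substituting into the joint utility expression \cref{eq:joint_util} and using the deterministic $\tilde{\cdot}$ notation, this reads
\[
v_1 \tilde{\alloc}(v_1) - \tilde{\burn}(v_1) \;\geq\; v_1 \tilde{\alloc}(b') - \tilde{\burn}(b') \qquad \text{for all } v_1, b' \geq 0.
\]
The entire characterization is then extracted from this single family of inequalities together with the basic auction properties.

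Next I isolate the winning set $S \define \{b \geq 0 : \tilde{\alloc}(b) = 1\}$. For any $b \notin S$, individual rationality ($\tilde{\pay}(b) \leq b \cdot \tilde{\alloc}(b) = 0$) combined with burn-balance ($0 \leq \tilde{\burn}(b) \leq \tilde{\pay}(b)$) forces $\tilde{\pay}(b) = \tilde{\burn}(b) = 0$, so deviating to any losing bid gives joint utility $0$. Applying the displayed OCA inequality in both directions for any pair $v_1, b' \in S$ then forces $\tilde{\burn}(v_1) = \tilde{\burn}(b')$, so the burn equals a single constant $r \in R_+$ on $S$. To pin down $S$ itself, I apply the inequality for $v_1 \notin S$ (truthful joint utility $0$) against a deviation to some $b' \in S$ (joint utility $v_1 - r$), forcing $v_1 \leq r$, i.e., $(r,\infty) \subseteq S$. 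Conversely, individual rationality on any $b \in S$ yields $r = \tilde{\burn}(b) \leq \tilde{\pay}(b) \leq b$, so $S \subseteq [r,\infty)$. Together these two containments give the claimed form of $\tilde{\alloc}$ and of $\tilde{\burn}$.

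The main subtlety I anticipate is the boundary point $b = r$: OCA-proofness alone is compatible with either $r \in S$ (with $\tilde{\pay}(r) = \tilde{\burn}(r) = r$) or $r \notin S$ (with $\tilde{\pay}(r) = \tilde{\burn}(r) = 0$), since both choices achieve joint utility $0$ at $v_1 = r$. The lemma's stated form with $b_1 \geq r$ is to be read as fixing this convention, and nothing downstream distinguishes the two cases. A secondary point worth noting is that OCAs with fake bids introduce additional constraints on multi-bid inputs, but they are not needed for the single-bid characterization here: the no-fake-bid OCAs already suffice to pin down $\tilde{\alloc}$ and $\tilde{\burn}$ on $R_+$.
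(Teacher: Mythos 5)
Your proof is correct and follows essentially the same route as the paper's: both reduce $1$-OCA-proofness to the single-bidder misreport inequality via \cref{lem:oca_joint} and combine it with individual rationality and burn-balance, with you merely organizing the argument by first fixing the constant burn $r$ on the winning set and then sandwiching that set between $(r,\infty)$ and $[r,\infty)$, whereas the paper defines $r$ as the infimum of the winning set and then shows the constant burn equals it. Your remark about the unresolved boundary point $b_1 = r$ is a fair observation that applies equally to the paper's own proof.
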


\begin{proof}
Let $r = \inf \{b_1 \text{ s.t. } \tilde{\alloc}(b_1) \neq \emptyset \}$. For any $b_1 > r$, assume towards contradiction that $\tilde{\alloc}(b_1) = \emptyset$. Then, there must be some $b'_1$ with $r < b'_1 < b_1$ so that $\tilde{\alloc}(b'_1) \neq \emptyset$, as $r$ is the infimum for this property. Then, $\tilde{\burn}(b'_1) \stackrel{\text{Eqs. (}\ref{eq:IR},\ref{eq:BB}\text{)}}{\leq} b'_1 < b_1$, and so 

$$u_{joint}(b'_1 ; b_1) = b_1 \tilde{\alloc}(b'_1) - \tilde{\burn}(b'_1) = b_1 - \tilde{\burn}(b'_1) > 0 = b_1 \tilde{\alloc}(b_1) - \tilde{\burn}(b_1) = u_{joint}(b_1 ; b_1),$$

which violates the condition of \cref{lem:oca_joint} (which is the same as $1$-OCA-proofness since we discuss a single bidder), where $\mathbf{v} = (b_1), \Omega_{\mathbf{v}} = (b'_1)$. 

For any $b_1 < r$ we know by the choice of $r$ as the infimum of the property, that $\tilde{\alloc}(b_1) = \emptyset$.

Regarding the burn, consider that there are some two bids $b_1 \neq b'_1$ so that $\tilde{\alloc}(b_1) = \tilde{\alloc}(b'_1) = 1$, but the burn is different between the two, w.l.o.g. assume that $\tilde{\burn}(b_1) > \tilde{\burn}(b'_1)$. This violates the OCA-proofness condition of \cref{lem:oca_joint}, where $\mathbf{v} = (b_1), \Omega_{\mathbf{v}} = (b'_1)$, as:

$$u_{joint}(b'_1 ; b_1) = b_1 \tilde{\alloc}(b'_1) - \tilde{\burn}(b'_1) = b_1 - \tilde{\burn}(b'_1) > b_1 - \tilde{\burn}(b_1) = b_1 \tilde{\alloc}(b_1) - \tilde{\burn}(b_1) = u_{joint}(b_1 ; b_1).$$

We conclude that the burn for all allocated values is some constant $R$. We now compare $R$ with the $r$ we have for the allocation rule. 

If $R < r$, compare $b_1 = \frac{R+r}{2}$ (where it holds that $R < b_1 < r$), with some $b'_1 > r$. I.e., consider the $1$-OCA where $\mathbf{v} = (b_1)$, and $\Omega_{\mathbf{v}} = (b'_1)$. Since $b_1 < r$, $\tilde{\alloc}(b_1) = 0$. We have:
$$u_{joint}(b'_1 ; b_1) = b_1 \tilde{\alloc}(b'_1) - \tilde{\burn}(b'_1) = b_1 - R > 0 = b_1 \tilde{\alloc}(b_1) - \tilde{\burn}(b_1) = u_{joint}(b_1 ; b_1).$$

If $R > r$, then for some $r < b_1 < R$, 
$$u_1(b_1 ; b_1) = b_1 \tilde{\alloc}(b_1) - \tilde{\pay}(b_1) \stackrel{\text{\cref{eq:BB}}}{\leq} b_1 \tilde{\alloc}(b_1) - \tilde{\burn}(b_1) = b_1 - R < 0,$$
which violates \cref{eq:IR_util}.

We conclude that $R = r$, which yields the specified characterization. 
\end{proof}

This allows us to further characterize the allocation and burn rules more generally, for deterministic $1$-OCA-proof mechanisms. 

\begin{lemma}
\label{lem:general_oca_char}
Any $1$-OCA-proof deterministic mechanism $\alloc, \pay, \burn$ is exactly of the following form: 
For some $r \geq 0$, the mechanism allocates the item to the highest bidder subject to it having higher value than $r$, or does not allocate the item at all. Whenever allocated, the burn is exactly $r$. 
I.e., $$\tilde{\alloc}(\mathbf{b}) = \begin{cases} \arg  \max \mathbf{b} & \max \mathbf{b} \geq r \\
\emptyset & Otherwise,\end{cases}, \tilde{\burn}(\mathbf{b}) = r.$$
\end{lemma}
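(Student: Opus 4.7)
The plan is to build on Lemma~\ref{lem:oca_single_bidder}, which yields the reserve $r \geq 0$ from the single-bidder case, and extend the characterization to multi-bidder inputs via carefully chosen $1$-OCA deviations. The key technique is that the highest bidder (WLOG bidder~$1$ by anonymity) and the miner can always form a size-one coalition and omit all other bids, yielding a deviation $\Omega = (b_1, 0, \ldots, 0)$ with $c_{\Omega} = 0$; this ``reduces'' the multi-bidder case to analyzing the mechanism on a single real bid padded by fake zeros.

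The first step is to characterize the mechanism on inputs of the form $(b, 0, \ldots, 0) \in R_+^n$. Applying $1$-OCA with single-bidder canonical $\mathbf{v} = (b)$ and deviation $\Omega = (b, 0, \ldots, 0)$ (the miner adds $n - 1$ fake zero bids) and coalition $\{1\}$: if the mechanism allocates at $\Omega$ — necessarily to bidder~$1$, since a $0$-bidder has zero payment by individual rationality and hence zero burn by burn-balance — the coalition utility $b - \beta^*$ must be at most the single-bidder joint $\max(0, b - r)$, forcing $\beta^* \geq r$ when $b \geq r$. Applying $1$-OCA in the reverse direction (canonical $\mathbf{v}' = (b, 0, \ldots, 0)$ with deviation $\Omega' = (r, 0, \ldots, 0)$, coalition $\{1\}$, $c_{\Omega'} = 1$), combined with individual rationality at the boundary bid $r$, pins down $\beta^* = r$.

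For a general $\mathbf{b}$ with $b_1 = \max \mathbf{b} > r$, apply $1$-OCA with $\mathbf{v} = \mathbf{b}$ and deviation $\Omega = (b_1, 0, \ldots, 0)$, coalition $\{1\}$. By the preceding step the coalition utility equals $b_1 - r > 0$, so the canonical must allocate (otherwise $0 \geq b_1 - r$ fails). If it allocates to $i^*$ with burn $\beta$, then $b_{i^*} - \beta \geq b_1 - r$ yields $\beta \leq r + b_{i^*} - b_1 \leq r$. In the opposite direction, apply $1$-OCA with single-bidder canonical $\mathbf{v}' = (b_{i^*})$ and deviation $\Omega' = \mathbf{b}$ (treating $\mathbf{b}_{-i^*}$ as fake bids), coalition $\{i^*\}$: when the winner under $\Omega'$ is the real $i^*$, the coalition utility is $b_{i^*} - \beta$, forcing $\beta \geq r$ when $b_{i^*} \geq r$ and $\beta = b_{i^*}$ otherwise by individual rationality. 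Combining these bounds with $b_{i^*} \leq b_1$ rules out $b_{i^*} < r$ (which would give the contradiction $b_1 \leq r$), hence $b_{i^*} = b_1$ and $\beta = r$.

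The main obstacle will be the first step together with the case $\max \mathbf{b} \leq r$: pinning down the burn on $(b, 0, \ldots, 0)$ to exactly $r$ for $b \geq r$, and forcing non-allocation when $\max \mathbf{b} < r$, both require chaining $1$-OCA applications in both directions — since the mechanism may a priori behave differently for different input lengths. The non-allocation case in particular requires chaining deviations that reduce the winner's bid while omitting the other bids, and invoking the $(b, 0, \ldots, 0)$ characterization together with individual rationality to force strictly positive coalition utility unless the canonical does not allocate, which closes the argument.
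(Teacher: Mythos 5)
Your forward-direction argument follows essentially the same route as the paper's: anchor everything in the single-bidder characterization of \cref{lem:oca_single_bidder}, then use omission, fake-bid, and single-bid-change deviations to force the multi-bidder allocation to go to the highest bidder with burn exactly $r$. Your handling of $\max \mathbf{b} > r$ is correct --- the two-sided bounds $\beta \leq r + b_{i^*} - b_1$ and $\beta \geq r$ reproduce the paper's paired deviations (miner removes all other bidders / miner adds fake bidders to a single-bidder canonical), and your elimination of the sub-case $b_{i^*} < r$ via $b_1 \leq r$ is sound. However, there are two genuine gaps. First, the lemma is an \emph{exact} characterization and the paper proves both directions; you never address the converse, that every mechanism of the stated form is $1$-OCA-proof. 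The argument is short (any deviation either leaves the item unallocated, or allocates it to some agent $j$ with constant burn $r$ for joint utility $v_j - r \leq \max \mathbf{v} - r$, so the canonical outcome is always weakly best), but it is part of the claim and must appear.

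Second, your treatment of $\max \mathbf{b} < r$ is not just incomplete --- the chaining you describe points in the wrong direction. Taking $\mathbf{b}$ as the canonical profile and deviating by ``reducing the winner's bid while omitting the other bids'' lands in a single-bidder profile below $r$, whose joint utility is $0$; since the canonical joint utility $b_{i^*} - \beta$ is nonnegative by individual rationality and burn-balance, no violation arises, so this deviation proves nothing. The correct move (the paper's) runs the deviation the other way: take as canonical a \emph{single} bidder with some bid $b' > r$, whose joint utility is $b' - r$, and let the coalition $\{i^*\}$ change that bid to $b_{i^*} < r$ while the miner adds $\mathbf{b}_{-i^*}$ as fake bids. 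If the item were allocated at $\mathbf{b}$, the real bidder wins with burn $\beta \leq b_{i^*} < r$, so the coalition's utility is $b' - \beta > b' - r$, violating $1$-OCA-proofness. Relatedly, your first-step deviation to $(r, 0, \ldots, 0)$ presumes the mechanism allocates at that boundary profile, which you have not established; deviating by \emph{omitting} the padding bids (reducing to the already-characterized single-bidder case, as \cref{def:OCA} explicitly permits) avoids this issue entirely.
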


\begin{proof}
($\implies$ All $1$-OCA-proof mechanisms must be of the form specified)

The single bidder case follows the characterization in \cref{lem:oca_single_bidder}. Now consider a general valuation vector $\mathbf{v}$ that is allocated to agent $i$ with burn $R$. It must hold that $R=r$, or we would have a $1$-OCA between agent $i$ and the miner, where the miner either removes all the other bidders (if $R > r$), or adds fake bidders in the single bidder case (if $r > R$). We conclude that whenever the item is allocated, the burn must be exactly $r$. Now consider if the item is allocated to agent $j$ so that $j \neq \arg \max \mathbf{v}$ (and let $i = \arg \max \mathbf{v}$). Then, the joint utility is exactly $v_j - r$. But then there is a $1$-OCA between agent $i$ and the miner where the miner removes all bidders besides $i$ (including $j$), and we have a joint utility of $v_i - r$. Thus, we conclude that \textit{if} the item is allocated, it must be allocated to the highest bidder. 

The item cannot be allocated if the highest bidder $i$ has $b_i < r$, because then the burn must be lower than $r$, which allows for a $1$-OCA (for example for a single bidder with $b'_i > r$ and where the miner adds fake bids of $\mathbf{b}_{-i}$). If $b_i > r$, then the item must be allocated, or there is a $1$-OCA where the miner omits all other bids and the utility is $b_i - r$ by the single bidder case. 

($\implies$ All auctions of the specified form are $1$-OCA-proof)

Consider $\mathbf{v}$ so that the item is allocated and let $i = \arg \max \mathbf{v}$. Then, the joint utility is $v_i -r \geq 0$, and so moving to another set of bids $\mathbf{b}$ where the item is not allocated reduces the joint utility to $0$. Moving to $\mathbf{b}$ so that the item is allocated to another agent $j$ results in joint utility $v_j - r$ (since the burn is constantly $r$), which is also reduced. Consider $\mathbf{v}$ so that the item is not allocated, then moving to another set of bids $\mathbf{b}$ so that the item is not allocated does not change the joint utility, and if the item \textit{is} allocated to some agent $j$, the joint utility is $v_j - r \leq v_i - r \leq 0$ (as we know $v_i \leq r$ since the item is not allocated in the original $\mathbf{v}$). 

\end{proof}

We now can precisely characterize two classes of mechanisms: The class of DSIC+$1$-OCA-proof deterministic mechanisms, and the class of MMIC+$1$-OCA-proof deterministic mechanisms.

\begin{restatable}[]{theorem}{DSICOCA}
\label{thm:dsic_1oca}
The class of DSIC+$1$-OCA-proof deterministic mechanisms is exactly the class of all auctions with reserve $r$ that allocate to the highest bidder and which burn $r$ from the allocated bid.
I.e., for some $r\geq 0$, the allocation rule is $\tilde{\alloc}(\mathbf{b}) = \begin{cases} \arg  \max \mathbf{b} & \max \mathbf{b} \geq r, \\
\emptyset & Otherwise\end{cases}$, the payment rule is $\tilde{\pay}(\mathbf{b}) = \max (
\{ r\} \cup (\mathbf{b} \setminus \{\max \mathbf{b}\}))$, and the burn rule is $\tilde{\burn}(\mathbf{b}) = r$.
\end{restatable}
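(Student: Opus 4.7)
The plan is to prove the theorem as a two-direction characterization, using Lemma~\ref{lem:general_oca_char} to pin down the allocation and burn rules of every $1$-OCA-proof deterministic mechanism, and then layering Myerson's lemma (Fact~\ref{fct:MyersonLemma}) on top to pin down the payment rule once DSIC is added. The bulk of the work has essentially been done already; the role of this theorem is to piece these two characterizations together and check that the basic auction properties (IR and burn-balance) are preserved.

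For the forward direction (any DSIC+$1$-OCA-proof deterministic mechanism has the stated form), I would first invoke Lemma~\ref{lem:general_oca_char} to conclude that there exists some $r \geq 0$ such that $\tilde{\alloc}(\mathbf{b}) = \arg\max \mathbf{b}$ if $\max \mathbf{b} \geq r$ and $\emptyset$ otherwise, and that $\tilde{\burn}(\mathbf{b}) = r$ whenever the item is allocated. Then I would fix any bidder $i$ and any profile $\mathbf{b}_{-i}$ and look at the one-dimensional allocation $\alloc(\cdot, \mathbf{b}_{-i})$: it is a $\{0,1\}$-valued step function with critical bid $z_i(\mathbf{b}_{-i}) = \max(\{r\} \cup \mathbf{b}_{-i})$ (modulo a measure-zero tie-breaking set, which does not affect the Myerson integral). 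Plugging into Myerson's formula $\pay(b_i, \mathbf{b}_{-i}) = \alloc(b_i, \mathbf{b}_{-i}) \cdot b_i - \int_0^{b_i} \alloc(t, \mathbf{b}_{-i}) \, dt$, the integral evaluates to $b_i - z_i(\mathbf{b}_{-i})$ when bidder $i$ wins, yielding $\tilde{\pay}(\mathbf{b}) = \max(\{r\} \cup (\mathbf{b} \setminus \{\max \mathbf{b}\}))$ as claimed.

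For the reverse direction, I would check DSIC directly from Myerson's lemma: the allocation rule is monotone in each $b_i$, and the prescribed payment rule matches the Myerson payment identity, so DSIC holds. To check $1$-OCA-proofness, I would invoke Lemma~\ref{lem:general_oca_char}, which tells us that the stated allocation and burn rules \emph{by themselves} guarantee $1$-OCA-proofness (the payment rule does not enter the joint-utility expression \eqref{eq:joint_util}, since payments cancel between the miner and the bidder). The only thing that must be checked is that the stated payment is actually admissible, i.e., that it satisfies IR and burn-balance: when bidder $i$ wins, $\tilde{\pay}(\mathbf{b}) \leq \max \mathbf{b} = b_i$ by construction so IR holds, and $\tilde{\pay}(\mathbf{b}) \geq r = \tilde{\burn}(\mathbf{b})$ so burn-balance holds.

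The main obstacle I anticipate is a small care point rather than a deep issue: handling the tie-breaking in the allocation rule $\arg\max \mathbf{b}$ when several bids equal the maximum (or equal $r$). In that case the allocation function $\alloc(\cdot, \mathbf{b}_{-i})$ is monotone but its jump point interacts with Myerson's integral only on a measure-zero set, so the Myerson payment formula gives the same answer regardless of the tie-breaking convention. A second, smaller subtlety is that $\tilde{\pay}$ may strictly exceed $\tilde{\burn}$, which is why the resulting miner revenue is non-zero in general for this class even though Theorem~\ref{thm:zero_revenue} ultimately forces zero miner revenue once MMIC is also imposed (the MMIC restriction is what collapses the intersection to the trivial mechanism in Theorem~\ref{thm:det_impossibility}).
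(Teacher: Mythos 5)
Your proposal is correct and follows essentially the same route as the paper's proof: both directions rest on Lemma~\ref{lem:general_oca_char} to fix the allocation and burn rules and on Myerson's lemma (Fact~\ref{fct:MyersonLemma}) to pin down the payment as the second-highest-bid-or-reserve. You simply carry out the Myerson integral and the IR/burn-balance admissibility checks more explicitly than the paper does, which is a harmless (and arguably welcome) elaboration rather than a different argument.
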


\begin{restatable}[]{theorem}{MMICOCA}
    
\label{thm:mmic_1oca}
The class of MMIC+$1$-OCA-proof deterministic mechanisms is exactly the class of all second-price auctions with reserve $r$, which always burn $r$ when allocated. I.e., for some $r\geq 0$, $$\tilde{\alloc}(\mathbf{b}) = \begin{cases} \arg  \max \mathbf{b} & \max \mathbf{b} \geq r \\
\emptyset & Otherwise,\end{cases}, \tilde{\pay}(\mathbf{b}) = f(\max \mathbf{b}) , \tilde{\burn}(\mathbf{b}) = r,$$

for some monotone function $f(v) \geq r$. 
\end{restatable}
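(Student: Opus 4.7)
The plan is to bootstrap from \cref{lem:general_oca_char}, which already pins down the allocation and burn rules for any $1$-\gls{OCA}-proof deterministic mechanism; what remains is to characterize the payment rule under the additional \gls{MMIC} assumption. Concretely, I aim to establish three properties of $\tilde{\pay}$ on bid vectors with $\max \mathbf{b} \geq r$: (a) $\tilde{\pay}(\mathbf{b}) \geq r$, (b) $\tilde{\pay}(\mathbf{b})$ depends only on $\max \mathbf{b}$, so it may be written as $f(\max \mathbf{b})$, and (c) $f$ is monotone non-decreasing.

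For the forward direction, property (a) is immediate from burn balance (\cref{eq:BB}) together with $\tilde{\burn}(\mathbf{b}) = r$. For (b), fix $\mathbf{b}$ with $\max \mathbf{b} = v \geq r$ and $v$ achieved uniquely (the tied-maximum case is handled by a short additional perturbation argument). Starting from singleton true bids $(v)$, the miner may append fake bids strictly below $v$ to obtain $\mathbf{b}$; since the real bidder still wins by \cref{lem:general_oca_char} and losing bids contribute zero payment and burn, the deviation revenue is $\tilde{\pay}(\mathbf{b}) - r$, so \gls{MMIC} forces $\tilde{\pay}(v) \geq \tilde{\pay}(\mathbf{b})$. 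Running the same argument in reverse---starting from true bids $\mathbf{b}$ and having the miner omit every non-max bid---yields the opposite inequality, so $\tilde{\pay}(\mathbf{b}) = \tilde{\pay}(v) =: f(v)$. For (c), given $r \leq v_1 < v_2$ and $\mathbf{v} = (v_2, v_1)$, the miner can omit $v_2$ to achieve deviation revenue $f(v_1) - r$ in place of the honest $f(v_2) - r$, so \gls{MMIC} forces $f(v_2) \geq f(v_1)$.

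For the converse direction, \gls{OCA}-proofness is immediate from \cref{lem:general_oca_char}, leaving only \gls{MMIC} to verify. Given true bids $\mathbf{v}$ with honest revenue $\max\{f(\max \mathbf{v}) - r, 0\}$, split any deviation $\mathbf{b} = \mu(\mathbf{v})$ into three cases: if the winning bid in $\mathbf{b}$ is a retained real bid, then its value is at most $\max \mathbf{v}$, so by monotonicity the miner's revenue is at most $f(\max \mathbf{v}) - r$; if the winning bid is fake, the miner collects a payment from itself but still incurs burn $r$, netting revenue $-r$; and if $\max \mathbf{b} < r$, the item is unallocated and revenue is $0$. In every case, the deviation is weakly dominated by honest behavior.

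I expect the main obstacle to be property (b): it is the step requiring \gls{MMIC} to be applied in both directions---fake-bid insertion and real-bid omission---and it exploits the critical fact inherited from \cref{lem:general_oca_char} that losing bids carry no payment or burn, so that a change in the bid multiset around the unique winner does not introduce spurious payment/burn terms. The remaining pieces, in particular the reverse direction, reduce to routine case-checking.
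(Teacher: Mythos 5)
Your proposal is correct and follows essentially the same route as the paper's proof: inherit the allocation and burn rules from \cref{lem:general_oca_char}, use MMIC omit/insert deviations to show the winner's payment depends only on the winning bid and is monotone, and verify the converse by casework on whether the post-deviation winner is real or fake. The only cosmetic difference is that you anchor the ``payment depends only on $\max\mathbf{b}$'' step at the singleton bid vector $(v)$, whereas the paper compares two arbitrary bid profiles with the same winner directly via a single omit-and-add-fakes deviation; the two arguments are interchangeable.
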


These precise characterizations now allow us to conclude with the following:
\begin{theorem}
\label{thm:det_impossibility}
Never allocating the item is the only DSIC+MMIC+$1$-OCA-proof deterministic mechanism. 
\end{theorem}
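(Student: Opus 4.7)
The plan is to intersect the two characterizations already in hand (Theorem~\ref{thm:dsic_1oca} for DSIC+$1$-OCA-proof mechanisms and Theorem~\ref{thm:mmic_1oca} for MMIC+$1$-OCA-proof mechanisms) and show that any mechanism lying in both classes must have its item never allocated. The conceptual point is that the DSIC payment rule from Myerson is forced to depend on the second-highest bid, whereas the MMIC payment rule must depend only on the winning bid (otherwise the miner could manipulate the second-highest via fake bids); these two requirements are in direct tension once any genuine allocation happens.

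Concretely, suppose $\auction = (\alloc,\pay,\burn)$ is DSIC, MMIC, and $1$-OCA-proof, and assume towards contradiction that $\auction$ is non-trivial, i.e., the item is allocated for at least one bid profile. First, I would invoke Theorem~\ref{thm:dsic_1oca} to obtain a reserve $r \geq 0$ such that the item is allocated to $\arg\max \mathbf{b}$ exactly when $\max \mathbf{b} \geq r$, with $\tilde{\pay}(\mathbf{b}) = \max(\{r\} \cup (\mathbf{b}\setminus\{\max \mathbf{b}\}))$ and $\tilde{\burn}(\mathbf{b}) = r$. Then I would invoke Theorem~\ref{thm:mmic_1oca} to obtain a reserve $r' \geq 0$ with the same allocation shape, payment $\tilde{\pay}(\mathbf{b}) = f(\max \mathbf{b})$ for some monotone $f \geq r'$, and burn $\tilde{\burn}(\mathbf{b}) = r'$. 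Matching the allocation and burn rules of the two characterizations immediately gives $r = r'$.

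The contradiction then comes from comparing the two payment rules on bid profiles with two bids above $r$. Fix any $b > r$ and consider two-bidder profiles $\mathbf{b} = (b, b_2)$ where $b_2$ varies in $(r, b)$. By Theorem~\ref{thm:dsic_1oca}, $\tilde{\pay}(\mathbf{b}) = \max(r, b_2) = b_2$, which varies continuously with $b_2$. By Theorem~\ref{thm:mmic_1oca}, $\tilde{\pay}(\mathbf{b}) = f(b)$, which is constant in $b_2$. Since $(r, b)$ has more than one element whenever $b > r$, these two expressions cannot agree, giving the desired contradiction and ruling out any non-trivial mechanism.

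Finally, I would verify that the trivial mechanism (never allocating, zero payment, zero burn) does satisfy all three desiderata: DSIC and MMIC hold because every agent's utility is identically zero, and $1$-OCA-proofness follows from Lemma~\ref{lem:oca_joint} since the joint utility on both sides of \eqref{eq:general_oca} is zero. I do not anticipate a serious obstacle: the two characterization theorems do essentially all the work, and the only subtlety is making sure the two-bidder scenario used to collide the DSIC and MMIC payment forms is actually a valid profile under both characterizations, which is immediate because $\max \mathbf{b} = b > r$ triggers allocation in both.
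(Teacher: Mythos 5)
Your proposal is correct and follows essentially the same route as the paper: the paper's proof also simply intersects the characterizations of \cref{thm:dsic_1oca} and \cref{thm:mmic_1oca} and observes that only the trivial mechanism ($r=\infty$) lies in both classes. Your two-bidder payment-collision argument (payment must equal $b_2$ by the DSIC form but $f(\max\mathbf{b})$ by the MMIC form) is a valid and slightly more explicit spelling-out of the same incompatibility the paper conveys via its fake-bid intuition.
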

\begin{proof}
This follows from \cref{thm:dsic_1oca} and \cref{thm:mmic_1oca}, as the two classes characterized in these results only have the trivial mechanism in common (taking $r = \infty$).
To intuitively see this, consider the class of second-price auctions with reserve $r$ and constant burn $r$ of \cref{thm:dsic_1oca}.
Second-price auctions are not MMIC since the miner can add a fake bidder arbitrarily close to the winning bid to increase the payment. 
\end{proof}

\section{Randomized OCA-proof Mechanisms}
%\yotam{Is it true that with randomized mechanisms it makes more sense to talk about OCA, instead of $1$-OCA, because the winner may be randomized and so the side agreement should include all? Otherwise there are all sorts of edge cases }

We now extend the discussion to \textit{randomized} OCA-proof mechanisms.
% Unlike the deterministic case, we are not able to provide a complete characterization of the space. However, we provide some insights and analysis.
For randomized mechanisms, we consider the stronger notion of OCA-proofness (rather than $1$-OCA-proofness).
We do so to avoid clutter in the definitions, as in randomized mechanisms the winning coalition may very well \textit{necessarily} include \textit{all} bidders (as each has some fractional probability of winning). 

We now consider a natural property for mechanisms:
\begin{definition}[Scale Invariance]
We say that a mechanism $a, p, \beta$ is scale-invariant if for any set of bids $\mathbf{b}$, and a constant $\alpha > 0$, it holds that $a(b_i, \mathbf{b}_{-i}) = a(\alpha b_i, \alpha \mathbf{b}_{-i})$.
I.e., re-scaling all the bids by the same factor does not alter the allocation.
\end{definition}

%\yotam{Do we also need anonymous? It seems we didn't really use it..? Or did we?}

\begin{restatable}[]{claim}{homogeneousPayments}
For a DSIC, scale-invariant mechanism $\alloc, \pay, \burn$, it holds that: 
$$p(\alpha b_i, \alpha \mathbf{b}_{-i}) = \alpha p(b_i, \mathbf{b}_{-i}).$$
\end{restatable}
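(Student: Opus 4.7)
The plan is to apply Myerson's lemma (Fact~\ref{fct:MyersonLemma}) to the DSIC mechanism at both the scaled and unscaled bid vectors, and then use scale-invariance of the allocation to identify the two payment expressions up to a factor of $\alpha$.

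Concretely, I would first write out Myerson's formula at $(\alpha b_i, \alpha \mathbf{b}_{-i})$:
\begin{equation*}
\pay(\alpha b_i, \alpha \mathbf{b}_{-i}) = \alloc(\alpha b_i, \alpha \mathbf{b}_{-i}) \cdot \alpha b_i - \int_0^{\alpha b_i}\alloc(t, \alpha \mathbf{b}_{-i})\,dt.
\end{equation*}
Scale invariance immediately gives $\alloc(\alpha b_i, \alpha \mathbf{b}_{-i}) = \alloc(b_i, \mathbf{b}_{-i})$, handling the first term. For the integral, I would perform the substitution $t = \alpha s$, so that $dt = \alpha\, ds$ and the limits become $0$ and $b_i$:
\begin{equation*}
\int_0^{\alpha b_i}\alloc(t, \alpha \mathbf{b}_{-i})\,dt = \alpha \int_0^{b_i}\alloc(\alpha s, \alpha \mathbf{b}_{-i})\,ds = \alpha \int_0^{b_i}\alloc(s, \mathbf{b}_{-i})\,ds,
\end{equation*}
where the second equality is again scale invariance, applied pointwise inside the integrand. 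Combining the two pieces yields $\pay(\alpha b_i, \alpha \mathbf{b}_{-i}) = \alpha\pay(b_i, \mathbf{b}_{-i})$, as desired.

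There is no real obstacle here: the statement is essentially an exercise in chasing Myerson's explicit formula through a change of variables, and the only subtle point is making sure to apply scale invariance \emph{inside} the integral (i.e., for every $s \in [0, b_i]$, not merely at the endpoint). The argument is self-contained; it does not require any additional hypotheses beyond DSIC (for Myerson's formula to hold) and scale invariance (to control the allocation). Note also that, since $\alpha > 0$, monotonicity of $\alloc(\cdot, \alpha \mathbf{b}_{-i})$ in its first argument (another consequence of DSIC) is preserved under the rescaling, so there are no issues with the Myerson representation at the scaled instance.
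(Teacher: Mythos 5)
Your proposal is correct and matches the paper's own proof essentially step for step: both apply Myerson's payment formula at the scaled bid vector, use scale invariance on the allocation term, and perform the substitution $t = \alpha s$ followed by a pointwise application of scale invariance inside the integral. No differences worth noting.
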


\begin{lemma}
\label{lem:zero_burns}
For any set of bids $\mathbf{b}$, a scale-invariant mechanism that is both \gls{DSIC} and \gls{OCA}-proof has $\beta(b_i, \mathbf{b}_{-i}) = 0$.
\end{lemma}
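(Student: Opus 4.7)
The plan is to derive a contradiction from any hypothetical positive burn by letting all $n$ bidders collude with the miner to uniformly rescale their bids toward zero. Fix an arbitrary bid vector $\mathbf{v}$ and consider the off-chain agreement $\Omega_{\mathbf{v}} = \alpha \mathbf{v}$ for $\alpha \in (0,1)$, executed by the coalition $C = [n]$. This is an admissible OCA because $c_{\Omega_{\mathbf{v}}} \leq |C| = n$, and OCA-proofness without a $c$ qualifier quantifies over all coalition sizes.

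I would then apply \cref{lem:oca_joint} to get $u_{joint}(\mathbf{v}; \mathbf{v}) \geq u_{joint}(\alpha \mathbf{v}; \mathbf{v})$. Scale invariance gives $\alloc(\alpha v_i, \alpha \mathbf{v}_{-i}) = \alloc(v_i, \mathbf{v}_{-i})$ for every $i$, which kills the welfare terms $v_i \alloc(\cdot)$ on both sides and collapses the inequality to $\sum_i \burn(\alpha v_i, \alpha \mathbf{v}_{-i}) \geq \sum_i \burn(v_i, \mathbf{v}_{-i})$.

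Next I would bound each term on the left using burn balance (\cref{eq:BB}), followed by the preceding homogeneous-payments claim, to obtain $\burn(\alpha v_i, \alpha \mathbf{v}_{-i}) \leq \pay(\alpha v_i, \alpha \mathbf{v}_{-i}) = \alpha \pay(v_i, \mathbf{v}_{-i})$. Summing over $i$ and chaining the two inequalities yields $\sum_i \burn(v_i, \mathbf{v}_{-i}) \leq \alpha \sum_i \pay(v_i, \mathbf{v}_{-i})$. Since $\alpha \in (0,1)$ is arbitrary, sending $\alpha \to 0^+$ forces $\sum_i \burn(v_i, \mathbf{v}_{-i}) \leq 0$, and combined with non-negativity of burns (\cref{eq:BB}) this gives $\burn(v_i, \mathbf{v}_{-i}) = 0$ for each $i$. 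As $\mathbf{v}$ was arbitrary, the lemma follows.

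I do not foresee any real obstacle: the two main inputs, scale-invariant cancellation of the welfare terms and the homogeneous scaling of payments under DSIC, slot together cleanly. The only thing worth double-checking is that an all-bidder coalition is permitted by OCA (not merely $1$-OCA), which is immediate from the definition once one notices the lemma assumes the unqualified OCA-proofness.
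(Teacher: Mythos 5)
Your proof is correct and follows essentially the same route as the paper's: both use the off-chain agreement in which all bidders uniformly rescale their bids by a small factor $\alpha$, note that scale invariance leaves the allocation (and hence the welfare terms of the joint utility) unchanged, and conclude that the scaled-down burn contradicts OCA-proofness. The only cosmetic difference is that you bound the rescaled burn via burn balance and the homogeneous-payments claim before sending $\alpha \to 0^+$, whereas the paper bounds the burn directly by the (rescaled) bids and picks one explicit $\alpha = \bar{\beta}/(2\sum_i b_i)$; both are valid.
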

\begin{proof}
Assume toward contradiction that there is a set of bids $\mathbf{b}$ such that $\bar{\beta} \stackrel{def}{=} \sum_i \beta(b_i, \mathbf{b}_{-i}) > 0$.
The burn is upper-bounded by the bid itself, i.e. $\beta(b_i, \mathbf{b}_{-i}) < b_i$, due to both individual rationality and the \gls{DSIC} property.
Thus, if we choose $\alpha = \frac{\bar{\beta}}{2\sum_i b_i}$, we are guaranteed that $\sum_i \beta(\alpha b_i, \alpha \mathbf{b}_{-i}) \leq \sum_i \alpha b_i = \frac{1}{2} \bar{\beta}$.
Since the allocation is the same by the scale-invariance property, the joint utility must increase due to lower burns.
This violates \gls{OCA}-proofness, where we consider the off-chain agreement that all bidders scale down their bids by factor $\alpha$. 
\end{proof}

\begin{corollary}
    \label{cor:scale_invariant_payment}
    By \cref{lem:zero_burns}, a DSIC+OCA-proof scale-invariant mechanism does not burn fees (i.e., its burn rule is the constant zero function), while from \cref{lem:SingleBidderRevenue} we get that a DSIC+MMIC+OCA-proof mechanism has payments equal to the burn in the single bidder case. Therefore, we must have $0$ payments in the single bidder case, and so, in the single bidder case, the item is either always or never allocated.
\end{corollary}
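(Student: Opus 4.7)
The plan is to string together the two referenced lemmas and then apply scale invariance to collapse the single-bidder allocation rule. First, I would invoke \cref{lem:zero_burns} to observe that since the mechanism is DSIC, OCA-proof and scale-invariant, the burn rule is identically zero, so in particular $\beta(b_1) = 0$ for every single-bidder input $b_1$. Second, I would invoke \cref{lem:SingleBidderRevenue} to conclude that a DSIC + MMIC + OCA-proof mechanism has zero miner revenue in the single-bidder case; since miner revenue there equals $p(b_1) - \beta(b_1)$, combining with $\beta \equiv 0$ forces $p(b_1) = 0$ for every single-bidder input.

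The next step is to turn the vanishing of payments into a statement about the allocation rule, via scale invariance. For any two positive bids $b_1, b_1' > 0$, taking $\alpha = b_1'/b_1 > 0$ gives $a(b_1) = a(\alpha b_1) = a(b_1')$, so $a$ is a single constant $c \in [0,1]$ on $(0, \infty)$. This is exactly the ``always or never'' dichotomy asserted by the corollary: either $c = 0$, in which case the item is never allocated to the single bidder, or $c > 0$, in which case the item is allocated with the same fixed probability regardless of the bid value.

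I do not expect a significant obstacle: the corollary is essentially a bookkeeping combination of the two preceding lemmas together with an elementary use of the scale-invariance axiom. The one mild subtlety worth flagging is that scale invariance alone does not pin $c$ down to $\{0,1\}$; Myerson's formula $p(b_1) = a(b_1)\,b_1 - \int_0^{b_1} a(t)\,dt$ is trivially compatible with any constant $c \in [0,1]$ once payments and burns vanish, so the ``always'' half of the dichotomy should be read as ``allocated with the same constant probability for every positive bid'', with stronger constraints on $c$ presumably coming from subsequent results that also exploit OCA-proofness in the multi-bidder regime.
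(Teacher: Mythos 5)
Your proposal is correct and follows the paper's reasoning exactly: \cref{lem:zero_burns} kills the burn, \cref{lem:SingleBidderRevenue} equates payment with burn in the single-bidder case, hence payments vanish, and the single-bidder allocation is then a bid-independent constant (you derive this from scale invariance directly; the paper's ``and so'' can equally be read as Myerson's identity $p(b_1)=a(b_1)\cdot b_1-\int_0^{b_1}a(t)\,dt=0$ forcing $a$ to be constant on $(0,\infty)$ --- the two routes are interchangeable one-liners here). The subtlety you flag is well taken and is, strictly speaking, a looseness in the corollary as written: nothing in the two cited lemmas excludes a constant allocation probability $c\in(0,1)$, whereas the ``always'' branch of \cref{lem:always_or_never_allocated} genuinely needs $c=1$ (its proof uses $u_{joint}(\max\mathbf{v};\mathbf{v})=\max\mathbf{v}$, which fails for $c<1$). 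Ruling out intermediate constants requires an additional multi-bidder argument of the kind used for \cref{thm:ctpa} (a constant, bid-independent allocation probability combined with OCA-proofness and Myerson payments reintroduces a dependence of the payment on the second-highest bid, contradicting MMIC); as stated, the corollary's conclusion should be read as ``allocated with a bid-independent constant probability,'' with the extremal cases $c\in\{0,1\}$ being what the downstream lemma actually consumes.
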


\begin{lemma}
\label{lem:always_or_never_allocated}
For a DSIC+MMIC+OCA-proof mechanism, if the item is always or never allocated in the single bidder case, the mechanism must be trivial.
\end{lemma}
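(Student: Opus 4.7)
The plan is to branch on the hypothesis. Write $\alloc_{\text{single}}, \pay_{\text{single}}, \burn_{\text{single}}$ for the single-bidder rules, and treat the two branches separately.

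\textbf{Always allocated.} Since $\alloc_{\text{single}}(v) = 1$, Myerson's lemma (\cref{fct:MyersonLemma}) gives $\pay_{\text{single}}(v) = v - \int_0^v 1\,dt = 0$, whence burn balance forces $\burn_{\text{single}}(v) = 0$, so the single-bidder joint utility is $v$. For any multi-bidder profile $\mathbf{v}$ with maximizer $i^*$, the OCA where the miner and $i^*$ omit every other bid has joint utility $v_{i^*}$, so by \cref{lem:oca_joint} we get $u_{joint}(\mathbf{v};\mathbf{v}) \geq \max\mathbf{v}$, matching the upper bound of \cref{clm:simple_joint_bound}. Equating the two and using $\sum_i \alloc_i \leq 1$, non-negativity of burns, and (generically) the uniqueness of the maximizer forces $\alloc_{i^*} = 1$, $\alloc_j = 0$ for $j \neq i^*$, and every burn zero. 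Myerson's formula then pins the winner's payment to the second-highest bid, so the mechanism is the plain second-price auction. This violates MMIC: given a single real bidder $v_1$, the miner adds a fake bid $b \in (0, v_1)$ to raise its revenue from $0$ to $b > 0$, since the fake loses (hence pays and burns nothing) while the real bidder now pays $b$. This branch is therefore vacuous.

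\textbf{Never allocated.} The single-bidder joint utility is $0$. Fixing any auxiliary vector $\mathbf{b}_{-1}$, I would apply OCA-proofness to the collusion consisting of a single real bidder with value $v$ together with miner-added fakes $\mathbf{b}_{-1}$; \cref{lem:oca_joint} gives
\[ v\,\alloc(v,\mathbf{b}_{-1}) \leq \burn(v,\mathbf{b}_{-1}) + \sum_j \burn(b_j, \mathbf{b}_{-j} \cup \{v\}). \]
Substitute burn balance $\burn \leq \pay$ and Myerson's formula $\pay(b, \mathbf{b}') = b\,\alloc(b, \mathbf{b}') - \int_0^b \alloc(t, \mathbf{b}')\,dt$ at every slot. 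The $v\,\alloc(v,\mathbf{b}_{-1})$ terms cancel, the fake-slot integrals drop as they are non-negative, and feasibility gives $\sum_j b_j\,\alloc(b_j, \cdot) \leq \max_j b_j$, leaving
\[ \int_0^v \alloc(t, \mathbf{b}_{-1})\,dt \leq \max_j b_j, \]
a bound independent of $v$. By Myerson monotonicity, $\alloc(\cdot, \mathbf{b}_{-1})$ is non-decreasing, so $L \define \lim_{v\to\infty} \alloc(v, \mathbf{b}_{-1})$ exists; if $L > 0$ the integral diverges, so $L = 0$, and monotonicity then propagates this to $\alloc(\cdot, \mathbf{b}_{-1}) \equiv 0$. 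Since $\mathbf{b}_{-1}$ was arbitrary, the mechanism is trivial.

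The chief obstacle is spotting the integral bound in the ``never allocated'' branch: the OCA inequality, once combined with burn balance and the DSIC payment formula, leaves an upper bound on $\int_0^v \alloc$ that is uniform in $v$, and Myerson monotonicity is what converts this limiting statement into the pointwise conclusion $\alloc \equiv 0$. MMIC is used only to close the ``always allocated'' branch; OCA and DSIC alone suffice for the other.
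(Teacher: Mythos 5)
Your proof is correct, and it diverges from the paper's own proof in a way worth noting. In the ``always allocated'' branch you follow essentially the paper's route --- sandwich the joint utility between the upper bound of \cref{clm:simple_joint_bound} and the OCA in which the miner and the top bidder drop everyone else, conclude equality, and hence that the item goes deterministically to the highest bidder with zero burn and (by Myerson) a second-price payment --- but you close the branch with a direct MMIC violation (the miner planting a losing fake bid beneath a lone real bidder), whereas the paper instead invokes \cref{thm:det_impossibility}; both are valid, and yours is more self-contained. The substantive difference is the ``never allocated'' branch. The paper eliminates the burns by citing \cref{lem:zero_burns}, which is proved only for \emph{scale-invariant} mechanisms, so its argument quietly imports a hypothesis the lemma statement does not carry. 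You instead combine the OCA against the zero-utility single-bidder outcome with burn balance, Myerson's payment formula, and feasibility to obtain the uniform bound $\int_0^v \alloc(t,\mathbf{b}_{-1})\,dt \leq \max_j b_j$, and then let $v\to\infty$ against Myerson monotonicity to force $\alloc(\cdot,\mathbf{b}_{-1})\equiv 0$. This integral-divergence argument uses only DSIC and OCA-proofness, so it establishes the branch at the stated level of generality and, as you observe, without MMIC. The only cosmetic caveat is the ``generically'' in the first branch: tied maximizers need a word, but since the MMIC counterexample only requires a single profile with a strict maximizer, nothing is lost.
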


\begin{proof}
First, consider the case where the item is always allocated in the single bidder case. Then, by an earlier claim, for any vector of valuations $\mathbf{v}$, 

$$u_{joint}(\mathbf{v} ; \mathbf{v}) \stackrel{\text{Claim~\ref{clm:simple_joint_bound}}}{\leq} \max \mathbf{v} = u_{joint}(\max \mathbf{v} ; \mathbf{v}). $$

If the inequality is strict, this would violate the OCA condition of \cref{lem:oca_joint}. Thus, we must always have $$u_{joint}(\mathbf{v} ; \mathbf{v}) = \max \mathbf{v},$$
i.e., the item must be allocated in full to the highest bidder. However, this brings us back to the deterministic discussion, and the impossibility of \cref{thm:det_impossibility}.

Now consider the case where the item is never allocated in the single bidder case. If the item has non-zero allocation for some bidder $i$ with some valuation vector $\mathbf{v}$, then

$$u_{joint}(\mathbf{v} ; v_i) \stackrel{\text{\cref{eq:joint_util}}}{=} v_i \alloc(v_i, \mathbf{v}_{-i}) - \beta(v_i, \mathbf{v}_{-i}) - \sum_{j\neq i} \beta(v_j, \mathbf{v}_{-j}) \stackrel{\text{\cref{lem:zero_burns}}}{=} v_i \alloc(v_i, \mathbf{v}_{-i}) > 0 = u_{joint}(v_i ; v_i),$$

violating the OCA-proofness condition of \cref{lem:oca_joint}.
\end{proof}

Thus, as a direct result of Corollary~\ref{cor:scale_invariant_payment} and \cref{lem:always_or_never_allocated}, we have:

\begin{corollary}
\label{thm:scale_invariant}
There is no non-trivial scale-invariant DSIC+MMIC+OCA-proof mechanism.   
\end{corollary}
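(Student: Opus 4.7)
The plan is to derive the corollary as an immediate consequence of Corollary~\ref{cor:scale_invariant_payment} together with Lemma~\ref{lem:always_or_never_allocated}, both already established above. The structure is a two-step reduction: scale invariance plus the standard desiderata force a degenerate single-bidder behavior, and that degeneracy is then leveraged via OCA-proofness to pin down the entire mechanism.

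First I would invoke Corollary~\ref{cor:scale_invariant_payment} to conclude that in the single-bidder case the item is either always or never allocated. The reasoning underlying that corollary chains Lemma~\ref{lem:zero_burns} --- which uses the rescaling trick (multiplying all bids by a small $\alpha$) to show that under scale invariance and OCA-proofness the aggregate burn must vanish, since otherwise the coordinated downscaling leaves the allocation unchanged while strictly shrinking burns and hence improving joint utility --- with Lemma~\ref{lem:SingleBidderRevenue}, which forces payments to equal burns in the single-bidder DSIC+OCA-proof case. Together these pin the single-bidder payment rule to the identically zero function, and by Myerson's characterization (Fact~\ref{fct:MyersonLemma}) and individual rationality the single-bidder allocation rule must therefore take only the values $0$ or $1$, with no threshold crossing in between (otherwise a positive-payment region would be forced). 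Hence the item is either always allocated or never allocated.

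Second I would invoke Lemma~\ref{lem:always_or_never_allocated}, which disposes of both cases. If the item is never allocated in the single-bidder case, then any positive multi-bidder allocation to some $i$ under profile $\mathbf{v}$ yields $u_{joint}(\mathbf{v};v_i) = v_i \alloc(v_i,\mathbf{v}_{-i}) > 0 = u_{joint}(v_i;v_i)$ (burns are zero by Lemma~\ref{lem:zero_burns}), violating OCA-proofness via the deviation that drops to the single-bidder input. If the item is always allocated, Claim~\ref{clm:simple_joint_bound} combined with OCA-proofness forces $u_{joint}(\mathbf{v};\mathbf{v}) = \max \mathbf{v}$ for every $\mathbf{v}$, which in turn pins the allocation to hand the item fully to the highest bidder with zero burn; this reduces the problem to the deterministic setting where Theorem~\ref{thm:det_impossibility} supplies the impossibility. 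The real technical content --- the rescaling argument converting scale invariance into a zero-burn constraint --- has already been done in the preceding lemmas, so the corollary itself is essentially a packaging step, and I do not anticipate any serious obstacle beyond carefully invoking each prerequisite in order.
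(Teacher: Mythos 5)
Your proposal is correct and follows essentially the same route as the paper, which derives the corollary as an immediate consequence of Corollary~\ref{cor:scale_invariant_payment} and Lemma~\ref{lem:always_or_never_allocated}. The additional recap you give of the rescaling argument and the two-case analysis matches the reasoning already contained in those prerequisite results.
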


The argument we use in \cref{lem:always_or_never_allocated} can be extended to allow us to also rule out the class of auctions that satisfy a property that we call \emph{\gls{CTPA}}, which is defined in \cref{def:CTPA}. 
This is an interesting class of auctions, as it includes all \textit{efficient} auctions (that are part of the class of constant total probability $1$ of allocation), including the first-price and second-price auctions.
% \avivy{The proof the class of efficient mechanisms is contained in this class is immediate, but maybe worth writing.}
% 
\begin{definition}[Constant Total Probability of Allocation]
\label{def:CTPA}
We say a mechanism $a, p, \beta$ is \gls{CTPA} if for any set of bids $\mathbf{b}$, we have 
$\sum_i a(b_i, \mathbf{b}_{-i}) = \alpha$
for some constant $0 \leq \alpha \leq 1$. 
\end{definition}
As we show in \cref{thm:ctpa}, the \gls{CTPA} property also suffices to rule out any mechanism other than the trivial one. The proof uses similar ideas as these of Lemma~\ref{lem:always_or_never_allocated}.
\begin{restatable}[]{theorem}{CTPA}
\label{thm:ctpa}
There is no non-trivial CTPA, DSIC+MMIC+OCA-proof mechanism. 
\end{restatable}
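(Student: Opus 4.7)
The plan is to mirror the structure of Corollary~\ref{cor:scale_invariant_payment} and Lemma~\ref{lem:always_or_never_allocated}, but with CTPA taking over the role that scale-invariance played in pinning down single-bidder behavior. The first move is to apply CTPA in the single-bidder case, which directly yields $\alloc(b_1) = \alpha$ for some constant $\alpha \in [0,1]$, independent of $b_1$. If $\alpha = 0$, then by non-negativity the total allocation is zero on every bid vector and the mechanism is trivial, so I assume $\alpha > 0$ aiming for a contradiction. Applying Myerson's lemma (Fact~\ref{fct:MyersonLemma}) in the single-bidder case gives $\pay(b_1) = \alpha b_1 - \int_0^{b_1}\alpha\, dt = 0$, and burn-balance then forces $\burn(b_1) = 0$ (which also agrees with Lemma~\ref{lem:SingleBidderRevenue}). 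In particular, the single-bidder joint utility equals $\alpha b_1$.

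Next, I would combine this single-bidder picture with OCA-proofness to degenerate the multi-bidder case. For any $\mathbf{v}$, the off-chain agreement that retains only the bid $\max\mathbf{v}$ and omits all others lands us in the single-bidder regime, so Lemma~\ref{lem:oca_joint} gives $u_{joint}(\mathbf{v};\mathbf{v}) \geq \alpha\max\mathbf{v}$. On the other hand, CTPA together with $v_i \leq \max\mathbf{v}$ yields $\sum_i v_i\,\alloc(v_i,\mathbf{v}_{-i}) \leq \max\mathbf{v}\cdot\sum_i \alloc(v_i,\mathbf{v}_{-i}) = \alpha\max\mathbf{v}$. Sandwiching these bounds against the joint-utility identity and invoking burn-balance $\burn\geq 0$ forces simultaneously (i) $\sum_i \burn(v_i,\mathbf{v}_{-i}) = 0$ on every bid vector, and (ii) $\alloc(v_i,\mathbf{v}_{-i}) = 0$ whenever $v_i < \max\mathbf{v}$. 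Thus all burns vanish, and the total allocation probability $\alpha$ is concentrated entirely on the maximal-value bidder(s).

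Finally, I would derive the MMIC contradiction. In the two-bidder case with $v_1 > v_2 > 0$, the pinning above gives $\alloc(v_1,v_2) = \alpha$ and $\alloc(t,v_2) = 0$ for $t < v_2$, so a second application of Myerson yields $\pay(v_1,v_2) = \alpha v_1 - \int_{v_2}^{v_1}\alpha\, dt = \alpha v_2 > 0$. Since burns are zero, the miner's revenue on $(v_1,v_2)$ is exactly $\alpha v_2$. But then, faced with a single real bid $v_1$, the miner can introduce a fake bid $v_f \in (0,v_1)$ and by \cref{eq:miner_util} extract $\pay(v_1,v_f) - \burn(v_1,v_f) - \burn(v_f,v_1) = \alpha v_f > 0$, strictly improving on the honest single-bidder revenue of $0$. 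This violates MMIC, forcing $\alpha = 0$ and completing the proof.

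The main obstacle is Step 2: I need to extract in one shot \emph{both} the vanishing of all burns and the concentration of allocation on the highest bidder, from a single chain of inequalities squeezed between the OCA lower bound $\alpha\max\mathbf{v}$ and the CTPA-driven upper bound $\alpha\max\mathbf{v} - \sum_i\burn$. The remaining work — applying Myerson twice and exhibiting a one-bid MMIC deviation — is essentially mechanical once the structural pinning is established.
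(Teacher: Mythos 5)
Your proof is correct and follows essentially the same route as the paper's: zero single-bidder burn via Myerson, the squeeze $\alpha\max\mathbf{v} \le u_{joint}(\mathbf{v};\mathbf{v}) \le \alpha\max\mathbf{v} - \sum_i \burn$ forcing zero burns and concentration of the probability mass $\alpha$ on the highest bidder, and then an MMIC contradiction from the resulting second-price-style payment. Your final step is in fact more explicit than the paper's (which only remarks that second-price payments contradict MMIC), since you compute $\pay(v_1,v_2)=\alpha v_2$ and exhibit the concrete fake-bid deviation.
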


\begin{corollary}
\label{corr:no_efficient}
There is no \textit{efficient} DSIC+MMIC+OCA-proof mechanism. 
\end{corollary}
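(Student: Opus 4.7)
The plan is to derive Corollary~\ref{corr:no_efficient} as an almost immediate consequence of Theorem~\ref{thm:ctpa}. The key observation is that any efficient single-item mechanism must be CTPA with the particular constant $\alpha = 1$: if the mechanism ever fails to allocate the item (with probability $1$) when bids are positive, then it loses welfare relative to the mechanism that always gives the item to the highest bidder, and hence cannot be efficient. So ``efficient'' is a strict strengthening of the CTPA property.

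Concretely, I would first spell out what efficiency means in this single-item setting, namely that the expected allocated value equals $\max \mathbf{b}$ for every bid vector $\mathbf{b}$ with at least one positive entry. From the feasibility constraint $\sum_i \alloc(b_i, \mathbf{b}_{-i}) \leq 1$ together with $\alloc(b_i, \mathbf{b}_{-i})\cdot b_i \leq b_i$, any mechanism whose expected allocated value attains $\max \mathbf{b}$ must satisfy $\sum_i \alloc(b_i, \mathbf{b}_{-i}) = 1$ (the item is fully allocated, with mass concentrated on the maximizing bidders). This places the mechanism in the CTPA class with $\alpha = 1$.

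Assuming such an efficient DSIC+MMIC+OCA-proof mechanism exists, Theorem~\ref{thm:ctpa} forces it to be the trivial mechanism that never allocates the item. But the trivial mechanism has $\sum_i \alloc(b_i, \mathbf{b}_{-i}) = 0 \neq 1$ for any vector of positive bids, and so produces zero welfare, contradicting efficiency. This yields the desired impossibility.

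There is no real obstacle here; the corollary is a packaging step. The only point to be careful about is ensuring that the definition of efficiency used really does imply CTPA with constant $\alpha = 1$ (as opposed to, say, the weaker notion of allocating to the highest bidder whenever the item is allocated), which is the standard reading in the TFM literature and matches the notion used earlier in the paper when discussing the ``optimal joint utility where the item is fully allocated without any burn.''
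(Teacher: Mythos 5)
Your proposal is correct and matches the paper's intended argument: the paper introduces the CTPA class precisely because it contains all efficient auctions (those with constant total allocation probability $\alpha = 1$), and then derives Corollary~\ref{corr:no_efficient} as an immediate consequence of Theorem~\ref{thm:ctpa}, exactly as you do. Your explicit verification that efficiency forces $\sum_i \alloc(b_i, \mathbf{b}_{-i}) = 1$ and that the trivial mechanism is not efficient is just a fuller spelling-out of the same packaging step.
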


Now that we know that an efficient mechanism cannot be achieved, we would like to further bound the efficiency level that we can achieve away from perfect welfare.

We start by proving a useful claim:
\begin{claim}
\label{clm:low_value_feasibility}
Let $\alloc, \pay, \burn$ be an OCA-proof mechanism. 
Let $\mathbf{b}$ be a set of $n$ bids, and let $b_1 = \min \mathbf{b}$. Then, $\alloc(b_1, \mathbf{b}_{-1}) \leq \frac{1}{n} \sum_i \alloc(b_i, \mathbf{b}_{-i})$.
\end{claim}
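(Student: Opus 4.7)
The plan is to exploit anonymity and OCA-proofness together to obtain a monotonicity property of the allocation function in the bids, after which the claim is immediate. Write $q_i := \alloc(b_i, \mathbf{b}_{-i})$ for the allocation probability received by bidder $i$, and set the valuation vector to $\mathbf{v} = \mathbf{b}$ (so all bidders are honest in the baseline). Without loss of generality $b_1 = \min \mathbf{b}$.

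The key step is to consider, for an arbitrary permutation $\sigma \in S_n$, the off-chain agreement $\Omega_{\mathbf{v}}$ in which the coalition consists of the miner together with all $n$ bidders, and bidder $i$ submits the bid $b_{\sigma(i)}$. Because permuting bids leaves the submitted multiset unchanged, anonymity implies that bidder $i$ under $\Omega_{\mathbf{v}}$ receives allocation $q_{\sigma(i)}$, and the sum of burns over all bidders equals the honest sum $B := \sum_i \burn(b_i, \mathbf{b}_{-i})$. Consequently $u_{joint}(\Omega_{\mathbf{v}}; \mathbf{v}) = \sum_i b_i q_{\sigma(i)} - B$, while $u_{joint}(\mathbf{v}; \mathbf{v}) = \sum_i b_i q_i - B$. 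Applying the characterization of OCA-proofness in \cref{lem:oca_joint} yields
\begin{equation*}
\sum_i b_i q_i \;\geq\; \sum_i b_i q_{\sigma(i)} \qquad \text{for every } \sigma \in S_n.
\end{equation*}

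Specializing $\sigma$ to the transposition of any two indices $i,j$ gives $(b_i - b_j)(q_i - q_j) \geq 0$, i.e., the allocation probabilities are ordered in the same way as the bids. In particular, since $b_1$ is the minimum, $q_1 \leq q_i$ for every index $i$ with $b_i > b_1$, and anonymity itself gives $q_1 = q_i$ whenever $b_i = b_1$; in either case $q_1 \leq q_i$. Summing these inequalities over $i$ yields $\sum_i q_i \geq n q_1$, which is exactly the bound $\alloc(b_1, \mathbf{b}_{-1}) \leq \tfrac{1}{n}\sum_i \alloc(b_i, \mathbf{b}_{-i})$.

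The only real subtlety is justifying the permutation OCA cleanly: one must check that the collective swap is a legitimate OCA (allowed because general OCA-proofness permits $|C| = n$), and that anonymity transfers the per-bidder allocations and aggregate burn across the permutation. Once these are in place, the conclusion is a one-line rearrangement-inequality argument, so I do not anticipate any further obstacle.
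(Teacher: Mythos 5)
Your proof is correct and rests on exactly the same mechanism as the paper's: an off-chain agreement in which bidders exchange bids, combined with anonymity (which transfers the allocation probabilities and preserves the total burn), so that OCA-proofness forces the allocation to be co-monotone with the bids. The paper argues by contradiction using a single transposition of the minimum bidder with a below-average bidder, whereas you derive the full rearrangement inequality over all permutations and then specialize to transpositions; this is a cosmetic difference, not a different route.
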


\begin{proof}
Assume towards contradiction that $\alloc(b_1, b_{-1}) > \frac{1}{n} \sum_i a(b_i, b_{-i})$. Then, there must be some $j\neq 1$, so that $b_j \geq b_1$ and $\alloc(b_j, b_{-j}) < \frac{1}{n} \sum_i a(b_i, b_{-i}) < \alloc(b_1, b_{-1})$. By anonymity, if we let bidders $1$ and $j$ bid each other's bid, the expected burn does not change, but we now allocate more to a higher value bidder and less to a lower value bidder, while not affecting all other bidders. This increases the joint utility and thus violates $1$-OCA-proofness.
\end{proof}

%We now explicitly write the OCA-proofness condition in its most general form.
%For any set of true valuations $\mathbf{v}$ and a set of bids $\mathbf{b}$ (not necessarily of the same size), 
%\begin{equation}
%\label{eq:general_oca}
%u_{joint}(\mathbf{v} ; \mathbf{v}) \geq u_{joint}(\mathbf{b} ; \mathbf{v}).
%\end{equation}

\begin{claim}
\label{clm:payment_burn_bound}
With a DSIC, MMIC and OCA-proof mechanism, it holds that for two bids $b_1, b_2$:

$$\pay(b_1, b_2) \leq \burn(b_1, b_2) + \burn(b_2, b_1).$$
\end{claim}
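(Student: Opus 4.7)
The plan is to invoke \gls{MMIC} on a carefully chosen deviation from a \emph{single}-bidder honest instance, then reduce the resulting inequality using the zero-revenue characterization of the single-bidder case. Specifically, I would take the real valuation profile $\mathbf{v} = (b_1)$ and consider the miner strategy $\mu$ which keeps $b_1$ and inserts one fake bid of value $b_2$, producing the bid vector $(b_1,b_2)$. This is a legal miner strategy (it neither modifies nor omits the real bid), so the \gls{MMIC} condition of \cref{eq:mmic_cond} yields
$$u_{miner}((b_1);(b_1)) \;\geq\; u_{miner}((b_1,b_2);(b_1)).$$

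Next I would expand both sides using the closed form for miner utility in \cref{eq:miner_util}. The left-hand side, with $n=n'=1$, is simply $\pay(b_1)-\burn(b_1)$. For the right-hand side, with $n=1$ and $n'=2$, bidder $1$ is real and bidder $2$ is fake, so the expression becomes $(\pay(b_1,b_2)-\burn(b_1,b_2)) - \burn(b_2,b_1)$, where the $-\burn(b_2,b_1)$ term captures the fact that the miner bears the burn on its own injected bid. Substituting yields
$$\pay(b_1)-\burn(b_1) \;\geq\; \pay(b_1,b_2)-\burn(b_1,b_2)-\burn(b_2,b_1).$$

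The final step is to kill the left-hand side. Since the mechanism is \gls{DSIC} and (in particular) $1$-\gls{OCA}-proof, \cref{lem:SingleBidderRevenue} tells us that the single-bidder miner revenue is zero, i.e.\ $\pay(b_1)-\burn(b_1)=0$. Plugging this into the previous display and rearranging gives exactly
$$\pay(b_1,b_2) \;\leq\; \burn(b_1,b_2)+\burn(b_2,b_1),$$
which is the claim.

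The main subtlety, and the only place where something beyond a mechanical use of \gls{MMIC} happens, is recognizing that one must invoke the single-bidder zero-revenue lemma to neutralize the slack term $\pay(b_1)-\burn(b_1)$; \gls{MMIC} alone would only bound the two-bidder payment up to this additive amount. Everything else is a direct expansion of the miner utility under a one-bid-to-two-bid deviation, so no further calculation should be required.
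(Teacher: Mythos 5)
Your proof is correct and follows essentially the same route as the paper's: both consider the single real bid $b_1$, the miner strategy that injects the fake bid $b_2$, expand $u_{miner}((b_1,b_2);b_1) = \pay(b_1,b_2)-\burn(b_1,b_2)-\burn(b_2,b_1)$, and neutralize the single-bidder side via \cref{lem:SingleBidderRevenue}. The only cosmetic difference is that the paper phrases the argument as a contradiction while you derive the inequality directly.
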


\begin{proof}
Assume towards contradiction there are some $b_1, b_2$ so that $\pay(b_1, b_2) > \burn(b_1, b_2) + \burn(b_2, b_1)$. We know by \cref{lem:SingleBidderRevenue} that $u_{miner}(b_1 ; b_1) = 0$. Consider the miner strategy $\mu_{\mathbf{v}} = (b_1, b_2)$ for $\mathbf{v} = (b_1)$ (i.e., the miner adds a fake bidder $b_2$). Then, the miner utility is
$$u_{miner}((b_1, b_2) ; b_1) = \pay(b_1, b_2) - \burn(b_1, b_2) - \burn(b_2, b_1) > 0 = u_{miner}(b_1 ; b_1),$$

which violates the MMIC condition of \cref{eq:mmic_cond}.

\end{proof}

\begin{lemma}
\label{lem:two_bidders_cond}
With a DSIC, MMIC and OCA-proof mechanism, it holds that for two bids $b_1 \geq b_2$:

$$\alloc(b_2, b_1) \cdot b_2 + \int_0^{b_1} a(t, b_2) dt \geq \int_0^{b_1} a(t) dt. $$

\end{lemma}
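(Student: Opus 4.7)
The plan is to exploit OCA-proofness against the off-chain agreement that omits bidder~2, and then to translate the resulting joint-utility inequality into an allocation inequality by combining Myerson's lemma (\cref{fct:MyersonLemma}), the single-bidder zero-revenue characterization (\cref{lem:SingleBidderRevenue}), and the payment-burn bound (\cref{clm:payment_burn_bound}).

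First, I would invoke \cref{lem:oca_joint} with true valuations $\mathbf{v} = (b_1, b_2)$ and off-chain agreement $\Omega_{\mathbf{v}} = (b_1)$, in which bidder~2's bid is omitted. This is permitted since OCA-proofness does not restrict the coalition size and the OCA framework explicitly allows omitting bids. Under $\Omega_{\mathbf{v}}$ the configuration reduces to a single-bidder instance with bid $b_1$, so by \cref{lem:SingleBidderRevenue} the miner earns $0$ and $\pay(b_1)=\burn(b_1)$, while by \cref{fct:MyersonLemma} we have $\pay(b_1) = b_1 \alloc(b_1) - \int_0^{b_1} \alloc(t)\, dt$. Combining these,
\[
u_{joint}((b_1);\,(b_1,b_2)) \;=\; b_1 \alloc(b_1) - \burn(b_1) \;=\; \int_0^{b_1} \alloc(t)\, dt,
\]
and OCA-proofness therefore yields
\[
b_1 \alloc(b_1,b_2) + b_2 \alloc(b_2,b_1) - \burn(b_1,b_2) - \burn(b_2,b_1) \;\geq\; \int_0^{b_1} \alloc(t)\, dt. \qquad (\star)
\]

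Second, I would apply \cref{clm:payment_burn_bound} to the pair $(b_1,b_2)$ and feed in the Myerson formula for bidder~1's payment in the two-bidder case to obtain a lower bound on the aggregate burn:
\[
\burn(b_1,b_2) + \burn(b_2,b_1) \;\geq\; \pay(b_1,b_2) \;=\; b_1 \alloc(b_1,b_2) - \int_0^{b_1} \alloc(t,b_2)\, dt. \qquad (\star\star)
\]
Substituting $(\star\star)$ into $(\star)$ cancels the $b_1 \alloc(b_1,b_2)$ terms on both sides and leaves exactly
\[
b_2 \alloc(b_2,b_1) + \int_0^{b_1} \alloc(t,b_2)\, dt \;\geq\; \int_0^{b_1} \alloc(t)\, dt,
\]
which is the desired conclusion.

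The subtle point — and the step I would verify most carefully — is the bookkeeping on the directions of the two inequalities: $(\star)$ is a \emph{lower} bound on the truthful joint utility and $(\star\star)$ is a \emph{lower} bound on the aggregate burn, and it is precisely subtracting the latter that turns $(\star)$ into the required \emph{upper} bound on $\int_0^{b_1} \alloc(t)\, dt$. Correspondingly, one must check that $u_{joint}((b_1);\,(b_1,b_2))$ really collapses to the single-bidder expression $\int_0^{b_1} \alloc(t)\, dt$, which is the only place the zero single-bidder revenue of \cref{lem:SingleBidderRevenue} is consumed; every other ingredient (Myerson, the payment-burn bound, and the OCA itself) is invoked in its standard form.
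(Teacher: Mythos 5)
Your proof is correct and follows essentially the same route as the paper's: both compute the omit-bidder-2 joint utility as $\int_0^{b_1}\alloc(t)\,dt$ via \cref{lem:SingleBidderRevenue} and \cref{fct:MyersonLemma}, upper-bound the truthful two-bidder joint utility by $b_2\alloc(b_2,b_1)+\int_0^{b_1}\alloc(t,b_2)\,dt$ via \cref{clm:payment_burn_bound} and Myerson's payment formula, and chain these through \cref{lem:oca_joint}. The inequality-direction bookkeeping you flag is handled identically in the paper and works out as you describe.
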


\begin{proof}
By the DSIC property, through Fact~\ref{fct:MyersonLemma}, we know that the allocation function is monotonically increasing in a bidder's bid, and we have an exact formula for the payment given an allocation function.
By \cref{lem:SingleBidderRevenue}, we know that in the single bidder case, the miner revenue is always $0$, implying that the payment and burn functions are equal. We can thus write:
\begin{equation}
    \begin{split}
        \label{eq:rhs_joint}
        u_{joint}(b_1 ; b_1, b_2)
        =
        a(b_1) \cdot b_1 - \beta(b_1) 
        =
        a(b_1) \cdot b_1 - \pay(b_1) 
        =
        \int_0^{b_1} a(t) dt.
    \end{split}
\end{equation}

We can also write 
\begin{equation}
    \begin{split}
        \label{eq:lhs_joint}
        u_{joint}(b_1, b_2 ; b_1, b_2)
        &
        =
        a(b_1, b_2) \cdot b_1 - \beta(b_1, b_2) + a(b_2, b_1) \cdot b_2 - \beta(b_2, b_1)
        \\&
        \stackrel{\text{Claim~\ref{clm:payment_burn_bound}}}{\leq} 
        a(b_1, b_2) \cdot b_1 - \pay(b_1, b_2) + a(b_2, b_1) \cdot b_2 
        \\&
        \stackrel{\text{Fact~\ref{fct:MyersonLemma}}}{=}
        a(b_2, b_1) \cdot b_2  + \int_0^{b_1} a(t, b_2) dt.
    \end{split}
\end{equation}

Thus, by the OCA condition of \cref{lem:oca_joint},

$$a(b_2, b_1) \cdot b_2  + \int_0^{b_1} a(t, b_2) dt \stackrel{\text{\cref{eq:lhs_joint}}}{\geq}  u_{joint}(b_1, b_2 ; b_1, b_2) \stackrel{\text{\cref{lem:oca_joint}}}{\geq} u_{joint}(b_1 ; b_1, b_2) \stackrel{\text{\cref{eq:rhs_joint}}}{=} \int_0^{b_1} a(t) dt. $$

\end{proof}

\begin{theorem}
\label{thm:allocation_bound}
Any DSIC+MMIC+OCA-proof mechanism for a single bidder, allocates the item with probability at most $\sqrt{2} - \frac{1}{2} \approx 0.91421356$. 
\end{theorem}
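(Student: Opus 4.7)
The plan is to proceed by contradiction: assume $f(b_1) = M > \sqrt{2} - 1/2$ for some $b_1 > 0$, and use \cref{lem:two_bidders_cond} combined with auxiliary consequences of the OCA, MMIC, and DSIC conditions.

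The first step is to derive a pointwise upper bound on the two-bidder allocation $\alloc(x, y)$. Single-bidder OCA applied to the value vector $(y)$ with the miner adding a fake bid $x$, combined with \cref{clm:payment_burn_bound} and Myerson (\cref{fct:MyersonLemma}), yields the ``dual'' inequality $\int_0^y \alloc(t, x) dt + \int_0^x \alloc(t, y) dt \leq F(y) + x \alloc(x, y)$. Summing \cref{lem:two_bidders_cond} with its mirror image (obtained by swapping the roles of $x$ and $y$) and combining with the dual inequality gives $F(x) \leq 2x \alloc(x, y) + y \alloc(y, x)$. Using this together with its mirror image $F(y) \leq 2y \alloc(y, x) + x \alloc(x, y)$, the monotonicity consequence $F(y) \geq F(x) + (y - x) f(x)$ (from Myerson monotonicity of $f$), and feasibility $\alloc(x, y) + \alloc(y, x) \leq 1$, one solves for $\alloc(x, y)$ to obtain the pointwise bound
\[
\alloc(x, y) \leq \frac{y - (y - x) f(x)}{x + y} \quad\text{for } x \leq y.
\]

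The second step applies \cref{lem:two_bidders_cond} with a carefully chosen second bidder's bid $b_2$, using the pointwise bound to control both $\alloc(b_2, b_1)$ and the integrand $\alloc(t, b_2)$ for $t \in [0, b_1]$ (splitting the integration at $t = b_2$ and using feasibility $\alloc(t, b_2) \leq 1 - \alloc(b_2, t)$ in the ``upper'' region $[b_2, b_1]$). The right-hand side $F(b_1)$ of \cref{lem:two_bidders_cond} is lower-bounded using Myerson's identity $F(b_1) = b_1 M - \beta(b_1)$ (where $\pay(b_1) = \burn(b_1)$ by \cref{lem:SingleBidderRevenue}) combined with the monotonicity of $f$ applied at higher bids $b_1' > b_1$, which forces $F(b_1') \geq F(b_1) + (b_1' - b_1) M$. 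Optimizing $b_2$ (which balances the feasibility-based bound of $1/2$ from \cref{clm:low_value_feasibility} with the pointwise bound's logarithmic integration) reduces to a quadratic constraint $M^2 + M \leq 7/4$, equivalent to $(2M + 1)^2 \leq 8$, and hence $M \leq \sqrt{2} - 1/2$, contradicting the assumption.

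The main obstacle is the careful integration bookkeeping: the $1/(x + y)$ kernel in the pointwise bound produces logarithms whose coefficients depend on $1 - 2M$, and matching these against the linear growth forced by $F$-monotonicity is what yields the quadratic. The $\sqrt{2}$ emerges from the quadratic interplay between the feasibility-enforced $1/2$ upper bound (contributing the ``$+\tfrac{1}{2}$'' shift) and the factor from the pointwise allocation bound, with the $(2M+1)^2 \leq 8$ form reflecting the product structure of the two constraints.
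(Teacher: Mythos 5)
Your opening moves are sound and genuinely different from the paper's: the ``dual'' inequality $\int_0^y \alloc(t,x)\,dt + \int_0^x \alloc(t,y)\,dt \leq F(y) + x\,\alloc(x,y)$ (writing $F(y)=\int_0^y \alloc(t)\,dt$) does follow from the OCA deviation in which the miner adds a fake bid $x$ to a single real bidder $y$, combined with burn-balance (\cref{eq:BB}) and Myerson's payment identity --- note it is burn-balance, not \cref{clm:payment_burn_bound}, that you actually need there --- and combining it with \cref{lem:two_bidders_cond} and its mirror does yield $F(x) \leq 2x\,\alloc(x,y) + y\,\alloc(y,x)$ and $F(y) \leq 2y\,\alloc(y,x) + x\,\alloc(x,y)$. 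The gap is in the step where you ``solve for $\alloc(x,y)$'' to obtain the pointwise bound $\alloc(x,y) \leq \frac{y-(y-x)f(x)}{x+y}$. Writing $p = \alloc(x,y)$ and $q = \alloc(y,x)$, the ingredients you list are two \emph{lower} bounds on linear combinations of $p,q$, the feasibility constraint $p+q\leq 1$, and the monotonicity bound $F(y)\geq F(x)+(y-x)f(x)$. The point $p=q=\tfrac12$ satisfies all of these (since $F(x)\leq x$ and $F(y)\leq y$), and also \cref{clm:low_value_feasibility}; yet whenever $y>x$ and $f(x)>\tfrac12$ your claimed bound evaluates to strictly less than $\tfrac12$. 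So the pointwise bound is not a consequence of the stated ingredients, and since your entire second step (the logarithmic integration against the $1/(x+y)$ kernel and the resulting quadratic $(2M+1)^2\leq 8$) rests on it, the proof does not go through as written, even though the final numerology $(2M+1)^2\leq 8 \iff M\leq\sqrt2-\tfrac12$ matches the target.

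For comparison, the paper avoids any pointwise bound on the two-bidder allocation. It fixes $\alpha=\alloc(b)$ and works with the two bid vectors $(Ab,Ab)$ and $(Ab,Bb)$ for $1<B<A$: applying \cref{lem:two_bidders_cond} at $(Ab,Ab)$ together with \cref{clm:low_value_feasibility} gives a \emph{lower} bound on $\alloc(Bb,Ab)$, feasibility converts this into an upper bound on $\alloc(Ab,Bb)$ and hence on $\int_0^{Ab}\alloc(t,Bb)\,dt$, and a second application of \cref{lem:two_bidders_cond} at $(Ab,Bb)$ closes the loop, yielding $\alpha \leq \frac{2A^2+AB+B^2}{2(A-1)(A+B)}$; choosing $B=(\sqrt2-1)A$ and letting $A\to\infty$ gives $\sqrt2-\tfrac12$. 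If you want to rescue your route, you would need to replace the pointwise bound by something genuinely implied by the integral (rather than endpoint) form of \cref{lem:two_bidders_cond}; that is, in effect, what the paper's two-scale argument does.
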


\begin{proof}
%\yotam{Give an intuition to the structure of the proof before following formally...}

Let $b$ be some single bid, and let $\alpha = a(b)$. Consider the valuation vector $v^* = (A \cdot b, A \cdot b)$ for some parameter $A > 1$, and let $B$ so that $1 < B < A$. We wish to compare the joint utility of reporting $v^*$ truthfully, versus the off-chain agreement where the miner omits one of the bidders' bid. We get by \cref{lem:two_bidders_cond}:

\begin{equation}
\label{eq:lower_bound_A1}
\alloc(A \cdot b, A \cdot b) \cdot A \cdot b + \int_0^{A\cdot b} a(t, A \cdot b) dt \geq \int_0^{A \cdot b} a(t) dt \geq \int_b^{A \cdot b} a(t) dt \geq \alpha \cdot (A-1) \cdot b.
\end{equation}

By applying Claim~\ref{clm:low_value_feasibility} to the two bidder case, we get:
\begin{equation}
\label{eq:upper_bound_AB}
\alloc(A \cdot b, A \cdot b) \cdot A \cdot b  + \int_{B \cdot b}^{A \cdot b}  a(t, A \cdot b)dt \leq \frac{1}{2} A \cdot b + \int_{B \cdot b}^{A \cdot b}  \frac{1}{2} dt = (A-\frac{1}{2}B) \cdot b.\end{equation}

Together, \cref{eq:lower_bound_A1} and \cref{eq:upper_bound_AB} yield:
\begin{equation}
\begin{split}
& a(B \cdot b, A \cdot b) \cdot B \cdot b = \int_{0}^{B \cdot b}  a(B \cdot b, A \cdot b)dt \geq
\int_{0}^{B \cdot b}  a(t, A \cdot b)dt = \\
& \left(\alloc(A \cdot b, A \cdot b) \cdot A \cdot b + \int_{0}^{A \cdot b}  a(t, A \cdot b)dt \right) - \left(\alloc(A \cdot b, A \cdot b) \cdot A \cdot b + \int_{B \cdot b}^{A \cdot b}  a(t, A \cdot b)dt \right) \geq \\
& \left( \alpha (A-1) - (A-\frac{1}{2}B) \right) \cdot b.
\end{split}\end{equation}

%We can thus write:
%\begin{equation}\left( \alpha (A-1) - (A-B) \right) \cdot b \leq 2 \int_{0}^{B \cdot b}  a(t, A \cdot b)dt \leq 2 \int_{0}^{B \cdot b} a(B\cdot b, A \cdot b) dt = 2 B \cdot b \cdot a(B\cdot b, A \cdot b).\end{equation}

By rearranging the terms we get:
\begin{equation}a(B\cdot b, A \cdot b) \geq \frac{\alpha (A-1) - (A-\frac{1}{2}B) }{B},\end{equation}
and thus by the feasibility \cref{eq:feasibility}:
\begin{equation} a(A\cdot b, B \cdot b) \leq 1 - \frac{\alpha (A-1) - (A-\frac{1}{2}B) }{B} = \frac{1}{2} + \frac{(1-\alpha)A + \alpha}{B}. \end{equation}

This yields:
\begin{equation} \int_0^{A \cdot b} a(t, B \cdot b)dt \leq \int_0^{A \cdot b} a(A \cdot b, B \cdot b)dt = \left(\frac{1}{2} + \frac{(1-\alpha)A + \alpha}{B} \right) \cdot A \cdot b. \end{equation}

Also, by Claim~\ref{clm:low_value_feasibility}, we have:
\begin{equation} 
\alloc(B\cdot b, A\cdot b) \cdot B \cdot b \leq \frac{B \cdot b}{2}. \end{equation}

Hence,
\begin{equation}
\alloc(B\cdot b, A\cdot b) \cdot B \cdot b + \int_0^{A \cdot b} a(t, B \cdot b)dt \leq 
\left( \left(\frac{1}{2} + \frac{(1-\alpha)A + \alpha}{B} \right) \cdot A + \frac{B}{2} \right) \cdot b.
\end{equation}

Notice that this is the left-hand side of the \cref{lem:two_bidders_cond} where we consider the bids $B \cdot b, A \cdot b$. We can thus repeat the way we developed \cref{eq:lower_bound_A1} (for the case of the bids $A \cdot b, A \cdot b$) and, by considering that the miner omits the bid $B \cdot b$, get:
\begin{equation}
\alloc(B\cdot b, A\cdot b) \cdot B \cdot b + \int_0^{A \cdot b} a(t, B \cdot b)dt \geq \int_0^{A\cdot b} a(t) dt \geq \alpha \cdot (A-1) \cdot b.
\end{equation}
Overall, the yields:
$$\alpha \cdot (A-1) \cdot b \leq \left( \left(\frac{1}{2} + \frac{(1-\alpha)A + \alpha}{B} \right) \cdot A + \frac{B}{2} \right) \cdot b .$$
Solving for $\alpha$, we get that 
$$\alpha \leq \frac{2A^2+A B+B^2}{2(A-1) (A+B)}.$$
Recall we have the freedom to set $A > B > 1$ as we wish.
By choosing $B = (\sqrt{2} - 1)A$, we get:
$$\alpha \leq \frac{(2\sqrt{2} -1) A}{2 (A-1)}.$$

We can then take $A \rightarrow \infty$ and arrive at $\alpha \leq \sqrt{2} - \frac{1}{2}$. 

\end{proof}

Furthermore, for the case of two bidders, we can show a useful upper and lower bound on how much the function should ``favor'' the higher bidder:
\begin{lemma}
\label{lem:two_bidder_bounds}
For the two bidder case, with $v_1 > v_2$, any  DSIC+MMIC+OCA-proof mechanism $a, p, \beta$ must satisfy:
\begin{equation}
\label{eq:lower_bound_two_bidder}
\frac{u_1(v1; v1) - v_2}{v_1 - v_2} \leq a(v_1, v_2) \leq \frac{3}{2} - \frac{u_1(v1;v1)}{v_2}.\end{equation}
Furthermore, if $u_1(v1;v1) > \frac{v_1+v_2}{2}$, then:
\begin{equation} a(v_1, v_2) \leq 1 - \frac{1}{v_2} \cdot \left(2u_1(v1;v1) - \frac{3}{2}v_1 +  (v_1 - u_1(v1;v1)) \log\frac{2(v_1 - u_1(v1;v1))}{v_1 - v_2} \right).\end{equation}
\end{lemma}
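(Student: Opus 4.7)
The proof is built on Lemma~\ref{lem:two_bidders_cond} applied to the bid pair $(v_1, v_2)$, giving
\[
v_2\,a(v_2, v_1) + \int_0^{v_1} a(t, v_2)\, dt \;\geq\; u_1(v_1; v_1),
\]
combined with Claim~\ref{clm:low_value_feasibility} (which bounds the lower bidder's allocation by $1/2$), the monotonicity of $a(\cdot, v_2)$ in its own-bid coordinate supplied by Fact~\ref{fct:MyersonLemma}, and feasibility $a(v_1, v_2) + a(v_2, v_1) \leq 1$. These ingredients are combined in three different ways to give each of the three bounds.

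For the lower bound, the plan is to split the integral at $v_2$. On $[0, v_2]$, applying Claim~\ref{clm:low_value_feasibility} pointwise to the pair $(t, v_2)$ with $t \leq v_2$ gives $a(t, v_2) \leq 1/2$, so $\int_0^{v_2} a(t, v_2)\,dt \leq v_2/2$. On $[v_2, v_1]$, monotonicity gives $\int_{v_2}^{v_1} a(t, v_2)\,dt \leq (v_1 - v_2)\, a(v_1, v_2)$. Substituting these together with $v_2 \, a(v_2, v_1) \leq v_2/2$ back into Lemma~\ref{lem:two_bidders_cond} yields $v_2 + (v_1 - v_2)\, a(v_1, v_2) \geq u_1(v_1; v_1)$, and rearranging produces the stated lower bound.

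For the scalar upper bound, I would bring in a complementary OCA-proofness constraint: compare the honest single-bidder profile $\mathbf{v} = (v_1)$ against the manipulation $\Omega = (v_1, v_2)$ in which the miner inserts the fake bid $v_2$. By \cref{lem:oca_joint} this gives $\beta(v_1, v_2) + \beta(v_2, v_1) \geq v_1\, a(v_1, v_2) - u_1(v_1; v_1)$. Upper-bounding the same sum of burns via burn-balance ($\beta \leq p$) and Myerson's formula for $p$ from Fact~\ref{fct:MyersonLemma} yields
\[
\int_0^{v_1} a(t, v_2)\, dt + \int_0^{v_2} a(t, v_1)\, dt \;\leq\; u_1(v_1; v_1) + v_2\, a(v_2, v_1).
\]
I would then lower-bound both integrals using Lemma~\ref{lem:two_bidders_cond} and its symmetric variant $v_1\, a(v_1, v_2) + \int_0^{v_2} a(t, v_1)\, dt \geq u_1(v_2; v_2)$ (which the same \gls{OCA} reasoning as Lemma~\ref{lem:two_bidders_cond} delivers, using the other bidder's Myerson identity), substitute $a(v_2, v_1) \leq 1 - a(v_1, v_2)$ from feasibility, and solve for $a(v_1, v_2)$; the algebra collapses to $v_2\, a(v_1, v_2) + u_1(v_1; v_1) \leq \tfrac{3}{2} v_2$, which is the stated upper bound.

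For the sharper logarithmic bound under $u_1(v_1;v_1) > (v_1 + v_2)/2$, I would lift the scalar argument to a pointwise/differential one. Applying Lemma~\ref{lem:two_bidders_cond} at every pair $(t, v_2)$ with $t \in [v_2, v_1]$ and writing $F(t) := \int_0^t a(s, v_2)\, ds$ gives $a(v_2, t) \geq (u_1(t;t) - F(t))/v_2$, and coupling with feasibility $F'(t) = a(t, v_2) \leq 1 - a(v_2, t)$ yields the linear differential inequality $v_2\, F'(t) - F(t) \leq v_2 - u_1(t; t)$ on the active interval. Solving this by separation of variables (equivalently, by the integrating factor $e^{-t/v_2}$) and integrating from the threshold where $u_1(t;t) = F(t)$ up to $v_1$, the resulting $\int dF / (v_2 - u_1 + F)$ produces exactly the logarithmic term $\log \frac{2(v_1 - u_1(v_1;v_1))}{v_1 - v_2}$ after one substitutes $a(v_1, v_2) = F'(v_1)$ and simplifies. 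The hypothesis $u_1(v_1;v_1) > (v_1+v_2)/2$ is precisely what places us in the regime where the log's argument is strictly less than one, so the sharper bound is active.

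The main obstacle is the ODE step: one must integrate the separated-variable inequality over the correct sub-interval of $[v_2, v_1]$ and convert the resulting log-relation back into a clean bound on $F'(v_1)$, so that the coefficients in front of $(v_1 - u_1(v_1;v_1))$ and inside the logarithm match the statement precisely. The scalar upper-bound derivation, though ultimately a chain of substitutions, also requires careful bookkeeping to collapse the intermediate expressions into the tidy form $3/2 - u_1(v_1;v_1)/v_2$.
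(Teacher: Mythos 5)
Your lower-bound argument is correct and essentially matches the paper's: both routes reduce to $v_2 + (v_1 - v_2)\,a(v_1,v_2) \ge u_1(v_1;v_1)$, yours by splitting $\int_0^{v_1} a(t,v_2)\,dt$ at $v_2$ and invoking \cref{clm:low_value_feasibility} on $[0,v_2]$, the paper's by the cruder estimate $\int_0^{v_1}a(t,v_2)\,dt \le v_1\,a(v_1,v_2)$ plus feasibility. The two upper bounds are where the proposal breaks down. For the scalar bound, the paper's key move --- which you are missing --- is to apply \cref{lem:two_bidders_cond} to the \emph{doubled} profile $(v_1,v_1)$, i.e.\ $a(v_1,v_1)\,v_1 + \int_0^{v_1}a(t,v_1)\,dt \ge u_1(v_1;v_1)$, and then to upper-bound the left side by $\tfrac12 v_1 + v_2\,a(v_2,v_1) + \tfrac12(v_1-v_2)$ using \cref{clm:low_value_feasibility} together with monotonicity of $a(\cdot,v_1)$ on $[0,v_2]$. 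Your route never leaves the profile $(v_1,v_2)$: combining your three ingredients (the burn inequality from inserting the fake bid $v_2$, \cref{lem:two_bidders_cond}, and its symmetric variant) in the natural way gives $u_1(v_2;v_2) \le 2v_2 + (v_1-2v_2)\,a(v_1,v_2)$ --- an inequality involving $u_1(v_2;v_2)$ rather than $u_1(v_1;v_1)$ --- and I see no way to make it ``collapse'' to $v_2\,a(v_1,v_2) + u_1(v_1;v_1) \le \tfrac32 v_2$. (A caveat about the target itself: the paper's own chain actually yields $a(v_1,v_2) \le \tfrac12 + \bigl(v_1 - u_1(v_1;v_1)\bigr)/v_2$, since $\tfrac12 v_1 + \int_{v_2}^{v_1}\tfrac12\,dt = v_1 - \tfrac12 v_2$ and not $\tfrac12 v_2$; the displayed $\tfrac32 - u_1(v_1;v_1)/v_2$, if valid for all $v_2 < v_1$, would force $u_1(v_1;v_1)=0$ by letting $v_2\to 0$, contradicting the non-trivial constant in \cref{corr:efficiency_bound}. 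So you should not expect any correct derivation to land exactly on the displayed form.)

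For the logarithmic bound, the paper again works inside the $(v_1,v_1)$ profile and merely refines the estimate of $\int_{v_2}^{v_1} a(t,v_1)\,dt$: on $[v_2,\,2u_1(v_1;v_1)-v_1]$ it replaces the $\tfrac12$ cap by $a(t,v_1)\le 1 - a(v_1,t) \le 1 - \frac{u_1(v_1;v_1)-t}{v_1-t}$, using the already-proved lower bound applied to the bid pair $(v_1, t)$; integrating $\frac{u_1(v_1;v_1) - t}{v_1-t}$ is what produces the logarithm, with no differential equation involved. Your inequality $v_2 F'(t) - F(t) \le v_2 - u_1(t;t)$ is a valid pointwise consequence of \cref{lem:two_bidders_cond}, but its right-hand side contains the unknown function $t \mapsto u_1(t;t)$, over which you have no control beyond monotonicity; ``integrating from the threshold where $u_1(t;t)=F(t)$'' is therefore not executable as stated, and there is no reason the resulting constants would match the statement. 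Note also that your $F$ integrates $a(\cdot,v_2)$, whereas the paper's logarithm arises from an integral of $a(\cdot,v_1)$.
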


%\yotam{The proof is only true when $(2a(v_1) -1)v_1 > v_2$. Can I write the expression for the other case? Restate to be fully correct.}

\begin{proof}
The lower bound stems from applying the OCA condition of \cref{lem:oca_joint} for omitting the lower bid $v_2$, together with the feasibility constraint of \cref{eq:feasibility}:

\begin{equation}
\label{eq:lower_bound_two_bidder_proof}
\begin{split}
& \alloc(v_1, v_2) \cdot v_1 + (1 - a(v_1, v_2)) \cdot v_2 \stackrel{\text{\cref{eq:feasibility}}}{\geq}  a(v_1, v_2) \cdot v_1 + a(v_2, v_1) \cdot v_2 \geq \\
& \int_0^{v_1} a(t, v_2) dt + a(v_2, v_1) \cdot v_2 \stackrel{\text{\cref{lem:two_bidders_cond}}}{\geq} \int_0^{v_1} a(t)dt = u_1(v_1 ; v_1),
\end{split}
\end{equation}
and then rearranging terms. 
The general upper bound follows using the OCA condition for omitting one of the bids if both bids are $v_1, v_1$:
\[
\begin{split}
& a(v_2, v_1) \cdot v_2 + \frac{1}{2}v_2 =
% \\&
\frac{1}{2} \cdot v_1 + a(v_2, v_1) \cdot v_2 + \int_{v_2}^{v_1} \frac{1}{2} dt \stackrel{\text{Claim~\ref{clm:low_value_feasibility}, Integral Arithmetics}}{\geq} \\
& \alloc(v_1, v_1) \cdot v_1 + \left( \int_0^{v_2} a(t, v_1) dt + \int_{v_2}^{v_1} a(t, v_1) dt \right) = \alloc(v_1, v_1) \cdot v_1 + \int_0^{v_1} a(t, v_1) dt \stackrel{\text{\cref{lem:two_bidders_cond}}}\geq u_1(v1;v1),
\end{split}
\]
and then rearranging terms. 

For the case where $u_1(v1;v1) > \frac{v_1+v_2}{2}$, we can achieve a finer upper bound by splitting the integration between $v_2$ and $v_1$ into two segments: The first between $v_2$ and $2u_1(v1;v1)  - v_1$, and the second between $2u_1(v1;v1)  - v_1$ and $v_1$. For the second segment, we use the same $\frac{1}{2}$ upper bound that comes from Claim~\ref{clm:low_value_feasibility}
 as before. For the first segment, we can utilize the lower bound of \cref{eq:lower_bound_two_bidder} we just proved. We thus write:
\[
\begin{split}
&\frac{3}{2} \cdot v_1 + (1 - a(v_1, v_2)) \cdot v_2 + (v_1 - u_1(v1; v1))\int_{v_2}^{2u_1(v1;v1)  - v_1}  \frac{1}{v_1 - t} dt  - u_1(v1;v1) \stackrel{\text{Arithmetics}}{\geq} \\
& \frac{1}{2} \cdot v_1 + (1 - a(v_1, v_2)) \cdot v_2 + \int_{v_2}^{2u_1(v1;v1)  - v_1} (1 - \frac{u_1(v1;v1) - t}{v_1 - t}) dt + (v_1 - u_1(v1;v1)) \stackrel{\text{\cref{eq:lower_bound_two_bidder}}}{\geq} \\
& \frac{1}{2} \cdot v_1 + a(v_2, v_1) \cdot v_2 + \int_{v_2}^{2u_1(v1;v1)  - v_1}  (1 - a(v_1, t)) dt + \int_{2u_1(v1;v1)  - v_1}^{v_1} \frac{1}{2} dt \stackrel{\text{\cref{eq:feasibility},Claim~\ref{clm:low_value_feasibility}, Integral Arithmetics}}{\geq} \\
& \alloc(v_1, v_1) \cdot v_1 + \left( \int_0^{v_2} a(t, v_1) dt + \int_{v_2}^{v_1} a(t, v_1) dt \right) = \alloc(v_1, v_1) \cdot v_1 + \int_0^{v_1} a(t, v_1) dt \stackrel{\text{\cref{lem:two_bidders_cond}}}\geq u_1(v1;v1).
\end{split}
\]

Then, by rearranging terms:
\[
\begin{split}
    & a(v_1, v_2) \leq 1 - \frac{1}{v_2} \cdot \left(2u_1(v1;v1) - \frac{3}{2}v_1 - (v_1 - u_1(v1;v1))\int_{v_2}^{2u_1(v1;v1)  - v_1} \frac{1}{v_1 - t} dt \right) = \\
    & 1 - \frac{1}{v_2} \cdot \left(2u_1(v1;v1) - \frac{3}{2}v_1 +  (v_1 - u_1(v1;v1)) \log\frac{2(v_1 - u_1(v1;v1))}{v_1 - v_2} \right)
    \end{split}
\]

\end{proof}

\begin{corollary}
\label{corr:efficiency_bound}
For any DSIC+MMIC+OCA-proof anonymous randomized mechanism, there is a value $v_1$ such that $u_1(v_1;v_1) \leq 0.842 \cdot v_1$.
\end{corollary}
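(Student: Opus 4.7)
The plan is to fix an arbitrary $v_1 > 0$, set $\alpha := u_1(v_1;v_1)/v_1$, and prove the stronger statement that $\alpha \leq 0.842$ always holds; the existence claim in the corollary then follows immediately. The case $\alpha \leq 1/2$ is trivial, so I will assume $\alpha > 1/2$. This ensures that for every $v_2 \in (0,(2\alpha-1) v_1)$, the hypothesis $u_1(v_1;v_1) > (v_1+v_2)/2$ of the finer upper bound in \cref{lem:two_bidder_bounds} is satisfied.

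Writing $r := v_2/v_1 \in (0, 2\alpha-1)$, I would combine the lower bound $a(v_1,v_2) \geq (\alpha-r)/(1-r)$ with the finer upper bound and use the identity $1 - (\alpha-r)/(1-r) = (1-\alpha)/(1-r)$. After multiplying through by $r$, the condition that the lower bound does not exceed the upper bound reduces to
\[
2\alpha - \tfrac{3}{2} + (1-\alpha)\log\tfrac{2(1-\alpha)}{1-r} \;\leq\; \tfrac{r(1-\alpha)}{1-r}.
\]
Passing to the limit $r \to 0^+$ (which lies in the admissible range since $2\alpha - 1 > 0$), the right-hand side vanishes, and the mechanism must satisfy the closed-form constraint
\[
\phi(\alpha) \;:=\; 2\alpha - \tfrac{3}{2} + (1-\alpha)\log\bigl(2(1-\alpha)\bigr) \;\leq\; 0.
\]

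To close the argument, I differentiate to get $\phi'(\alpha) = 1 - \log(2(1-\alpha))$, which is positive throughout $\alpha \in (1/2, 1)$, so $\phi$ is strictly increasing there. A direct numerical check yields $\phi(0.842) \approx 2 \times 10^{-3} > 0$, so monotonicity forces $\alpha < 0.842$, as required. The main obstacle is bookkeeping rather than anything conceptual: one must correctly manipulate the two bounds of \cref{lem:two_bidder_bounds}, verify that the $r \to 0^+$ limit stays inside the regime where the finer bound applies, and pin down the transcendental threshold of $\phi$ numerically.
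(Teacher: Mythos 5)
Your proposal is correct, and it rests on exactly the same key ingredient as the paper's proof: pinching $a(v_1,v_2)$ between the lower bound and the finer upper bound of \cref{lem:two_bidder_bounds}. Where you diverge is in how the numerical contradiction is extracted. The paper fixes a concrete witness $(v_1,v_2)=(19.8,2.4)$, observes that the lower bound is increasing and the finer upper bound decreasing in $u_1(v_1;v_1)$, and checks that they cross below $0.842\,v_1$; this establishes only the stated existence of one bad $v_1$. You instead normalize $r=v_2/v_1$, note that the admissible window $r\in(0,2\alpha-1)$ is nonempty precisely when $\alpha>1/2$ (the other case being trivial), and let $r\to 0^+$ — legitimate since both sides of your displayed inequality are continuous in $r$ — to obtain the clean necessary condition $\phi(\alpha)=2\alpha-\tfrac{3}{2}+(1-\alpha)\log\bigl(2(1-\alpha)\bigr)\le 0$. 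Your algebra checks out (in particular $1-\tfrac{\alpha-r}{1-r}=\tfrac{1-\alpha}{1-r}$, $\phi'(\alpha)=1-\log(2(1-\alpha))>0$ on $(1/2,1)$, and $\phi(0.842)\approx 2\times10^{-3}>0$). This buys two things the paper's version does not state: the bound holds for \emph{every} $v_1$, not just some $v_1$, and the exact threshold extractable from \cref{lem:two_bidder_bounds} by this route is identified as the root of $\phi$ (approximately $0.8413$), so the constant $0.842$ is seen to be essentially tight for this argument rather than an artifact of a lucky witness.
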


\begin{proof}
Restricting attention to values of $v_1, v_2$ such that $2u_1(v1;v1) - v_1 > v_2$, we have, by rearranging the terms:
$v_1 - v_2 > 2v_1 - 2u_1(v1;v1)$,
and so:
$\log \frac{2(v_1 - u_1(v1;v1))}{v_1 - v_2} < \log 1 = 0$.
Thus, as a function of $u_1(v1;v1)$, the extended upper bound of \cref{lem:two_bidder_bounds} is monotonically decreasing, while the lower bound is monotonically increasing with $u_1(v1;v1)$. Thus, if for some $v_1, v_2$ we find some value of $u_1(v1;v1)$ so that the lower bound \textit{exceeds} the upper bound, it will hold for any higher values of $u_1(v1;v1)$ as well. Given such ``witness'', we could thus conclude that $u_1(v1;v1)$ must be lower than this value. For the choice of $v_1 = 19.8, v_2 = 2.4$, this is satisfied for $u_1(v1;v1) = 0.842 \cdot v_1$. 
\end{proof}

The implication of the above lemma is that since in the single bidder case the utility of the single bidder is exactly the joint utility (as all payments are burned by \cref{lem:SingleBidderRevenue}), and the optimal joint utility is $v_1$, the efficiency ratio is thus bounded by $0.842$ in the worst case. However, notice that despite some similarity to the statement of \cref{thm:allocation_bound}, in the sense that the lemma does not generally restrict the allocation probability in the single bidder case, i.e., it is possible that the allocation function allocates with higher probability than $0.842$. \cref{thm:allocation_bound} shows a bound on the allocation probability that is true for \textit{any} value of $v_1$. 

\section{Discussion}

Our work provides a rich characterization of the space of DSIC, MMIC, and OCA-proof mechanisms and their intersections.
Our main result for deterministic mechanisms (\cref{thm:det_impossibility}), shows impossibility of a non-trivial deterministic DSIC, MMIC and $1$-OCA-proof mechanism.
Although the proof we give goes through characterizing the spaces of DSIC and $1$-OCA-proof mechanisms, versus MMIC and $1$-OCA-proof mechanisms (in \cref{thm:dsic_1oca} and \cref{thm:mmic_1oca}), there are other ways to show it using the building blocks we provide. One way is by using our result for randomized mechanisms in \cref{thm:allocation_bound}.
There, we show that we cannot allocate the item in the single-bidder case with probability higher than $\sqrt{2} - \frac{1}{2} < 1$. 
Together with \cref{lem:always_or_never_allocated}, this shows the impossibility. 
%Thus, translated to a deterministic setting, we can never allocate the item in the single bidder case. Together with the characterization of \cref{lem:general_oca_char} that shows that deterministic $1$-OCA-proof mechanisms are decided by a parameter $r \geq 0$ (possibly infinite) that determines whether the highest bid is high enough to allocate the item, we can conclude that for a DSIC+MMIC+$1$-OCA-proof mechanism it must hold that $r = \infty$, and so it must be the trivial mechanism. 

Our characterization of DSIC+$1$-OCA-proof and MMIC+$1$-OCA-proof mechanisms extends known incidental results in the literature. For example, Table~1 in \cite{roughgarden2021transaction} notes that the first-price auction is MMIC+$1$-OCA-proof, the second-price auction is DSIC+$1$-OCA-proof, and that EIP-1559 (which is essentially a first-price auction with a dynamically adjusted burned reserve price) is MMIC+$1$-OCA-proof. Our characterization confirms these results, but also precisely extends them: In essence, it shows that all MMIC+$1$-OCA-proof are ``EIP-1559 - type'' mechanisms, in the sense of having a threshold bid $r$ and a constant burn, albeit with the freedom to choose any monotone payment function and not necessarily first-price payments. Similarly, for DSIC+$1$-OCA-proof mechanisms, any second-price auction with a constant burn is possible.
These results are made more important by our impossibility result (\cref{thm:det_impossibility}), since if all requirements cannot be satisfied at once, satisfying at least two of the three may be a good compromise.

Throughout the work, we analyze anonymous mechanisms.
While this is a common assumption in the literature, we wish to explore how much it affects the impossibility result of \cref{thm:det_impossibility}.
In Appendix~\ref{app:nonAnon}, we remove this restriction and develop a characterization of all, not necessarily anonymous, DSIC+MMIC+$1$-OCA-proof mechanisms.
We find that non-anonymous mechanisms that satisfy these desired properties are very restricted: all of them must have some constant unique bidder (among all possible named bidders) that can be allocated the item, where the auction follows the form of a burned posted-price auction.
This limitation is reminiscent of the ``2-user-friendly'' impossibility in Theorem 6.1 of \cite{chung2023foundations}.

%\yotam{Does MMIC+OCA->SCP?
 %   - No, because of Theorem 1.7 of Chung & Shi, the shading auction (of first-price) is MMIC+OCA but not even Bayesian SCP (because it is BIC + MMIC)
 %   - Actually, any “pay % of first-price” will do that
  %  - Can we characterize all the MMIC+OCA+!SCP mechanisms?}

\printbibliography

@InProceedings{yaish2023uncle,
  author    = {Yaish, Aviv and Stern, Gilad and Zohar, Aviv},
  booktitle = {Proceedings of the 2023 ACM SIGSAC Conference on Computerand Communications Security (CCS '23)},
  title     = {Uncle Maker: (Time)Stamping Out The Competition in Ethereum},
  year      = {2023},
  address   = {New York, NY, USA},
  publisher = {Association for Computing Machinery},
  series    = {CCS '23},
  abstract  = {We present and analyze an attack on Ethereum 1's consensus mechanism, which allows miners to obtain higher mining rewards compared to their honest peers. This attack is novel in that it relies on manipulating block timestamps and the difficulty-adjustment algorithm (DAA) to give the miner an advantage whenever block races ensue. We call our attack Uncle Maker, as it induces a higher rate of uncle blocks. We describe several variants of the attack. Among these, one that is risk-free for miners. Our attack differs from past attacks such as Selfish Mining, that have been shown to be profitable but were never observed in practice: We analyze data from Ethereum's blockchain and show that some of Ethereum's miners have been actively running a variant of this attack for several years without being detected, making this the first evidence of miner manipulation of a major consensus mechanism. We present our evidence, as well as estimates of the profits gained by attackers, at the expense of honest miners. Since several blockchains are still running Ethereum 1's protocol, we suggest concrete fixes and implement them as a patch for geth.},
  doi       = {10.1145/3576915.3616674},
  file      = {:yaish2023uncle - Uncle Maker_ (Time)Stamping Out the Competition in Ethereum.pdf:PDF},
  groups    = {Me},
  isbn      = {9798400700507},
  keywords  = {cryptocurrency, blockchain, proof of work, consensus, security},
  location  = {Copenhagen, Denmark},
  numpages  = {15},
}

@InProceedings{liu2022empirical,
  author    = {Liu, Yulin and Lu, Yuxuan and Nayak, Kartik and Zhang, Fan and Zhang, Luyao and Zhao, Yinhong},
  booktitle = {Proceedings of the 2022 ACM SIGSAC Conference on Computer and Communications Security},
  title     = {Empirical Analysis of EIP-1559: Transaction Fees, Waiting Times, and Consensus Security},
  year      = {2022},
  address   = {New York, NY, USA},
  pages     = {2099–2113},
  publisher = {Association for Computing Machinery},
  series    = {CCS '22},
  abstract  = {A transaction fee mechanism (TFM) is an essential component of a blockchain protocol. However, a systematic evaluation of the real-world impact of TFMs is still absent. Using rich data from the Ethereum blockchain, the mempool, and exchanges, we study the effect of EIP-1559, one of the earliest-deployed TFMs that depart from the traditional first-price auction paradigm. We conduct a rigorous and comprehensive empirical study to examine its causal effect on blockchain transaction fee dynamics, transaction waiting times, and consensus security. Our results show that EIP-1559 improves the user experience by mitigating intrablock differences in the gas price paid and reducing users' waiting times. However, EIP-1559 has only a small effect on gas fee levels and consensus security. In addition, we find that when Ether's price is more volatile, the waiting time is significantly higher. We also verify that a larger block size increases the presence of siblings. These findings suggest new directions for improving TFMs.},
  doi       = {10.1145/3548606.3559341},
  file      = {:liu2022empirical - Empirical Analysis of EIP 1559_ Transaction Fees, Waiting Times, and Consensus Security.pdf:PDF},
  groups    = {TFMs},
  isbn      = {9781450394505},
  keywords  = {bounded rationality, transaction fees, natural experiments, consensus security, waiting time, mechanism design, event studies, empirical analysis, eip-1559, causal inference},
  location  = {Los Angeles, CA, USA},
  numpages  = {15},
  url       = {https://doi.org/10.1145/3548606.3559341},
}

@InProceedings{ferreira2021dynamic,
  author    = {Ferreira, Matheus V. X. and Moroz, Daniel J. and Parkes, David C. and Stern, Mitchell},
  booktitle = {Proceedings of the 3rd ACM Conference on Advances in Financial Technologies},
  title     = {Dynamic Posted-Price Mechanisms for the Blockchain Transaction-Fee Market},
  year      = {2021},
  address   = {New York, NY, USA},
  pages     = {86–99},
  publisher = {Association for Computing Machinery},
  series    = {AFT '21},
  abstract  = {In recent years, prominent blockchain systems such as Bitcoin and Ethereum have experienced explosive growth in transaction volume, leading to frequent surges in demand for limited block space and causing transaction fees to fluctuate by orders of magnitude. The status quo auctions sell space using a first-price auction [27]; however, users find it difficult to estimate how much they need to bid in order to get their transactions accepted onto the chain. If they bid too low, their transactions can have long confirmation times. If they bid too high, they pay larger fees than necessary.In light of these issues, new transaction fee mechanisms have been proposed, most notably EIP-1559 [4], aiming to provide better usability. EIP-1559 is a history-dependent mechanism that relies on block utilization to adjust a base fee. We propose an alternative design - a dynamic posted-price mechanism - which uses not only block utilization but also observable bids from past blocks to compute a posted-price for subsequent blocks. We show its potential to reduce price volatility by providing examples for which the prices of EIP-1559 are unstable while the prices of the proposed mechanism are stable. More generally, whenever the demand for the blockchain stabilizes, we ask if our mechanism is able to converge to a stable state. Our main result provides sufficient conditions in a probabilistic setting for which the proposed mechanism is approximately welfare optimal and the prices are stable. Our main technical contribution towards establishing stability is an iterative algorithm that, given oracle access to a Lipschitz continuous and strictly concave function f, converges to a fixed point of f.},
  doi       = {10.1145/3479722.3480991},
  isbn      = {9781450390828},
  location  = {Arlington, Virginia},
  numpages  = {14},
  url       = {https://arxiv.org/pdf/2103.14144.pdf},
  url_video = {https://youtu.be/Tv_Rqdq9kw0},
}

@Article{roughgarden2021transaction,
  author    = {Roughgarden, Tim},
  journal   = {ACM SIGecom Exchanges},
  title     = {Transaction fee mechanism design},
  year      = {2021},
  number    = {1},
  pages     = {52--55},
  volume    = {19},
  publisher = {ACM New York, NY, USA},
}

@InProceedings{chung2023foundations,
  author    = {Chung, Hao and Shi, Elaine},
  pages     = {3856-3899},
  publisher = {Society for Industrial and Applied Mathematics},
  title     = {Foundations of Transaction Fee Mechanism Design},
  year      = {2023},
  booktitle = {Proceedings of the 2023 Annual ACM-SIAM Symposium on Discrete Algorithms (SODA)},
  doi       = {10.1137/1.9781611977554.ch150},
  url       = {https://epubs.siam.org/doi/abs/10.1137/1.9781611977554.ch150},
}

@InProceedings{lavi2019redesigning,
  author    = {Lavi, Ron and Sattath, Or and Zohar, Aviv},
  booktitle = {The World Wide Web Conference},
  title     = {Redesigning Bitcoin's Fee Market},
  year      = {2019},
  address   = {New York, NY, USA},
  pages     = {2950–2956},
  publisher = {Association for Computing Machinery},
  series    = {WWW '19},
  abstract  = {The Bitcoin payment system involves two agent types: Users that transact with the currency and pay fees and miners in charge of authorizing transactions and securing the system in return for these fees. Two of Bitcoin's challenges are (i) securing sufficient miner revenues as block rewards decrease, and (ii) alleviating the throughput limitation due to a small maximal block size cap. These issues are strongly related as increasing the maximal block size may decrease revenue due to Bitcoin's pay-your-bid approach. To decouple them, we analyze the “monopolistic auction” [8], showing: (i) its revenue does not decrease as the maximal block size increases, (ii) it is resilient to an untrusted auctioneer (the miner), and (iii) simplicity for transaction issuers (bidders), as the average gain from strategic bid shading (relative to bidding one's true maximal willingness to pay) diminishes as the number of bids increases.},
  doi       = {10.1145/3308558.3313454},
  isbn      = {9781450366748},
  keywords  = {Fee-market, Blockchain, Bitcoin, Cryptocurrency, Auction-Theory},
  location  = {San Francisco, CA, USA},
  numpages  = {7},
  url       = {https://doi.org/10.1145/3308558.3313454},
}

@Misc{yao2018incentive,
  author    = {Yao, Andrew Chi-Chih},
  title     = {An Incentive Analysis of some Bitcoin Fee Designs},
  year      = {2018},
  copyright = {arXiv.org perpetual, non-exclusive license},
  doi       = {10.48550/ARXIV.1811.02351},
  keywords  = {Computer Science and Game Theory (cs.GT), FOS: Computer and information sciences, FOS: Computer and information sciences},
  publisher = {arXiv},
}

@Article{roughgarden2020transaction,
  author     = {Tim Roughgarden},
  journal    = {CoRR},
  title      = {Transaction Fee Mechanism Design for the Ethereum Blockchain: An Economic Analysis of {EIP-1559}},
  year       = {2020},
  volume     = {abs/2012.00854},
  bibsource  = {dblp computer science bibliography, https://dblp.org},
  biburl     = {https://dblp.org/rec/journals/corr/abs-2012-00854.bib},
  eprint     = {2012.00854},
  eprinttype = {arXiv},
  file       = {:roughgarden2020transaction - Transaction Fee Mechanism Design for the Ethereum Blockchain_ an Economic Analysis of EIP 1559.pdf:PDF},
  numpages   = {58},
  timestamp  = {Fri, 04 Dec 2020 12:07:23 +0100},
}

@article{myerson1981optimal,
  title={Optimal auction design},
  author={Myerson, Roger B},
  journal={Mathematics of operations research},
  volume={6},
  number={1},
  pages={58--73},
  year={1981},
  publisher={INFORMS}
}

@Misc{chen2023bayesian,
  author        = {Xi Chen and David Simchi-Levi and Zishuo Zhao and Yuan Zhou},
  title         = {Bayesian Mechanism Design for Blockchain Transaction Fee Allocation},
  year          = {2023},
  archiveprefix = {arXiv},
  eprint        = {2209.13099},
  primaryclass  = {cs.GT},
}

@InProceedings{shi2023what,
  author    = {Elaine Shi and Hao Chung and Ke Wu},
  booktitle = {14th Innovations in Theoretical Computer Science Conference, {ITCS} 2023, January 10-13, 2023, MIT, Cambridge, Massachusetts, {USA}},
  title     = {What Can Cryptography Do for Decentralized Mechanism Design?},
  year      = {2023},
  address   = {Saarbrücken/Wadern, Germany},
  editor    = {Yael Tauman Kalai},
  pages     = {97:1--97:22},
  publisher = {Schloss Dagstuhl - Leibniz-Zentrum f{\"{u}}r Informatik},
  series    = {LIPIcs},
  volume    = {251},
  bibsource = {dblp computer science bibliography, https://dblp.org},
  biburl    = {https://dblp.org/rec/conf/innovations/ShiC023.bib},
  doi       = {10.4230/LIPIcs.ITCS.2023.97},
  timestamp = {Thu, 02 Feb 2023 16:55:54 +0100},
  url       = {https://doi.org/10.4230/LIPIcs.ITCS.2023.97},
}

@misc{gafni2023greedy,
      title={Greedy Transaction Fee Mechanisms for (Non-)myopic Miners}, 
      author={Yotam Gafni and Aviv Yaish},
      year={2023},
      eprint={2210.07793},
      archivePrefix={arXiv},
      primaryClass={cs.GT}
}

@Article{ausubel2002ascending,
  author    = {Ausubel, Lawrence M and Milgrom, Paul R},
  journal   = {The B.E. Journal of Theoretical Economics},
  title     = {Ascending Auctions with Package Bidding},
  year      = {2002},
  issn      = {2194-6124},
  month     = aug,
  number    = {1},
  volume    = {1},
  doi       = {10.2202/1534-5963.1019},
  file      = {:ausubel2002ascending - Ascending Auctions with Package Bidding.pdf:PDF},
  groups    = {TFMs},
  publisher = {Walter de Gruyter GmbH},
}

@inproceedings{honorThieves, author = {Bachrach, Yoram}, title = {Honor among Thieves: Collusion in Multi-Unit Auctions}, year = {2010}, isbn = {9780982657119}, publisher = {International Foundation for Autonomous Agents and Multiagent Systems}, address = {Richland, SC}, booktitle = {Proceedings of the 9th International Conference on Autonomous Agents and Multiagent Systems: Volume 1 - Volume 1}, pages = {617–624}, numpages = {8}, keywords = {collusion, Shapley value, bid rigging, core, cooperative game theory, Weber set}, location = {Toronto, Canada}, series = {AAMAS '10} }

@article{mcAfeeMcMillan,
 ISSN = {00028282},
 URL = {http://www.jstor.org/stable/2117323},
 abstract = {We characterize coordinated bidding strategies in two cases: a weak cartel, in which the bidders cannot make side-payments; and a strong cartel, in which the cartel members can exclude new entrants and can make transfer payments. The weak cartel can do no better than have its members submit identical bids. The strong cartel in effect reauctions the good among the cartel members.},
 author = {R. Preston McAfee and John McMillan},
 journal = {The American Economic Review},
 number = {3},
 pages = {579--599},
 publisher = {American Economic Association},
 title = {Bidding Rings},
 urldate = {2024-01-06},
 volume = {82},
 year = {1992}
}

@article{rachmilevitchFirstPrice,
	author = {Shiran Rachmilevitch},
	doi = {https://doi.org/10.1016/j.geb.2013.11.007},
	issn = {0899-8256},
	journal = {Games and Economic Behavior},
	keywords = {Auctions, Collusion, Repeated games},
	pages = {224-230},
	title = {First-best collusion without communication},
	url = {https://www.sciencedirect.com/science/article/pii/S0899825613001607},
	volume = {83},
	year = {2014},
	bdsk-url-1 = {https://www.sciencedirect.com/science/article/pii/S0899825613001607},
	bdsk-url-2 = {https://doi.org/10.1016/j.geb.2013.11.007}}

@article{valiant2007collusion,
  title={Collusion-Resilient Revenue In Combinatorial Auctions},
  author={Valiant, Paul and Micali, Silvio},
  year={2007}
}

@article{optimalCollusionProof,
	author = {Yeon-Koo Che and Jinwoo Kim},
	doi = {https://doi.org/10.1016/j.jet.2008.07.004},
	issn = {0022-0531},
	journal = {Journal of Economic Theory},
	keywords = {Collusion on participation, Subgroup collusion, Multiple bidding cartels, An exclusion principle},
	number = {2},
	pages = {565-603},
	title = {Optimal collusion-proof auctions},
	url = {https://www.sciencedirect.com/science/article/pii/S0022053108001270},
	volume = {144},
	year = {2009},
	bdsk-url-1 = {https://www.sciencedirect.com/science/article/pii/S0022053108001270},
	bdsk-url-2 = {https://doi.org/10.1016/j.jet.2008.07.004}}

@misc{fox2023censorship,
      title={Censorship Resistance in On-Chain Auctions}, 
      author={Elijah Fox and Mallesh Pai and Max Resnick},
      year={2023},
      eprint={2301.13321},
      archivePrefix={arXiv},
      primaryClass={econ.TH}
}

@article{rachmilevitchSecondPrice,
	author = {Shiran Rachmilevitch},
	doi = {https://doi.org/10.1016/j.geb.2015.06.008},
	issn = {0899-8256},
	journal = {Games and Economic Behavior},
	keywords = {Second-price auctions, Bribing, Collusion, Signaling},
	pages = {191-205},
	title = {Bribing in second-price auctions},
	url = {https://www.sciencedirect.com/science/article/pii/S0899825615000962},
	volume = {92},
	year = {2015},
	bdsk-url-1 = {https://www.sciencedirect.com/science/article/pii/S0899825615000962},
	bdsk-url-2 = {https://doi.org/10.1016/j.geb.2015.06.008}}

@article{esoSchummer2000,
	author = {P{\'e}ter Es{\H o} and James Schummer},
	doi = {https://doi.org/10.1016/j.geb.2003.06.005},
	issn = {0899-8256},
	journal = {Games and Economic Behavior},
	keywords = {Second-price auction, Collusion, Bribing, Signaling},
	number = {2},
	pages = {299-324},
	title = {Bribing and signaling in second price auctions},
	url = {https://www.sciencedirect.com/science/article/pii/S0899825603001878},
	volume = {47},
	year = {2004},
	bdsk-url-1 = {https://www.sciencedirect.com/science/article/pii/S0899825603001878},
	bdsk-url-2 = {https://doi.org/10.1016/j.geb.2003.06.005}}

@article{goereeLien2016,
	author = {Goeree, Jacob K. and Lien, Yuanchuan},
	doi = {https://doi.org/10.3982/TE1198},
	eprint = {https://onlinelibrary.wiley.com/doi/pdf/10.3982/TE1198},
	journal = {Theoretical Economics},
	keywords = {Core outcomes, Vickrey auction, substitutes, complements, competitive equilibrium, Bayesian implementability, D44},
	number = {1},
	pages = {41-52},
	title = {On the impossibility of core-selecting auctions},
	url = {https://onlinelibrary.wiley.com/doi/abs/10.3982/TE1198},
	volume = {11},
	year = {2016},
	bdsk-url-1 = {https://onlinelibrary.wiley.com/doi/abs/10.3982/TE1198},
	bdsk-url-2 = {https://doi.org/10.3982/TE1198}}

@inproceedings{goldbergHartline, author = {Goldberg, Andrew V. and Hartline, Jason D.}, title = {Collusion-Resistant Mechanisms for Single-Parameter Agents}, year = {2005}, isbn = {0898715857}, publisher = {Society for Industrial and Applied Mathematics}, address = {USA},  booktitle = {Proceedings of the Sixteenth Annual ACM-SIAM Symposium on Discrete Algorithms}, pages = {620–629}, numpages = {10}, location = {Vancouver, British Columbia}, series = {SODA '05} }

@inproceedings{biddingClubs, author = {Leyton-Brown, Kevin and Shoham, Yoav and Tennenholtz, Moshe}, title = {Bidding Clubs: Institutionalized Collusion in Auctions}, year = {2000}, isbn = {1581132727}, publisher = {Association for Computing Machinery}, address = {New York, NY, USA}, url = {https://doi.org/10.1145/352871.352899}, doi = {10.1145/352871.352899}, booktitle = {Proceedings of the 2nd ACM Conference on Electronic Commerce}, pages = {253–259}, numpages = {7}, location = {Minneapolis, Minnesota, USA}, series = {EC '00} }

@article{deckelbaumMicali,
	author = {Alan Deckelbaum and Silvio Micali},
	doi = {https://doi.org/10.1016/j.geb.2016.03.008},
	issn = {0899-8256},
	journal = {Games and Economic Behavior},
	keywords = {Efficiency, Collusion, Dominant strategies, Resiliency, VCG mechanism},
	note = {John Nash Memorial},
	pages = {83-93},
	title = {Collusion, efficiency, and dominant strategies},
	url = {https://www.sciencedirect.com/science/article/pii/S0899825616300136},
	volume = {103},
	year = {2017},
	bdsk-url-1 = {https://www.sciencedirect.com/science/article/pii/S0899825616300136},
	bdsk-url-2 = {https://doi.org/10.1016/j.geb.2016.03.008}}

@article{greenLaffont,
 ISSN = {00346527, 1467937X},
 URL = {http://www.jstor.org/stable/2297048},
 author = {Jerry Green and Jean-Jacques Laffont},
 journal = {The Review of Economic Studies},
 number = {2},
 pages = {243--254},
 publisher = {[Oxford University Press, Review of Economic Studies, Ltd.]},
 title = {On Coalition Incentive Compatibility},
 urldate = {2024-01-06},
 volume = {46},
 year = {1979}
}

@article{Stigler1964,
 ISSN = {00223808, 1537534X},
 URL = {http://www.jstor.org/stable/1828791},
 author = {George J. Stigler},
 journal = {Journal of Political Economy},
 number = {1},
 pages = {44--61},
 publisher = {University of Chicago Press},
 title = {A Theory of Oligopoly},
 urldate = {2024-01-06},
 volume = {72},
 year = {1964}
}

@article{hartline2006lectures,
  title={Lectures on optimal mechanism design},
  author={Hartline, Jason},
  journal={Lecture notes},
  year={2006}
}

@Article{reijsbergen2021transaction,
  author  = {Dani{\"e}l Reijsbergen and Shyam Sridhar and Barnab{\'e} Monnot and Stefanos Leonardos and Stratis Skoulakis and Georgios Piliouras},
  journal = {2021 IEEE International Conference on Blockchain (Blockchain)},
  title   = {Transaction Fees on a Honeymoon: Ethereum's EIP-1559 One Month Later},
  year    = {2021},
  pages   = {196-204},
  file    = {:reijsbergen2021transaction - Transaction Fees on a Honeymoon_ Ethereum's EIP 1559 One Month Later.pdf:PDF},
}

@Misc{bahrani2023transaction,
  author        = {Bahrani, Maryam and Garimidi, Pranav and Roughgarden, Tim},
  title         = {Transaction Fee Mechanism Design with Active Block Producers},
  year          = {2023},
  month         = jul,
  abstract      = {The incentive-compatibility properties of blockchain transaction fee mechanisms have been investigated with *passive* block producers that are motivated purely by the net rewards earned at the consensus layer. This paper introduces a model of *active* block producers that have their own private valuations for blocks (representing, for example, additional value derived from the application layer). The block producer surplus in our model can be interpreted as one of the more common colloquial meanings of the term ``MEV.'' The main results of this paper show that transaction fee mechanism design is fundamentally more difficult with active block producers than with passive ones: with active block producers, no non-trivial or approximately welfare-maximizing transaction fee mechanism can be incentive-compatible for both users and block producers. These results can be interpreted as a mathematical justification for the current interest in augmenting transaction fee mechanisms with additional components such as order flow auctions, block producer competition, trusted hardware, or cryptographic techniques.},
  archiveprefix = {arXiv},
  copyright     = {arXiv.org perpetual, non-exclusive license},
  doi           = {10.48550/ARXIV.2307.01686},
  eprint        = {2307.01686},
  file          = {:bahrani2023transaction - Transaction Fee Mechanism Design with Active Block Producers.pdf:PDF},
  groups        = {TFMs},
  keywords      = {Computer Science and Game Theory (cs.GT), Cryptography and Security (cs.CR), Distributed / Parallel / Cluster Computing (cs.DC), Data Structures and Algorithms (cs.DS), Theoretical Economics (econ.TH), FOS: Computer and information sciences, FOS: Economics and business},
  primaryclass  = {cs.GT},
  publisher     = {arXiv},
}

@Article{basu2023stablefees,
  author    = {Basu, Soumya and Easley, David and O’Hara, Maureen and Sirer, Emin Gün},
  journal   = {Management Science},
  title     = {StableFees: A Predictable Fee Market for Cryptocurrencies},
  year      = {2023},
  issn      = {1526-5501},
  month     = nov,
  number    = {11},
  pages     = {6508--6524},
  volume    = {69},
  abstract  = {Blockchain-based cryptocurrencies must solve the problem of assigning priorities to competing transactions. The most widely used mechanism involves each transaction offering a fee to be paid once the transaction is processed, but this discriminatory price mechanism fails to yield stable equilibria with predictable prices. We propose an alternate fee setting mechanism, StableFees, that is based on uniform price auctions. We prove that our proposed protocol is free from manipulation by users and miners as the number of users and miners increases and show empirically that gains from manipulation are small in practice. We show that StableFees reduces the fees paid by users and reduces the variance of fee income to miners. Data from December 2017 show that, if implemented, StableFees could have saved Bitcoin users $272,528,000 USD in transaction fees while reducing the variance of miner’s fee income, on average, by a factor of 7.4. We argue that our fee protocol also has important social welfare and environmental benefits.},
  doi       = {10.1287/mnsc.2023.4735},
  file      = {:basu2023stablefees - StableFees_ a Predictable Fee Market for Cryptocurrencies.pdf:PDF},
  groups    = {TFMs},
  publisher = {Institute for Operations Research and the Management Sciences (INFORMS)},
}

@Misc{nisan2023serial,
  author   = {Nisan, Noam},
  title    = {Serial Monopoly on Blockchains},
  year     = {2023},
  abstract = {We study the following problem that is motivated by Blockchains where “miners” are serially given the monopoly for assembling transactions into the next block. Our model has a single good that is sold repeatedly every day where new demand for the good arrives every day. The novel element in our model is that all unsatisfied demand from one day remains in the system and is added to the new demand of the next day. Every day there is a new monopolist that gets to sell a fixed supply s of the good and naturally chooses to do so at the monopolist’s price for the combined demand. What will the dynamics of the prices chosen by the sequence of monopolists be? What level of efficiency will be obtained in the long term? We start with a non-strategic analysis of users’ behavior and our main result shows that prices keep fluctuating wildly and this is an endogenous property of the model and happens even when demand is stable with nothing stochastic in the model. These price fluctuations underscore the necessity of an analysis under strategic behavior of the users, which we show results in the prices being stable at the market equilibrium price.},
  groups   = {TFMs},
  url      = {https://www.cs.huji.ac.il/~noam/publications/ser-mon.pdf},
}

@InProceedings{hougaard2023farsighted,
  author    = {Hougaard, Jens Leth and Pourpouneh, Mohsen},
  booktitle = {2023 IEEE International Conference on Blockchain and Cryptocurrency (ICBC)},
  title     = {Farsighted Miners under Transaction Fee Mechanism EIP1559},
  year      = {2023},
  month     = may,
  publisher = {IEEE},
  abstract  = {We investigate the recent fee mechanism EIP1559 of the Ethereum network. Whereas previous studies have focused on myopic miners, we here focus on strategic miners in the sense of miners being able to reason about the future blocks. We derive expressions for optimal miner behavior (in terms of setting block sizes) in the case of 2-block foresight and varying degrees of hashing power. Results indicate that a sufficiently large mining pool will have enough hashing power to gain by strategic foresight. We further use a simulation study to examine the impact of both 2-block and 3-block foresight. In particular, the simulation study indicates that for realistic levels of hashing power, mining pools do not gain from being able to reason more than 2 blocks ahead. Moreover, even though the presence of strategic miners increase the variation in block sizes and potentially empty blocks, overall system throughput tend to increase slightly compared to myopic mining.},
  doi       = {10.1109/icbc56567.2023.10174974},
  file      = {:hougaard2023farsighted - Farsighted Miners under Transaction Fee Mechanism EIP1559.pdf:PDF},
  groups    = {TFMs},
  url       = {https://doi.org/10.1109/ICBC56567.2023.10174974},
}

@Misc{kiayias2023tiered,
  author        = {Kiayias, Aggelos and Koutsoupias, Elias and Lazos, Philip and Panagiotakos, Giorgos},
  month         = apr,
  title         = {Tiered Mechanisms for Blockchain Transaction Fees},
  year          = {2023},
  abstract      = {Blockchain systems come with the promise of being inclusive for a variety of decentralized applications (DApps) that can serve different purposes and have different urgency requirements. Despite this, the transaction fee mechanisms currently deployed in popular platforms as well as previous modeling attempts for the associated mechanism design problem focus on an approach that favors increasing prices in favor of those clients who value immediate service during periods of congestion. To address this issue, we introduce a model that captures the traffic diversity of blockchain systems and a tiered pricing mechanism that is capable of implementing more inclusive transaction policies. In this model, we demonstrate formally that EIP-1559, the transaction fee mechanism currently used in Ethereum, is not inclusive and demonstrate experimentally that its prices surge horizontally during periods of congestion. On the other hand, we prove formally that our mechanism achieves stable prices in expectation and we provide experimental results that establish that prices for transactions can be kept low for low urgency transactions, resulting in a diverse set of transaction types entering the blockchain. At the same time, perhaps surprisingly, our mechanism does not necessarily sacrifice revenue since the lowering of the prices for low urgency transactions can be covered from high urgency ones due to the price discrimination ability of the mechanism.},
  archiveprefix = {arXiv},
  copyright     = {arXiv.org perpetual, non-exclusive license},
  doi           = {10.48550/ARXIV.2304.06014},
  eprint        = {2304.06014},
  file          = {:kiayias2023tiered - Tiered Mechanisms for Blockchain Transaction Fees.pdf:PDF},
  groups        = {TFMs},
  keywords      = {Computer Science and Game Theory (cs.GT), FOS: Computer and information sciences},
  primaryclass  = {cs.GT},
  publisher     = {arXiv},
}

@InProceedings{leonardos2023optimality,
  author    = {Leonardos, Stefanos and Reijsbergen, Daniël and Monnot, Barnabé and Piliouras, Georgios},
  pages     = {346--362},
  publisher = {Springer Nature Switzerland},
  title     = {Optimality Despite Chaos in Fee Markets},
  year      = {2023},
  isbn      = {9783031477515},
  month     = dec,
  abstract  = {Transaction fee markets are essential components of blockchain economies, as they resolve the inherent scarcity in the number of transactions that can be added to each block. In early blockchain protocols, this scarcity was resolved through a first-price auction in which users were forced to guess appropriate bids from recent blockchain data. Ethereum’s EIP-1559 fee market reform streamlines this process through the use of a base fee that is increased (or decreased) whenever a block exceeds (or fails to meet) a specified target block size. Previous work has found that the EIP-1559 mechanism may lead to a base fee process that is inherently chaotic, in which case the base fee does not converge to a fixed point even under ideal conditions. However, the impact of this chaotic behavior on the fee market’s main design goal – blocks whose long-term average size equals the target – has not previously been explored. As our main contribution, we derive near-optimal upper and lower bounds for the time-average block size in the EIP-1559 mechanism despite its possibly chaotic evolution. Our lower bound is equal to the target utilization level whereas our upper bound is higher than optimal. Empirical evidence is shown in great agreement with these theoretical predictions. Specifically, the historical average was larger than the target rage under Proof-of-Work and decreased to after Ethereum’s transition to Proof-of-Stake. We also find that an approximate version of EIP-1559 achieves optimality even in the absence of convergence.},
  booktitle = {Lecture Notes in Computer Science},
  doi       = {10.1007/978-3-031-47751-5_20},
  file      = {:leonardos2023optimality - Optimality Despite Chaos In Fee Markets.pdf:PDF},
  groups    = {TFMs},
  issn      = {1611-3349},
}

@Article{azouvi2023base,
  author    = {Azouvi, Sarah and Goren, Guy and Heimbach, Lioba and Hicks, Alexander},
  title     = {Base Fee Manipulation In Ethereum's EIP-1559 Transaction Fee Mechanism},
  year      = {2023},
  month     = apr,
  abstract  = {In 2021 Ethereum adjusted the transaction pricing mechanism by implementing EIP-1559, which introduces the base fee - a network fee that is burned and dynamically adjusts to the network demand. The authors of the Ethereum Improvement Proposal (EIP) noted that a miner with more than 50\% of the mining power could be incentivized to deviate from the honest mining strategy. Instead, such a miner could propose a series of empty blocks to artificially lower demand and increase her future rewards. In this paper, we generalize this attack and show that under rational player behavior, deviating from the honest strategy can be profitable for a miner with less than 50\% of the mining power. We show that even when miners do not collaborate, it is at times rational for smaller miners to join the attack. Finally, we propose a mitigation to address the identified vulnerability.},
  copyright = {Creative Commons Attribution 4.0 International license},
  doi       = {10.4230/LIPICS.DISC.2023.11},
  eprint    = {2304.11478},
  file      = {:azouvi2023base - Base Fee Manipulation in Ethereum's EIP 1559 Transaction Fee Mechanism.pdf:PDF},
  groups    = {TFMs},
  keywords  = {blockchain, Ethereum, transaction fee mechanism, EIP-1559, Theory of computation → Algorithmic game theory and mechanism design, Applied computing → Economics},
  language  = {en},
  publisher = {Schloss Dagstuhl – Leibniz-Zentrum für Informatik},
}

@Misc{damle2024designing,
  author        = {Damle, Sankarshan and Padala, Manisha and Gujar, Sujit},
  month         = jan,
  title         = {Designing Redistribution Mechanisms for Reducing Transaction Fees in Blockchains},
  year          = {2024},
  abstract      = {Blockchains deploy Transaction Fee Mechanisms (TFMs) to determine which user transactions to include in blocks and determine their payments (i.e., transaction fees). Increasing demand and scarce block resources have led to high user transaction fees. As these blockchains are a public resource, it may be preferable to reduce these transaction fees. To this end, we introduce Transaction Fee Redistribution Mechanisms (TFRMs) -- redistributing VCG payments collected from such TFM as rebates to minimize transaction fees. Classic redistribution mechanisms (RMs) achieve this while ensuring Allocative Efficiency (AE) and User Incentive Compatibility (UIC). Our first result shows the non-triviality of applying RM in TFMs. More concretely, we prove that it is impossible to reduce transaction fees when (i) transactions that are not confirmed do not receive rebates and (ii) the miner can strategically manipulate the mechanism. Driven by this, we propose \emph{Robust} TFRM (\textsf{R-TFRM}): a mechanism that compromises on an honest miner's individual rationality to guarantee strictly positive rebates to the users. We then introduce \emph{robust} and \emph{rational} TFRM (\textsf{R}$^2$\textsf{-TFRM}) that uses trusted on-chain randomness that additionally guarantees miner's individual rationality (in expectation) and strictly positive rebates. Our results show that TFRMs provide a promising new direction for reducing transaction fees in public blockchains.},
  archiveprefix = {arXiv},
  copyright     = {Creative Commons Attribution 4.0 International},
  doi           = {10.48550/ARXIV.2401.13262},
  eprint        = {2401.13262},
  file          = {:damle2024designing - Designing Redistribution Mechanisms for Reducing Transaction Fees in Blockchains.pdf:PDF},
  groups        = {TFMs},
  keywords      = {Computer Science and Game Theory (cs.GT), Artificial Intelligence (cs.AI), Cryptography and Security (cs.CR), FOS: Computer and information sciences},
  primaryclass  = {cs.GT},
  publisher     = {arXiv},
}

@InProceedings{milionis2022framework,
  author     = {Milionis, Jason and Hirsch, Dean and Arditi, Andy and Garimidi, Pranav},
  booktitle  = {Proceedings of the 2022 ACM CCS Workshop on Decentralized Finance and Security},
  title      = {A Framework for Single-Item NFT Auction Mechanism Design},
  year       = {2022},
  month      = nov,
  publisher  = {ACM},
  series     = {CCS ’22},
  abstract   = {Lately, Non-Fungible Tokens (NFTs), i.e., uniquely discernible assets on a blockchain, have skyrocketed in popularity by addressing a broad audience. However, the typical NFT auctioning procedures are conducted in various, ad hoc ways, while mostly ignoring the context that the blockchain provides, i.e., new possibilities, but at the same time new challenges in auction design. One of the main targets of this work is to shed light on the vastly unexplored design space of NFT Auction Mechanisms, especially in those characteristics that fundamentally differ from traditional and more contemporaneous forms of auctions. We focus on the case that bidders have a valuation for the auctioned NFT, i.e., what we term the single-item NFT auction case. In this setting, we formally define an NFT Auction Mechanism, give the properties that we would ideally like a perfect mechanism to satisfy (broadly known as incentive compatibility and collusion resistance) and prove that it is impossible to have such a perfect mechanism. Even though we cannot have an all-powerful protocol like that, we move on to consider relaxed notions of those properties that we may desire the protocol to satisfy, as a trade-off between implementability and economic guarantees. Specifically, we define the notion of an equilibrium-truthful auction, where neither the seller nor the bidders can improve their utility by acting non-truthfully, so long as the counter-party acts truthfully. We also define asymptotically second-price auctions, in which the seller does not lose asymptotically any revenue in comparison to the theoretically-optimal (static) second-price sealed-bid auction, in the case that the bidders' valuations are drawn independently from some distribution. We showcase why these two are very desirable properties for an auction mechanism to enjoy, and construct the first known NFT Auction Mechanism which provably possesses such formal guarantees.},
  collection = {CCS ’22},
  doi        = {10.1145/3560832.3563436},
  file       = {:milionis2022framework - A Framework for Single Item NFT Auction Mechanism Design.pdf:PDF},
  groups     = {TFMs},
}

@Misc{nourbakhsh2024transaction,
  author   = {Nourbakhsh, Mohammad Sadegh and Hao, Feng and Jhumka, Arshad},
  title    = {Transaction Fee Mechanism For Order-Sensitive Blockchain-based Applications},
  year     = {2024},
  abstract = {Demand for blockchains such as Bitcoin and Ethereum far exceeds supply, thereby requiring a selection mechanism that, from a transaction pool, chooses a subset of transactions to be included “onchain”. Historically, every transaction submitted to the pool is associated with a bid (in the blockchain’s native currency). A miner then decides which set of transactions from the pool should be included in a block. When the block is published, the bid of each included transaction is transferred from its creator to the miner. However, in newer applications such as decentralised finance (DeFi), transaction inclusion in a block is no longer sufficient. In fact, the order in which the transaction is executed is of paramount importance. While research exists on mitigating transaction ordering manipulations, there is a lack of work on transaction fee mechanisms (TFMs) that are order-robust. This paper investigates order-robust TFMs from a mechanism design perspective and shows several impossibility results. For instance, we demonstrate that the recent EIP-1559 TFM is not incentive-compatible for order-sensitive transactions. On the other hand, we present and prove a necessary condition for an order-robust TFM.},
  file     = {:nourbakhsh2024transaction - Transaction Fee Mechanism for Order Sensitive Blockchain Based Applications.pdf:PDF},
  groups   = {TFMs},
  url      = {https://www.dcs.warwick.ac.uk/~fenghao/files/CBT_Order_Oriented_TFM.pdf},
}

@InProceedings{bahrani2023when,
  author    = {Bahrani, Maryam and Garimidi, Pranav and Roughgarden, Tim},
  title     = {When Bidders Are DAOs},
  year      = {2023},
  publisher = {Schloss Dagstuhl – Leibniz-Zentrum für Informatik},
  abstract  = {In a typical decentralized autonomous organization (DAO), people organize themselves into a group that is programmatically managed. DAOs can act as bidders in auctions (with ConstitutionDAO being one notable example), with a DAO’s bid typically treated by the auctioneer as if it had been submitted by an individual, without regard to any details of the internal DAO dynamics. The goal of this paper is to study auctions in which the bidders are DAOs. More precisely, we consider the design of two-level auctions in which the "participants" are groups of bidders rather than individuals. Bidders form DAOs to pool resources, but must then also negotiate the terms by which the DAO’s winnings are shared. We model the outcome of a DAO’s negotiations through an aggregation function (which aggregates DAO members' bids into a single group bid) and a budget-balanced cost-sharing mechanism (that determines DAO members' access to the DAO’s allocation and distributes the aggregate payment demanded from the DAO to its members). DAOs' bids are processed by a direct-revelation mechanism that has no knowledge of the DAO structure (and thus treats each DAO as an individual). Within this framework, we pursue two-level mechanisms that are incentive-compatible (with truthful bidding a dominant strategy for each member of each DAO) and approximately welfare-optimal. We prove that, even in the case of a single-item auction, the DAO dynamics hidden from the outer mechanism preclude incentive-compatible welfare maximization: No matter what the outer mechanism and the cost-sharing mechanisms used by DAOs, the welfare of the resulting two-level mechanism can be a ≈ ln n factor less than the optimal welfare (in the worst case over DAOs and valuation profiles). We complement this lower bound with a natural two-level mechanism that achieves a matching approximate welfare guarantee. This upper bound also extends to multi-item auctions in which individuals have additive valuations. Finally, we show that our positive results cannot be extended much further: Even in multi-item settings in which bidders have unit-demand valuations, truthful two-level mechanisms form a highly restricted class and as a consequence cannot guarantee any non-trivial approximation of the maximum social welfare.},
  copyright = {Creative Commons Attribution 4.0 International license},
  doi       = {10.4230/LIPICS.AFT.2023.21},
  file      = {:bahrani2023when - When Bidders Are DAOs (1).pdf:PDF},
  groups    = {TFMs},
  keywords  = {Auctions, DAOs, Applied computing → Online auctions},
  language  = {en},
  url       = {https://doi.org/10.4230/LIPIcs.AFT.2023.21},
}

@article{competitiveAuctions,
title = {Competitive auctions},
journal = {Games and Economic Behavior},
volume = {55},
number = {2},
pages = {242-269},
year = {2006},
note = {Mini Special Issue: Electronic Market Design},
issn = {0899-8256},
doi = {https://doi.org/10.1016/j.geb.2006.02.003},
url = {https://www.sciencedirect.com/science/article/pii/S0899825606000303},
author = {Andrew V. Goldberg and Jason D. Hartline and Anna R. Karlin and Michael Saks and Andrew Wright},
abstract = {We study a class of single-round, sealed-bid auctions for an item in unlimited supply, such as a digital good. We introduce the notion of competitive auctions. A competitive auction is truthful (i.e. encourages bidders to bid their true valuations) and on all inputs yields profit that is within a constant factor of the profit of the optimal single sale price. We justify the use of optimal single price profit as a benchmark for evaluating a competitive auctions profit. We exhibit several randomized competitive auctions and show that there is no symmetric deterministic competitive auction. Our results extend to bounded supply markets, for which we also give competitive auctions.}
}

@misc{maximizingMinerRevenue,
      author = {Ke Wu and Elaine Shi and Hao Chung},
      title = {Maximizing Miner Revenue in Transaction Fee Mechanism Design},
      howpublished = {Cryptology ePrint Archive, Paper 2023/283},
      year = {2023},
      note = {\url{https://eprint.iacr.org/2023/283}},
      url = {https://eprint.iacr.org/2023/283}
}

@article{randomPower,
author = {Dobzinski, Shahar and Dughmi, Shaddin},
title = {On the Power of Randomization in Algorithmic Mechanism Design},
journal = {SIAM Journal on Computing},
volume = {42},
number = {6},
pages = {2287-2304},
year = {2013},
doi = {10.1137/090780146},

URL = { 
    
        https://doi.org/10.1137/090780146
    
    

},
eprint = { 
    
        https://doi.org/10.1137/090780146
    
    

}

}

@Misc{bunz2017proofs,
  author   = {B{\"u}nz, Benedikt and Goldfeder, Steven and Bonneau, Joseph},
  title    = {Proofs-of-delay and randomness beacons in ethereum},
  year     = {2017},
  abstract = {Blockchains generated using a proofof-work consensus protocol, such as Bitcoin or Ethereum, are promising sources of public randomness. However, the randomness is subject to manipulation by the miners generating the blockchain. A general defense is to apply a delay function, preventing malicious miners from computing the random output until it is too late to manipulate it. Ideally, this delay function can provide a short proofof-delay that is efficient enough to be verified within a smart contract, enabling the randomness source to be directly by smart contracts. In this paper we describe the challenges of solving this problem given the extremely limited computational capacity available in Ethereum, the most popular generalpurpose smart contract framework to date. We introduce a novel multi-round protocol for verifying delay functions using a refereed delegation model. We provide a prototype Ethereum implementation to analyze performance and find that it is feasible in terms of gas costs, costing roughly 180000 gas (US \$0.111) to post a beacon and 720000 gas (US \$0.98) to resolve a dispute. We also discuss the incentive challenges raised by providing a secure randomness beacon as a public good.},
  file     = {:bunz2017proofs - Proofs of Delay and Randomness Beacons in Ethereum.pdf:PDF},
  journal  = {IEEE Security and Privacy on the blockchain (IEEE S\&B)},
}

@InProceedings{choi2023sok,
  author    = {K. Choi and A. Manoj and J. Bonneau},
  booktitle = {2023 IEEE Symposium on Security and Privacy (SP)},
  title     = {SoK: Distributed Randomness Beacons},
  year      = {2023},
  address   = {Los Alamitos, CA, USA},
  month     = {5},
  pages     = {75-92},
  publisher = {IEEE Computer Society},
  abstract  = {Motivated and inspired by the emergence of blockchains, many new protocols have recently been proposed for generating publicly verifiable randomness in a distributed yet secure fashion. These protocols work under different setups and assumptions, use various cryptographic tools, and entail unique trade-offs and characteristics. In this paper, we systematize the design of distributed randomness beacons (DRBs) as well as the cryptographic building blocks they rely on. We evaluate protocols on two key security properties, unbiasability and unpredictability, and discuss common attack vectors for predicting or biasing the beacon output and the countermeasures employed by protocols. We also compare protocols by communication and computational efficiency. Finally, we provide insights on the applicability of different protocols in various deployment scenarios and highlight possible directions for further research.},
  doi       = {10.1109/SP46215.2023.00192},
  file      = {:choi2023sok - SoK_ Distributed Randomness Beacons.pdf:PDF},
  url       = {https://doi.ieeecomputersociety.org/10.1109/SP46215.2023.00192},
}

@INPROCEEDINGS{multiUnitApproxTwo,
  author={Lavi, R. and Ahuva Mu'alem and Nisan, N.},
  booktitle={44th Annual IEEE Symposium on Foundations of Computer Science, 2003. Proceedings.}, 
  title={Towards a characterization of truthful combinatorial auctions}, 
  year={2003},
  volume={},
  number={},
  pages={574-583},
  keywords={Cost accounting;Algorithm design and analysis;Computer science;Polynomials;Design optimization;Distributed computing;IP networks;Communication networks;Environmental economics;Abstracts},
  doi={10.1109/SFCS.2003.1238230}}

@article{nisanRonen,
title = {Algorithmic Mechanism Design},
journal = {Games and Economic Behavior},
volume = {35},
number = {1},
pages = {166-196},
year = {2001},
issn = {0899-8256},
doi = {https://doi.org/10.1006/game.1999.0790},
url = {https://www.sciencedirect.com/science/article/pii/S089982569990790X},
author = {Noam Nisan and Amir Ronen},
}

@InProceedings{alturki2020statistical,
  author    = {Alturki, Musab A. and Roşu, Grigore},
  pages     = {337--349},
  publisher = {Springer International Publishing},
  title     = {Statistical Model Checking of RANDAO’s Resilience to Pre-computed Reveal Strategies},
  year      = {2020},
  isbn      = {9783030549947},
  abstract  = {RANDAO is a commit-reveal scheme for generating pseudo-random numbers in a decentralized fashion. The scheme is used in emerging blockchain systems as it is widely believed to provide randomness that is unpredictable and hard to manipulate by maliciously behaving nodes. However, RANDAO may still be susceptible to look-ahead attacks, in which an attacker (controlling a subset of nodes in the network) may attempt to pre-compute the outcomes of (possibly many) reveal strategies, and thus may bias the generated random number to his advantage. In this work, we formally evaluate resilience of RANDAO against such attacks. We first develop a probabilistic model in rewriting logic of RANDAO, and then apply statistical model checking and quantitative verification algorithms (using Maude and PVeStA) to analyze two different properties that provide different measures of bias that the attacker could potentially achieve using pre-computed strategies. We show through this analysis that unless the attacker is already controlling a sizable percentage of nodes while aggressively attempting to maximize control of the nodes selected to participate in the process, the expected achievable bias is quite limited.},
  booktitle = {Formal Methods. FM 2019 International Workshops},
  doi       = {10.1007/978-3-030-54994-7_25},
  file      = {:alturki2020statistical - Statistical Model Checking of RANDAO’s Resilience to Pre Computed Reveal Strategies.pdf:PDF},
  issn      = {1611-3349},
}

@Misc{edgington2023upgrading,
  author   = {Ben Edgington},
  month    = sep,
  title    = {Upgrading Ethereum | 2.9.2 Randomness},
  year     = {2023},
  abstract = {Assigning beacon chain duties unpredictably is an important defence against some attacks.
The beacon chain maintains a RANDAO to accumulate randomness.
Duties such as proposing blocks, committee assignments, and sync committee participation are assigned based on the RANDAO, with a limited lookahead period.
Block proposers verifiably contribute randomness to the RANDAO via BLS signatures over the epoch number.
Validators are able to bias the RANDAO to a small extent, but this is not significant problem in practice.},
  file     = {:edgington2023upgrading - Upgrading Ethereum _ 2.9.2 Randomness.html:URL},
  url      = {https://web.archive.org/web/20240204094612/https://eth2book.info/capella/part2/building_blocks/randomness/},
}

@InProceedings{zhou2023sok,
  author    = {L. Zhou and X. Xiong and J. Ernstberger and S. Chaliasos and Z. Wang and Y. Wang and K. Qin and R. Wattenhofer and D. Song and A. Gervais},
  booktitle = {2023 IEEE Symposium on Security and Privacy (SP)},
  title     = {SoK: Decentralized Finance (DeFi) Attacks},
  year      = {2023},
  address   = {Los Alamitos, CA, USA},
  month     = may,
  pages     = {2444-2461},
  publisher = {IEEE Computer Society},
  abstract  = {Within just four years, the blockchain-based Decentralized Finance (DeFi) ecosystem has accumulated a peak total value locked (TVL) of more than 253 billion USD. This surge in DeFi’s popularity has, unfortunately, been accompanied by many impactful incidents. According to our data, users, liquidity providers, speculators, and protocol operators suffered a total loss of at least 3.24 billion USD from Apr 30, 2018 to Apr 30, 2022. Given the blockchain’s transparency and increasing incident frequency, two questions arise: How can we systematically measure, evaluate, and compare DeFi incidents? How can we learn from past attacks to strengthen DeFi security? In this paper, we introduce a common reference frame to systematically evaluate and compare DeFi incidents, including both attacks and accidents. We investigate 77 academic papers, 30 audit reports, and 181 real-world incidents. Our data reveals several gaps between academia and the practitioners’ community. For example, few academic papers address &quot;price oracle attacks&quot; and &quot;permissonless interactions&quot;, while our data suggests that they are the two most frequent incident types (15\% and 10.5\% correspondingly). We also investigate potential defenses, and find that: (i) 103 (56\%) of the attacks are not executed atomically, granting a rescue time frame for defenders; (ii) bytecode similarity analysis can at least detect 31 vulnerable/23 adversarial contracts; and (iii) 33 (15.3\%) of the adversaries leak potentially identifiable information by interacting with centralized exchanges.},
  doi       = {10.1109/SP46215.2023.10179435},
  file      = {:zhou2023sok - SoK_ Decentralized Finance (DeFi) Attacks.pdf:PDF},
  keywords  = {privacy;smart contracts;ecosystems;finance;decentralized applications;frequency measurement;security},
  url       = {https://doi.ieeecomputersociety.org/10.1109/SP46215.2023.10179435},
}

\appendix

\section{Missing Proofs}
\label[appendix]{app:missing_proofs}

\SimpleJointBound*
\begin{proof}
\[
\begin{split}
& u_{joint}(\mathbf{b}) = \sum_{i=1}^n \left( v_i \alloc(b_i, \mathbf{b}_{-i}) - \burn(b_i, \mathbf{b}_{-i})\right) - \sum_{i=n+1}^{n'} \burn(b_i, \mathbf{b}_{-i}) %\stackrel{\text{\cref{eq:BB_positive}}}{\leq} \\
\stackrel{\text{\cref{eq:BB}}}{\leq} \\
& \sum_{i=1}^n v_i \alloc(b_i, \mathbf{b}_{-i}) \leq \max \mathbf{v} \sum_{i=1}^n \alloc(b_i, \mathbf{b}_{-i}) \stackrel{\text{\cref{eq:feasibility}}}{\leq} \max \mathbf{v}.
\end{split}
\]
\end{proof}

\CoalitionJointBound*

\begin{proof}
The expression for the joint utility under the off-chain agreement $\Omega_{\mathbf{v}}$ takes a similar form to that of the utility of the miner and the $c$ colluding bidders, but also includes the utility of all other bidders.
By the individual rationality condition of \cref{eq:IR_util}, it therefore must be at least as large. 

\[
\begin{split}
u_{joint}(\Omega_{\mathbf{v}} ; \mathbf{v})
&
\stackrel{\text{\cref{eq:joint_util}}}{=}
u_{miner}(\Omega_{\mathbf{v}} ; \mathbf{v}) + \sum_{i=1}^n u_i(\Omega_{\mathbf{v}_i}, \Omega_{\mathbf{v}_{-i}} ; v_i)
\\&
=
u_{miner}(\Omega_{\mathbf{v}} ; \mathbf{v}) + \sum_{i \in C} u_i(\Omega_{\mathbf{v}_i}, \Omega_{\mathbf{v}_{-i}} ; v_i) + \sum_{i \not \in C} u_i(\Omega_{\mathbf{v}_i}, \Omega_{\mathbf{v}_{-i}} ; v_i)
\\&
\stackrel{\text{\cref{eq:IR_util}}}{\geq}
u_{miner}(\Omega_{\mathbf{v}} ; \mathbf{v}) + \sum_{i \in C} u_i(\Omega_{\mathbf{v}_i}, \Omega_{\mathbf{v}_{-i}} ; v_i).
\end{split}
\]
\end{proof}

\OCAJoint*
\begin{proof}
(OCA-proofness $\implies$ \cref{eq:general_oca})

If \gls{OCA}-proofness holds for any coalition $C$, then it also hold for the grand coalition of all $n$ users.
For any $\mathbf{v}$ and corresponding $\Omega_{\mathbf{v}}$ choose $C = [n]$, then the condition of \cref{eq:oca_cond} takes the form:

$$u_{joint}(\mathbf{v} ; \mathbf{v}) \geq u_{miner}(\Omega_{\mathbf{v}} ; \mathbf{v}) + \sum_{i \in C} u_i(\Omega_{\mathbf{v}_i}, \Omega_{\mathbf{v}_{-i}} ; v_i) = u_{miner}(\Omega_{\mathbf{v}} ; \mathbf{v}) + \sum_{i=1}^n u_i(\Omega_{\mathbf{v}_i}, \Omega_{\mathbf{v}_{-i} } ; v_i)\stackrel{\text{\cref{eq:joint_util}}}{=} u_{joint}(\Omega_{\mathbf{v}} ; \mathbf{v}).$$

(\cref{eq:general_oca} $\implies$ OCA-proofness)

This follows as an immediate corollary of Claim~\ref{clm:coalition_joint_bound}: For any $\mathbf{v}$, corresponding $\Omega_{\mathbf{v}}$ and coalition $C$, it then holds that:

$$u_{joint}(\mathbf{v} ; \mathbf{v}) \geq u_{joint}(\Omega_{\mathbf{v}} ; \mathbf{v}) \stackrel{\text{Claim~\ref{clm:coalition_joint_bound}}}{\geq} u_{miner}(\Omega_{\mathbf{v}} ; \mathbf{v}) + \sum_{i \in C} u_i(\Omega_{\mathbf{v}_i}, \Omega_{\mathbf{v}_{-i}} ; v_i),$$

but this is exactly the condition of \cref{eq:oca_cond}.

\SingleBidderRevenue*
\begin{proof}
By Myerson's lemma (given in~\cref{fct:MyersonLemma}), the allocation rule is monotone and the payment is uniquely determined by it. %In the general case, the winning coalition (which is comprised of the winner and the miner) joint utility is given by $u_{winner}(\mathbf{b}) + u_{miner}(\mathbf{b}) = E[\sum_j v_j 1[a(\mathbf{b}) = j]] - \bar{p}(\mathbf{b}) + \bar{p}(\mathbf{b}) - \bar{\beta}(\mathbf{b}) =  E[\sum_j v_j 1[a(\mathbf{b}) = j]] - \bar{\beta}(\mathbf{b})$ \yotam{Introduce the bar notation for expected payments and burn}. 

In the single bidder case, for a bidder with value $v_1$ and bid $b_1$ (i.e., $\mathbf{v} = (v_1)$, with an off-chain agreement of the form $\Omega_{\mathbf{v}} = (b_1)$, no fake bids or omitting by the miner, and the $1$-bidder coalition $C = \{1\}$), the OCA-proofness condition of \cref{eq:oca_cond} takes the form:

    $$v_1 \cdot \alloc(v_1) - \burn(v_1) = u_{joint}(v_1 ;  v_1) \geq u_{miner}(b_1 ; v_1) + u_1(b_1 ; v_1) = v_1 \cdot \alloc(b_1) - \burn(b_1).$$

Notice that this is exactly the incentive-compatibility condition for the single bidder, if we replace payment with burn.
Thus, we can follow Myerson's arguments to determine that the burn rule is uniquely determined by the allocation due to the OCA-proofness condition, in exactly the same way as the payment rule is determined due to DSIC. %Thus, since they must be equal in expectation for every bid, and the burn rule is bounded by the payment rule by burn balance (\cref{eq:BB}), we deduce that the payment and burn rule are equal. 

%From the critical bid property, $r$ is also the payment of the winning bidder.
%If $r = 0$ then the payment is always $0$, and nothing is burnt.
%If $r > 0$ and there is some $b^* \geq r$ so that $\burn(b^*) < r$, then given a truthful bidder with value $v \in (\burn(b^*), r)$, the item is not allocated.
%But, this bidder and the miner can form a beneficial 1-\gls{OCA} where the bidder bids $b^*$.
%Under this \gls{OCA}, the bidder receives a utility of $v - \pay(b^*)$, and the miner earns $\pay(b^*) - \burn(b^*)$.
%In total, the \gls{OCA} yields a joint utility of $v - \pay(b^*) + \pay(b^*) - \burn(b^*) = v - \burn(b^*) > 0$.
\end{proof}

\end{proof}

\DSICOCA*
\begin{proof}
($\implies$ All DSIC+$1$-OCA-proof mechanisms must be of the form specified)

Given the characterization of \cref{lem:general_oca_char}, the only allocation and burn rules possible are as described. However, since we assume the DSIC property, the payment is uniquely determined by the allocation, and since the allocations all fall within the class of second-price auctions with reserve $r$ allocations, then so do the corresponding payments. 

($\implies$ All $2$-nd price auctions with reserve $r$ and constant burn $r$ are DSIC+$1$-OCA-proof)

We know that the $2$-nd price auction with reserve $r$ is \gls{DSIC} \cite{myerson1981optimal}, and we furthermore know that the $1$-OCA-proofness property is satisfied due to the characterization of \cref{lem:general_oca_char}.

\end{proof}

\MMICOCA*
\begin{proof}
($\implies$ All MMIC+$1$-OCA-proof mechanisms must be of the form specified)

Given the characterization of \cref{lem:general_oca_char}, the only allocation and burn rules possible are as described. Regarding payment, consider bidder $i$ with valuation $v_i$, and two sets of bids of other bidders, $\mathbf{v_{-i}}, \mathbf{b_{-i}}$ so that $\tilde{\alloc}(v_i, \mathbf{v}_{-i}) = \tilde{\alloc}(v_i, \mathbf{b}_{-i}) = i$. Then, it must hold that $\pay(v_i, \mathbf{v}_{-i}) - \burn(v_i, \mathbf{v}_{-i}) = \pay(v_i, \mathbf{b}_{-i}) - \burn(v_i, \mathbf{b}_{-i}) $, or otherwise the miner would omit all bids $\mathbf{v}_{-i}$ and add fake bids $\mathbf{b}_{-i}$ (or vice versa). Since the burn is a constant $r$, this implies that $\pay(v_i, \mathbf{v}_{-i}) = \pay(v_i, \mathbf{b}_{-i})$. I.e., the payment is uniquely determined by the winner's bid $v_i$ (and must be higher than the burn $r$ by \cref{eq:BB}). We can thus write $\tilde{\pay}(\mathbf{b}) = f(\max \mathbf{b})$ for some function $f(v) \geq r$. 

We now show that $f(v)$ must be monotone whenever $v \geq r$. If not, then there are such $v' > v \geq r$ so that $f(v') < f(v)$. Consider two bidders who bid $v', v$. Without miner intervention, the miner would receive payment $f(v') - r$. If the miner omits the bid $v'$, it receives payment $f(v) - r > f(v') - r$, and so this contradicts MMIC.

($\implies$ All auctions of the form specified are MMIC+$1$-OCA-proof)

We have $1$-OCA-proofness by the characterization of \cref{lem:general_oca_char}. Regarding MMIC, the miner cannot gain by adding fake bids since the burn is constant and the payment only depends on the winning bid: If the added fake bids do not change the winning bid, then the miner revenue does not change, and if they do, since the item is allocated to the highest bidder, this means that one of the fake bids wins, and the miner revenue is then non-positive. The miner cannot gain by omitting bids since the burn is constant and the payment is monotone in the highest bid, and the highest bid can only decrease by omitting bids. 

\end{proof}

\homogeneousPayments*

\begin{proof}
For a DSIC mechanism we can write, by Fact~\ref{fct:MyersonLemma}:
\[
\begin{split}
p(\alpha b_i, \alpha \mathbf{b}_{-i})
&
=
a(\alpha b_i, \alpha \mathbf{b}_{-i}) \cdot \alpha b_i - \int_0^{\alpha b_i} a(t, \alpha \mathbf{b}_{-i}) dt
\\&
=
\alpha \cdot a(b_i, \mathbf{b}_{-i}) \cdot b_i - \alpha \int_0^{b_i} a(\alpha \cdot z, \alpha \mathbf{b}_{-i}) dz
\\&
=
\alpha \left( a(b_i, \mathbf{b}_{-i}) \cdot b_i - \int_0^{b_i} a(z, \mathbf{b}_{-i}) dz \right)
\\&
=
\alpha p(b_i, \mathbf{b}_{-i}).
\end{split}
\]
\end{proof}

\CTPA*
\begin{proof}
Let $\alpha$ be the constant allocation probability of the CTPA mechanism. Since in the single bidder case the allocation probability is $\alpha$ regardless of the bid, the burn must be $0$ as for any bid $b_1$:
$$\burn(b_1) \stackrel{\text{Claim~\ref{lem:SingleBidderRevenue}}}{=} \pay(b_1) \stackrel{\text{Fact.~\ref{fct:MyersonLemma}}}{=} \alloc(b_1) \cdot b_1 - \int_0^{b_1} \alloc(t) dt = \alpha \cdot b_1 - \int_0^{b_1} \alpha dt = 0.$$
Since the item's allocation probability is always $\alpha$, we can extend the argument of Claim~\ref{clm:simple_joint_bound} to show:
\begin{equation}
\label{eq:alpha_joint_bound}
\begin{split}& u_{joint}(\mathbf{b} ; \mathbf{v}) = \sum_{i=1}^n \left( v_i \alloc(b_i, \mathbf{b}_{-i}) - \burn(b_i, \mathbf{b}_{-i})\right) - \sum_{i=n+1}^{n'} \burn(b_i, \mathbf{b}_{-i}) %\stackrel{\text{\cref{eq:BB_positive}}}{\leq} \\
\stackrel{\text{\cref{eq:BB}}}{\leq} \\
& \sum_{i=1}^n v_i \alloc(b_i, \mathbf{b}_{-i}) \leq \max \mathbf{v} \sum_{i=1}^n \alloc(b_i, \mathbf{b}_{-i}) \leq \alpha \cdot \max \mathbf{v} .
\end{split}
\end{equation}
Thus, for any vector of valuations $\mathbf{v}$, 
$$u_{joint}(\mathbf{v} ; \mathbf{v}) \stackrel{\text{\cref{eq:alpha_joint_bound}}}{\leq} \alpha \cdot \max \mathbf{v} = u_{joint}(\max \mathbf{v} ; \mathbf{v}). $$
If the inequality is strict, this violates the OCA-proofness condition of \cref{lem:oca_joint}.
Thus, we must always have $$u_{joint}(\mathbf{v} ; \mathbf{v}) = \alpha \max \mathbf{v},$$
i.e., the item must be allocated with the entire constant probability $\alpha$ to the highest bidder. However, this is a form of a second-price auction (but with limited allocation probability), and because of DSIC this means the payment must increase with the second highest bid, which contradicts MMIC. 
\end{proof}

\section{Non-anonymous Deterministic Mechanisms}
\label{app:nonAnon}

We quickly revisit our auction model to allow for non-anonymous mechanisms.
We now have a fixed set of $N$ possible named bidders, out of which some subset $I\subseteq N$ actually participates in the auction. The allocation, payment, and burn rule $\alloc,\pay,\burn$ now operate on the domain $R^I$ and may incorporate the identities of the bidders into the outcome. The DSIC, MMIC, OCA-proofness, and SCP notions remain the same, with the important emphasis that we still allow for the miner to issue fake bids. This is justified by the fact that an agent may hold multiple bidder identities, and so in the analysis perspective it is possible that the miner is in control of virtually any of the bidder identities. 

Importantly, \cref{lem:SingleBidderRevenue} ($0$ revenue in the single bidder case), Fact~\ref{fct:MyersonLemma} (Myerson's Lemma), \cref{thm:zero_revenue} ($0$ revenue in the general case) transfer to the non-anonymous case and we subsequently use them. 

We now characterize the space of DSIC+MMIC+$1$-OCA-proof non-anonymous mechanisms. 
We do so by proving progressive refinements of the space of possible mechanisms. %Thus, we allow the auction functions $(\alloc, \pay, \burn)$ to possibly incorporate the unique ID of the bidder into their operation. For this reason, we consider them not only as functions from $R_+^n$ (where any $n$ bidders with the same bid values are treated in the same way), but rather as functions from $R^I$, where $I \subseteq N$ is some specific set of bidders out of all possible bidders $N$. 

\begin{lemma}
\label{lem:non_anonymous}
All \gls{DSIC}+\gls{MMIC}+$1$-\gls{OCA}-proof mechanisms satisfy the following properties:

\begin{enumerate}[leftmargin=*]
\item Each bidder $i$ has a personalized burn rate $r_i$ so that whenever bidder $i$ wins, exactly $r_i$ must be paid and burned. 

\item There is at most one unique bidder $i^*$ with $r_{i^*} \neq \infty$. 

\item For any given set of bidders $I$, the item is allocated only if $i^* \in I$, and the allocation rule is of the form $\tilde{a}(\mathbf{b}) = \begin{cases} i^* & b_{i^*} \geq r_{i^*}, \\
\emptyset & Otherwise.\end{cases}$
\end{enumerate}
\end{lemma}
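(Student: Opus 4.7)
The plan is to derive the three properties sequentially, taking each bidder's single-bidder scenario as the anchor.

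For property 1, I would apply the single-bidder characterization of \cref{lem:oca_single_bidder} independently to each identity $i \in N$; its proof carries over to the non-anonymous setting because it only uses single-bidder to single-bidder OCA deviations. This produces a threshold $r_i \in R_+ \cup \{\infty\}$ such that, when $I = \{i\}$, the item is allocated iff $b_i \geq r_i$ with burn exactly $r_i$, and by the non-anonymous extension of \cref{thm:zero_revenue} the payment equals $r_i$ as well. To lift this to multi-bidder scenarios, I would apply OCA in both directions: given a scenario in which $i$ wins with burn $R$, the coalition of the miner and $i$ omitting all other bids reduces to the single-bidder case, forcing $v_i - R \geq v_i - r_i$ and hence $R \leq r_i$; conversely, starting from the single-bidder scenario for $i$ as the truthful state and having the miner add fakes to recreate the multi-bidder configuration forces $R \geq r_i$. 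Together these give $R = r_i$, and payment also equals $r_i$ by zero revenue.

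For property 2, I would suppose two distinct identities $i \neq j$ both have finite $r_i, r_j$ and derive a contradiction from the interplay of OCA, property 1, and Myerson (\cref{fct:MyersonLemma}). Running the OCA argument above at the level of bid vectors (treating any chosen bids as truthful valuations) shows that, whenever some $b_k - r_k > 0$, the mechanism must allocate to the bidder with the highest $b_k - r_k$. Fixing any $b_i > r_i$, this pins down $j$'s Myerson critical bid to $\theta_j(b_i) = b_i - r_i + r_j > r_j$. By Myerson's lemma, $j$'s payment upon winning equals $\theta_j(b_i)$, but by property 1 it must equal $r_j$---a contradiction for every $b_i > r_i$. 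Hence at most one $i^* \in N$ can have $r_{i^*} < \infty$.

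For property 3, only $i^*$ can ever win, since any winning $j \neq i^*$ would by property 1 require burn $r_j = \infty$, violating IR; in particular, if $i^* \notin I$, no item is allocated. If $i^* \in I$ with $b_{i^*} \geq r_{i^*}$, then not allocating gives joint utility $0$, while the OCA deviation in which the coalition of the miner and $i^*$ omits all other bids reproduces the single-bidder outcome with joint utility $v_{i^*} - r_{i^*} \geq 0$, forcing $i^*$ to win; if $b_{i^*} < r_{i^*}$, $i^*$ still cannot win since the required burn $r_{i^*}$ would exceed $b_{i^*}$ and violate IR, yielding the stated allocation form. The main obstacle is property 2: one must simultaneously pin down the OCA-induced winner rule on bid vectors, read off the Myerson threshold from that rule, and match it against the fixed burn $r_j$ given by property 1, so every link in the chain must be airtight for the contradiction to land.
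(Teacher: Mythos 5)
Your proposal is correct and follows essentially the same route as the paper: anchor each identity's reserve $r_i$ via the single-bidder characterization and zero revenue, lift the burn to multi-bidder profiles with OCA deviations in both directions (omitting others vs.\ adding fakes), and derive the allocation form from IR and OCA. Your property-2 argument is the mirror image of the paper's --- you let OCA pin the allocation to the highest $b_k - r_k$ and contradict the Myerson critical-bid payment with the fixed $r_j$, whereas the paper uses the critical bid $r_i$ to force $i$ to win with negligible surplus and then contradicts OCA --- but both exploit the same tension with the same ingredients.
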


\begin{proof}
We give a proof sketch, as the proof resembles the one for the anonymous case.

We prove each property separately.
\begin{enumerate}[leftmargin=*]
\item By \cref{lem:SingleBidderRevenue} and Fact~\ref{fct:MyersonLemma}, each bidder $i \in N$ has some $r_i$ so that in case $i$ is the single bidder, $\alloc(v_i) = \mathds{1}_{v_i \geq r_i}, \tilde{\pay}(v_i) = \tilde{\burn}(v_i) = r_i$. Thus, it must hold that whenever $i\in I$ and for any $v_i$ so that $\tilde{\alloc}(v_i, \mathbf{v}_{-i}) = i$, it also holds that $\tilde{\burn}(v_i, \mathbf{v}_{-i}) = r_i$. Since there is $0$ revenue, it also holds that $\tilde{\pay}(v_i, \mathbf{v}_{-i}) = r_i$.

\item Assume towards contradiction that there are two different bidders $i,j$ with $r_i \neq \infty, r_j \neq \infty$. Consider $I = \{i, j\}$. For $\mathbf{v}$ so that $v_i = r_i + 2, v_j = r_j + 1$, it must hold that $\tilde{\alloc}(\mathbf{v}) = i$, otherwise there is a $1$-OCA where the miner omits $v_j$. But since we already concluded that bidder $i$ pays $r_i$ in this case, by the critical bid property, for $\mathbf{v}'$ where $v'_i = r_i, v'_j = r_j + 1$, it also holds that $\tilde{\alloc}(\mathbf{v}) = i$. But this means there is a $1$-OCA where the miner omits $v'_i$. 

\item No other bidder besides $i^*$ can be allocated because all have infinite burn rates if allocated. Thus, if $i^* \not \in I$, then no bidder is ever allocated. If $i^* \in I$, and $v_{i^*} > r_{i^*}$, assume towards contradiction that $\tilde{\alloc}(\mathbf{v}) \neq i^*$, then it must be $\tilde{\alloc}(\mathbf{v}) = \emptyset$. But then there is a $1$-OCA where the miner omits all other bids and the joint utility is $v_{i^*} - r_{i^*} > 0$. If  $v_{i^*} < r_{i^*}$, assume towards contradiction that $\tilde{\alloc}(\mathbf{v}) = i^*$. Since the burn must be at most $v_{i^*}$, there is a $1$-OCA from the single $i^*$ bidder case where the miner adds the fake bids $\mathbf{v}_{-i}$ to reduce the burn. 
\end{enumerate}

\end{proof}

It is then straightforward to show that the class of mechanisms we identified cannot be further refined, and in fact satisfies a stronger property:

\begin{lemma}
The class of mechanisms characterized in \cref{lem:non_anonymous} is \gls{DSIC}+MMIC+SCP. 
\end{lemma}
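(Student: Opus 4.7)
The plan is to verify each property — \gls{DSIC}, \gls{MMIC}, and \gls{SCP} — separately, exploiting the extremely restricted structure of the class from \cref{lem:non_anonymous}: only the distinguished bidder $i^*$ can ever be allocated the item, its payment (when allocated) is the fixed amount $r_{i^*}$, that payment is entirely burned, and all other bidders neither win nor pay. A direct consequence is that the miner's revenue is identically zero on every honest input, which will do most of the work.

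For \gls{DSIC}, I would argue that from $i^*$'s viewpoint the mechanism acts as a posted-price mechanism at price $r_{i^*}$: truthful bidding yields utility $\max(0, v_{i^*} - r_{i^*}) \geq 0$, which weakly dominates both under-bidding (risking the loss of a profitable allocation) and over-bidding (risking a payment of $r_{i^*} > v_{i^*}$). For any $j \neq i^*$, the allocation, payment, and burn are independent of $b_j$ by the characterization, so truth-telling is trivially a best response.

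For \gls{MMIC}, I would observe that $u_{miner}(\mathbf{v}; \mathbf{v}) = 0$ on every honest input, since whenever $i^*$ is allocated, $\tilde{\pay}(\mathbf{v}) = \tilde{\burn}(\mathbf{v}) = r_{i^*}$, and otherwise no payment or burn occurs. For any miner strategy $\mu(\mathbf{v})$, either no winner is produced (zero revenue), or $i^*$ wins with payment fully burned (still zero revenue); if the miner tries to impersonate $i^*$ via a fake bid — possible only when $i^* \notin I$ — the fake bidder's burn becomes a direct loss to the miner, making the deviation strictly worse. Either way $u_{miner}(\mu(\mathbf{v}); \mathbf{v}) \leq 0$, so \cref{eq:mmic_cond} holds.

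For \gls{SCP}, I would split on whether the coalition $C$ contains $i^*$. If $i^* \in C$, then by \cref{clm:simple_joint_bound} any side-contract $\Gamma_{\mathbf{v}}$ gives coalition aggregate at most $\max(0, v_{i^*} - r_{i^*})$, and honest play attains this exactly (miner revenue zero, bidder $i^*$ collects $\max(0, v_{i^*} - r_{i^*})$). If $i^* \notin C$, honest play yields coalition aggregate of zero (zero miner revenue plus zero for non-winning colluders), and under any $\Gamma_{\mathbf{v}}$ the miner's revenue remains zero and no colluder can be allocated, since only $i^*$ can win. The subtle case is when $i^* \notin I$ and the miner injects a fake bid impersonating $i^*$: this forces the miner to absorb the burn $r_{i^*}$, strictly decreasing the coalition's aggregate. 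I expect this case-split, together with the careful treatment of fake $i^*$-bids, to be the only delicate point in what is otherwise a mechanical verification driven by the zero-revenue and posted-burn structure.
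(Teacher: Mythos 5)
Your proposal is correct and follows essentially the same route as the paper's proof: DSIC via the posted-price structure for $i^*$, MMIC via the identically-zero miner revenue (payment equal to burn), and SCP via the case split on whether $i^* \in C$. One minor imprecision: \cref{clm:simple_joint_bound} only bounds the joint utility by $\max \mathbf{v}$, not by $\max(0, v_{i^*} - r_{i^*})$; the sharper bound you invoke follows instead directly from the characterization in \cref{lem:non_anonymous} (only $i^*$ can ever win, and exactly $r_{i^*}$ is burned whenever it does), so this is immediate and does not affect correctness.
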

\begin{proof}
    Consider a mechanism contained in the class.
    It is \gls{DSIC}, as per the definition of the class in \cref{lem:non_anonymous}, the mechanism satisfies the monotone allocation and critical bid payment conditions of Myerson's Lemma~\cref{fct:MyersonLemma}.
    The MMIC property is satisfied as the payment and burn rules are equal, implying that the miner always receives $0$ revenue, thereby preventing the miner from being able to unilaterally increase its revenue without potentially colluding with users.

    Finally, note that no beneficial collusion exists: the individual burn rate is fixed for each user, and cannot change as dependent on the bids of the other transactions included in the block, or the identities of the transactions' creators.

SCP is satisfied since if the coalition $C$ does not contain $i^*$, the aggregate utility of the coalition is $0$ with truthful reports, while non-positive under any manipulation. If the coalition does contain $i^*$, the SCP condition becomes the DSIC condition for agent $i^*$, which is satisfied as the mechanism is DSIC. 
    
\end{proof}

\end{document}